\definecolor{purple}{rgb}{0.5, 0.0, 0.5}
\definecolor{dark_green}{rgb}{0.0, 0.5, 0.0}
\definecolor{dyellow}{rgb}{0.99, 0.93, 0.0}
\definecolor{amaranth}{rgb}{0.9, 0.17, 0.31}
\newcommand{\white}{\color{white}}
\newcommand{\N}{\mathbb{N}}
\newcommand{\Z}{\mathbb{Z}}
\newcommand{\Q}{\mathbb{Q}}
\newcommand{\R}{\mathbb{R}}
\newcommand{\E}{\mathbb{E}}
\newcommand{\rpm}{\raisebox{.2ex}{$\scriptstyle\pm$}}
\newcommand{\D}{\mathcal{D}}
\newcommand{\I}{\mathcal{I}}
\newcommand{\J}{\mathcal{J}}
\newcommand{\K}{\mathcal{K}}
\newcommand{\Kbar}{\bar{\mathcal{K}}}
\newcommand{\Nn}{\mathcal{N}}
\newcommand{\lag}{\mathscr{L}}
\newcommand{\Scal}{\mathcal{S}}
\newcommand{\Tcal}{\mathcal{T}}
\newcommand{\Uu}{\mathcal{U}}
\newcommand{\Sbar}{\bar{\mathcal{S}}}
\newcommand{\Tbar}{\bar{\mathcal{T}}}
\newcommand{\ab}{\bold{a}}
\newcommand{\bb}{\bold{b}}
\newcommand{\dt}{\tilde{d}}
\newcommand{\eb}{\bold{e}}
\newcommand{\gb}{\bold{g}}
\newcommand{\qbar}{\bar{q}}
\newcommand{\Ub}{\bold{U}}
\newcommand{\Ubt}{\tilde{\Ub}}
\newcommand{\ub}{\bold{u}}
\newcommand{\Vb}{\bold{V}}
\newcommand{\vb}{\bold{v}}
\newcommand{\xb}{\bold{x}}
\newcommand{\xbt}{\tilde{\xb}}
\newcommand{\xbh}{\hat{\xb}}
\newcommand{\yb}{\bold{y}}
\newcommand{\ybt}{\tilde{\yb}}
\newcommand{\Ab}{\bold{A}}
\newcommand{\Bb}{\bold{B}}
\newcommand{\Cb}{\bold{C}}
\newcommand{\Db}{\bold{D}}
\newcommand{\Dbwt}{\widetilde{\Db}}
\newcommand{\Eb}{\bold{E}}
\newcommand{\Ebwt}{\widetilde{\bold{E}}}
\newcommand{\Gb}{\bold{G}}
\newcommand{\Ib}{\bold{I}}
\newcommand{\Hbh}{\hat{\bold{H}}}
\newcommand{\Mb}{\bold{M}}
\newcommand{\Mbh}{\hat{\Mb}}
\newcommand{\Qb}{\bold{Q}}
\newcommand{\Qt}{\tilde{Q}}
\newcommand{\Sb}{\bold{S}}
\newcommand{\Sigb}{\bold{\Sigma}}
\newcommand{\Sbwt}{\widetilde{\Sb}}
\newcommand{\Abh}{\hat{\Ab}}
\newcommand{\Abwh}{\widehat{\Ab}}
\newcommand{\Sbw}{\Sbwt_{\wb}}
\newcommand{\wb}{\bold{w}}
\newcommand{\Wb}{\bold{W}}
\newcommand{\Wbb}{\bar{\bold{W}}}
\newcommand{\Zb}{\bold{Z}}
\newcommand{\Xb}{\bold{X}}
\newcommand{\Yb}{\bold{Y}}
\newcommand{\At}{\tilde{A}}
\newcommand{\Abt}{\tilde{\Ab}}
\newcommand{\Bbh}{\hat{\Bb}}
\newcommand{\bt}{\tilde{b}}
\newcommand{\bbt}{\tilde{\bb}}
\newcommand{\bbwh}{\widehat{\bb}}
\newcommand{\Dt}{\tilde{\mathcal{D}}}
\newcommand{\Rh}{\hat{R}}
\newcommand{\Rt}{\tilde{R}}
\newcommand{\rh}{\hat{r}}
\newcommand{\rt}{\tilde{r}}
\newcommand{\gh}{\hat{g}}
\newcommand{\Hh}{\hat{H}}
\newcommand{\Thb}{\bold{\Theta}}
\newcommand{\Omb}{\bold{\Omega}}
\newcommand{\Ombb}{\bar{\bold{\Omega}}}
\newcommand{\Ombwt}{\widetilde{\Omb}}
\newcommand{\Ombw}{\Omb_{\wb}}
\newcommand{\Pib}{\bar{\Pi}}
\newcommand{\Pis}{\Pi^{\star}}
\newcommand{\Pit}{\tilde{\Pi}}
\newcommand{\PsiB}{\bold{\Psi}}
\newcommand{\psiv}{\vec{\psi}}
\newcommand{\wbt}{\tilde{\wb}}
\newcommand{\Wbt}{\tilde{\Wb}}
\newcommand{\Ft}{\tilde{F}}
\newcommand{\ellt}{\tilde{\ell}}
\newcommand{\rank}{\mathrm{rank}}
\newcommand{\lcm}{\mathrm{lcm}}
\newcommand{\diag}{\mathrm{diag}}
\newcommand{\err}{\mathrm{err}}
\newcommand{\image}{\mathrm{im}}
\newcommand{\SVD}{\mathrm{SVD}}
\newcommand{\QR}{\mathrm{QR}}
\newcommand{\qvec}[1]{\textbf{\textit{#1}}}
\DeclarePairedDelimiter\floor{\lfloor}{\rfloor}
\DeclarePairedDelimiter\bceil{\big\lceil}{\big\rceil}
\DeclarePairedDelimiter\bfloor{\big\lfloor}{\big\rfloor}
\newtheorem{Thm}{Theorem}
\newtheorem{Cor}{Corollary}
\newtheorem{Prop}{Proposition}
\newtheorem{Lemma}{Lemma}
\newtheorem{Rmk}{Remark}
\DeclareMathOperator*{\argmin}{arg\,min}
\begin{document}
\title{Gradient Coding with Iterative Block Leverage Score Sampling}

\author{$\textbf{Neophytos Charalambides}^{\mu}$, $\textbf{Mert Pilanci}^{\sigma}$, \textbf{and} $\textbf{Alfred O. Hero III}^{\mu}$\\
$\text{\white.}^{\mu}$EECS Department University of Michigan $\text{\white.}^{\sigma}$EE Department Stanford University\\
  Email: \texttt{neochara@umich.edu}, \texttt{pilanci@stanford.edu}, \texttt{hero@umich.edu}
\vspace{-4mm}
}

\maketitle

\begin{abstract}
Gradient coding is a method for mitigating straggling servers in a centralized computing network that uses erasure-coding techniques to distributively carry out first-order optimization methods. Randomized numerical linear algebra uses randomization to develop improved algorithms for large-scale linear algebra computations. In this paper, we propose a method for distributed optimization that combines gradient coding and randomized numerical linear algebra. The proposed method uses a randomized $\ell_2$-subspace embedding and a gradient coding technique to distribute blocks of data to the computational nodes of a centralized network, and at each iteration the central server only requires a small number of computations to obtain the steepest descent update. The novelty of our approach is that the data is replicated according to importance scores, called block leverage scores, in contrast to most gradient coding approaches that uniformly replicate the data blocks. Furthermore, we do not require a decoding step at each iteration, avoiding a bottleneck in previous gradient coding schemes. We show that our approach results in a valid $\ell_2$-subspace embedding, and that our resulting approximation converges to the optimal solution.
\end{abstract}

\begin{IEEEkeywords}
Low-Rank Approximations, Least-Squares Regression, Randomized Algorithms, Randomized Numerical Linear Algebra, Coded Computing, Stragglers, Erasure-Coding, Replication-Coding.
\end{IEEEkeywords}

\section{Introduction}

In this work we bridge two disjoint areas, to accelerate first-order methods such as steepest descent distributively, while focusing on linear regression. Specifically, we propose a framework in which Randomized Numerical Linear Algebra (RandNLA) sampling algorithms can be used to devise Coded Computing (CC) schemes. We focus on the task of $\ell_2$-subspace embedding through an importance sampling procedure known as leverage score sampling; which scores measure the relative importance of each data point within a given dataset, and distributed gradient computation in the presence of stragglers; which is referred to as gradient coding (GC).

Traditional numerical linear algebra algorithms are deterministic. For example, inverting a full-rank matrix $\Mb\in\R^{N\times N}$ requires $O(N^3)$ arithmetic operations by performing Gaussian elimination, as does naive matrix multiplication. The fastest known algorithm which multiplies two $N\times N$ matrices, requires $O(N^{\omega})$ operations, for $\omega<2.372$ \cite{PV20,WXXZ24}. Other important problems are computing the determinant, singular and eigenvalue decompositions, $\SVD$, $\QR$ and Cholesky factorizations of large matrices.

Although these deterministic algorithms run in polynomial time and are numerically stable, their exponents make them prohibitive for many applications in scientific computing and machine learning, when $N$ is in the order of millions or billions \cite{MLG08,Mah12}. To circumvent this issue, one can perform these algorithms on a significantly smaller approximation. Specifically, for a matrix $\Sb\in\R^{r\times N}$ with $r\ll N$, we apply the deterministic algorithm on the surrogate $\Mbh=\Sb\Mb\in\R^{r\times d}$ for $\Mb\in\R^{N\times d}$. The matrix $\Sb$ is referred to as a dimension-reduction or a sketching matrix, and $\Mbh$ is a sketch of $\Mb$, which contains as much information about $\Mb$ as possible. For instance, when computing the product of $\Ab\in\R^{N\times L}$ and $\Bb\in\R^{N\times M}$, we apply a carefully chosen $\Sb\in\R^{r\times N}$ on each matrix to get
$$ \overbrace{\begin{pmatrix} & & & & & & \\ & & & \Ab^\top & & & \\ & & & & & & \end{pmatrix}}^{L\times N} \overbrace{\begin{pmatrix} & & \\ & & \\ & & \\ & \Bb & \\ & & \\ & & \\ & & \end{pmatrix}}^{N\times M} \approx \overbrace{\begin{pmatrix} & & & & \\ & & \Abh^\top & & \\ & & & & \end{pmatrix}}^{L\times r} \overbrace{\begin{pmatrix} & & \\ & & \\ & \Bbh\ & \\ & & \\ & & \end{pmatrix}}^{r\times M} $$
where $\Abh=\Sb\Ab$ and $\Bbh=\Sb\Bb$. Thus, naive matrix multiplication now requires $O(LMr)$ operations, instead of $O(LMN)$. Such approaches have been motivated by the Johnson-Lindenstrauss lemma \cite{JL84,DG03}, and require low complexity.

A multitude of other problems, such as $k$-means clustering \cite{BZMD14,CEMMP15,Cha20} and computing the $\SVD$ of a matrix \cite{DFKVV99,DFKVV04,DKM06a,DKM06b}, make use of this idea in order to accelerate computing accurate approximate solutions. We refer the reader to the following monographs and comprehensive surveys on the rich development of RandNLA \cite{HMJ11,Mah12,Mah16,MM20,Wan15,Woo14,DM16,KV17,murray2023}, an interdisciplinary field that exploits randomization as a computational resource, to develop improved algorithms for large-scale linear algebra problems.

The problem of $\ell_2$-subspace embedding, a form of spectral approximation of a matrix, has been extensively studied in RandNLA. The main techniques for constructing appropriate $\ell_2$-subspace embedding sketching matrices, are performing a random projection or row-sampling. Well-known choices of $\Sb$ for reducing the effective dimension $N$ to $r$ include: i) \textit{Gaussian projection}; for a matrix $\Thb$ where $\Thb_{ij}\sim\Nn(0,1)$ define $\Sb=\frac{1}{\sqrt{r}}\Thb$, ii) \textit{leverage score sampling}; sample with replacement $r$ rows from the matrix according to its normalized leverage score distribution and rescale them appropriately, iii) \textit{Subsampled Hadamard Transform} (SRHT); apply a Hadamard transform and a random signature matrix to judiciously make the leverage scores approximately uniform and then follow similar steps to the leverage score sampling procedure.

In this paper, we first generalize ii) to appropriately sample submatrices instead of rows to attain an $\ell_2$-subspace embedding guarantee. We refer to such approaches as \textit{block sampling}. Throughout this paper, sampling is done with replacement. Sampling blocks has been explored in block-iterative methods for solving systems of linear equations \cite{Elf80,Gut06,NT14,RN20}. Our motivation in dealing with blocks rather than individual vectors, is the availability to invoke results that can be used to characterize the approximations of distributed computing networks, to speed up first-order methods, as sampling individual rows/columns is prohibitive in real-world environments. This in turn leads to an \textit{iterative sketching} approach, which has been well studied in terms of second-order methods \cite{PW16,PW17,LLDP20}. By iterative sketching, we refer to an iterative algorithm which uses a new sketching matrix $\Sb_{[s]}$ at each iteration, i.e., throughout such a procedure the sketches $\big(\Sbwt_{[1]}\Ab,\Sbwt_{[2]}\Ab,\ldots,\Sbwt_{[n]}\Ab\big)$ are obtained. The scenario where a single sketching matrix $\Sb$ is applied before the iterative process, is referred to as the \textit{sketch-and-solve paradigm} \cite{Sar06}, which also induces bias towards the samples considered through $\Sb$.

Second, we propose a general framework which incorporates our sketching algorithm into a CC approach. This framework accommodates a central class of sketching algorithms, that of importance (block) sampling algorithms (e.g., $CUR$ decomposition \cite{OJXE18}, $CR$-multiplication \cite{CPH20c}). Coded computing is a  novel computing paradigm that utilizes coding theory to effectively inject and leverage data and computation redundancy to mitigate errors and slow or non-responsive servers; known as \textit{stragglers}, among other fundamental bottlenecks, in large-scale distributed computing. In our setting, the straggling effect is due to computations being communicated over erasure channels, whose erasure probability follows a probability distribution which is central to the CC probabilistic model. The seminal work of \cite{LLPPR17} which first introduced CC, focused on exact matrix-vector multiplication and data shuffling. More recent works deal with recovering good approximations, while some have utilized techniques from RandNLA, e.g., \cite{BP23,CPH20a,CPH20c,GWCR18,GKCMR20,JM21,THRD21,RCHV23}. Our results are presented in terms of the standard CC probabilistic model proposed in \cite{LLPPR17}, though they extend to any computing network comprised of a central server and computational nodes, referred to as \textit{servers}.

To mitigate stragglers, we appropriately encode and replicate the data blocks, which leads to accurate CC estimates. In contrast to previous works which simply replicate each computational task or data block the same number of times \cite{ZKJKS08,TLDK17,CMH20,CMH21}, we replicate blocks a number of times that corresponds to their \textit{block leverage scores}. Consequently, this induces a non-uniform sampling distribution in the aforementioned CC model, which is an approximation to the normalized block leverage scores. Furthermore, this eliminates the need for a decoding step, which can be a high complexity bottleneck in CC. A drawback of using RandNLA techniques is that exact computations are not recoverable. For more details on the approximation error and computational complexity of CC, the reader is referred to the monographs \cite{LA20,ng2021}. A CC schematic, focusing on approximate GC, is presented in Figure 1.
\begin{figure*}[t]
  \centering
  \includegraphics[scale=.18]{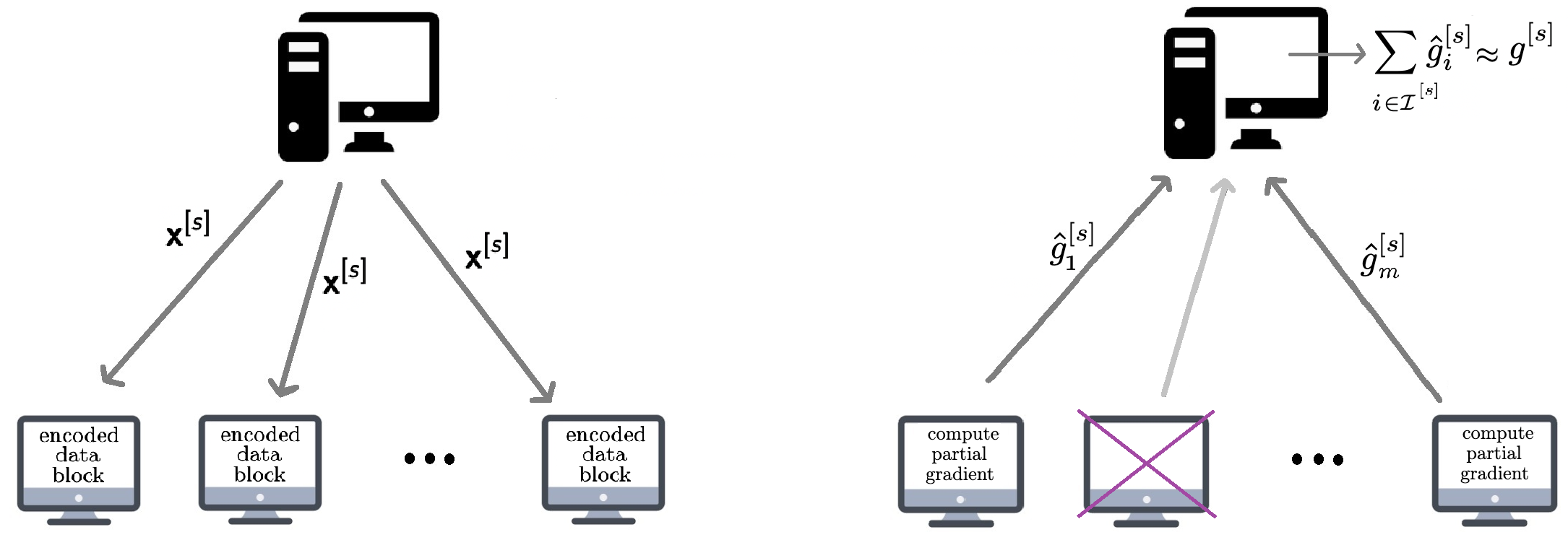}
  \caption{Schematic of our approximate GC scheme, at iteration $s$. Each server has an encoded block of data, of which they compute the gradient once they receive the updated parameters vector $\xb^{[s]}$. The central server then aggregates all the received partial gradients indexed by $\I^{[s]}$, i.e., $\big\{\gh_i^{[s]}\big\}_{i\in\I^{[s]}}$, to approximate the gradient $g^{[s]}$. At each iteration we expect a different index set $\I^{[s]}$, which leads to iterative sketching.}
  \label{CG_schematic}
\end{figure*}

The central idea behind our approach is that non-uniform importance sampling can be emulated, by replicating tasks across the network's computational nodes, who communicate through an erasure channel with the central server. By emulate, we mean that after replicating elements according to their importance sampling distribution and then sampling uniformly at random (without replacement) from the set containing redundant elements, we mimic sampling with replacement from the initial set, according to the importance sampling distribution. The tasks' computation times follow a runtime distribution \cite{LLPPR17}, which, along with a prespecified gradient transmission \textit{ending time} $T$, determine the number of replications per task across the network. Though similar ideas have been proposed \cite{GWCR18,GKCMR20,CPH20a,CPH20c} where sketching has been incorporated into CC, this is the first time redundancy is introduced through RandNLA, as compared to compression, in order to reduce complexity and provide approximation guarantees. Although uniform replication is a powerful technique, it does not capture the non-equal relevance of different rows of the data matrix to the information in the dataset. We capture such information through replication and rescaling according to the block leverage scores. By performing uniform sampling within these blocks, we attain a spectral approximation of the data matrix. One can compare this method to doing an iterative $\ell_2$-subspace embedding sketching of $\Ab$ in standard CC. The shortcoming of this approach, is that a large amount of servers may be required when the underlying sampling distribution is non-uniform. We are able to overcome this obstacle, by approximating the importance sampling distribution as close as possible, under the physical constraint imposed by the network that the number of servers is fixed.

In Appendix \ref{weighting_sec} we discuss how further compression can be attained by introducing weighting, while guaranteeing the same approximation error when first and second order methods are used for linear regression in the sketch-and-solve paradigm (Proposition \ref{prop_wght_lr} and Corollary \ref{unb_est_Sw}). We also show that, in terms of the expected reduction in sketching dimension, that the benefit decreases as the block leverage score distribution approaches the uniform distribution (Theorem \ref{thm_exp_dim}). This further justifies the fact that higher diversity in leverage scores leads to more accurate algorithms \cite{PKB14}.

At each iteration of the proposed descent algorithm, all completed jobs that are received by the central server are aggregated to form the final gradient approximation. Thus, unlike many other CC schemes, the proposed method does not store completed jobs which will not be accounted for. Our method sacrifices accuracy for speed of computation, and the inaccuracy is quantified in terms of the resulting $\ell_2$-subspace embedding (Theorem \ref{subsp_emb_thm_lvg}). Specifically, the computations of the responsive servers will correspond to sampled block computation tasks enabled by leverage score sampling, as summarized in Algorithm \ref{app_GC_alg}. Approximate CC is of interest since it could lead to faster inexact but accurate solutions, at a reduced computational cost \cite{LA20}.

To summarize, our main contributions are:
\begin{enumerate}
  \item propose \textit{block leverage score sketching}, to accommodate block sampling for $\ell_2$-subspace embedding, which makes leverage score sampling techniques suitable for distributed computations,
  \item provide theoretical guarantees for the sketching performance,
  \item show the significance of weighting, when our \textit{weighted} sketching algorithm is used in iterative first and second order methods,
  \item propose the use of \textit{expansion networks}, which use the sampling distribution to determine how to replicate and distribute blocks in the CC framework --- this unifies the disciplines of RandNLA and CC --- where replication and uniform sampling (without a random projection) result in a spectral approximation,
  \item show how expansion networks can be used for approximate distributed \textit{steepest descent} (SD), which approaches the optimal solution using unbiased gradient estimators in a similar manner to \textit{batch stochastic steepest descent} (SSD),
  \item experimental validation of the performance of our algorithm on artificial datasets with non-uniform induced sampling distributions.
\end{enumerate}

The paper is organized as follows. In Subsection \ref{rel_work_CC_subs} we present related works, in terms of CC, and in Subsection \ref{appr_GC_related_work_subs} we give an overview of the approximate GC literature. In Section \ref{not_backgr_sec} we present the notation which will be used, and review necessary background. We also list the notational conventions used in the paper, in Appendix \ref{notation_app}. In Subsection \ref{lvg_sec} we present our sketching algorithm and its theoretical approximation guarantees. In Subsections \ref{exp_netw_sec} and \ref{opt_ind_distr_subsec} we give a framework for which our sketching algorithm and potentially other importance sampling algorithms, can be used to devise CC schemes. This is where we introduce redundancy through RandNLA, which has not been done before. In Subsection \ref{sk_lin_regr} we summarize our GC scheme, and in Subsection \ref{synopsis_subsec} we give a brief synopsis of our main results by tying everything together. A concise description of our distributed procedure can be found in Algorithm \ref{app_GC_alg} of Subsection \ref{sk_lin_regr}. We conclude with experimental evaluations in Section \ref{exper_sec} on simulated data with highly non-uniform underlying sampling distributions, as empirical validation.

\subsection{Related Work on CC}
\label{rel_work_CC_subs}

There are several related works \cite{FD16,JM19,GWCR18,GKCMR20} to the GC approach we present in this paper. The authors of \cite{FD16} propose replicating subtasks according to what they define as the priority of the job, while \cite{GKCMR20} and \cite{JM19} incorporate sketching into CC. It is worth noting that even though the focus of this work is on first order gradient methods, our approach also applies to second-order methods, as well as approximate matrix products through the $CR$-multiplication algorithm \cite{DK01,DFKVV99,CPH20c}. We briefly discuss this in Section \ref{concl_sec}.

The work of \cite{FD16} deals with matrix-vector multiplication. Similar to our work, the authors of \cite{FD16} also replicate the computational tasks a certain number of times; and stop the process at a prespecified instance, but they do not use block leverage scores. Here, the computation $\Ab\xb$ for $\Ab\in\R^{N\times N}$ and $\xb\in\R^N$ is divided into $C$ different tasks, prioritizing the smaller computations. The $m$ servers are split into $c$ groups, and are asked to compute one of the tasks $\yb_j=\big(\sum_{i\in\J_j}\sigma_i\ub_i\vb_i^\top\big)\xb$, for $\Ab=\sum_{l=1}^N\sigma_l\ub_l\vb_l^\top$ the $\SVD$ representation of $\Ab$. Each task $\yb_j$ is computed by the servers of the respective group, and $\N_N=\bigsqcup_{j=1}^s\J_j$ is a disjoint partitioning of the rank-1 outer-products of the $\SVD$ representation. The size of the $j^{th}$ task is $|\J_j|=p_j$, which in our work is determined by the normalized block scores. The scores in our proposed scheme is motivated and justified by RandNLA, in contrast to the selection of the sizes $p_j$ which is not discussed in \cite{FD16}. Furthermore, the scheme of \cite{FD16} requires a separate maximum distance separable code for each job $\yb_j$; thus requiring multiple decodings, while we do not require a decoding step. Another drawback of \cite{FD16} is that an integer program is formulated to determine the optimal ending time, which the authors do not solve explicitly. In contrast, we determine a solution for any desired ending time. Lastly, we note that the $\ell_2$-subspace embedding approximation guarantee of our method, depends on the ending time $T$.

In terms of sketching and RandNLA, the works of \cite{GWCR18} and \cite{JM19} utilize the \textit{Count-Sketch} \cite{CCFC04}; which relies on hashing. In ``CodedSketch'' \cite{JM19}, Count-Sketches are incorporated into the design of a variant of the improved ``Entangled Polynomial Code'' \cite{YMAA20}, to combine approximate matrix multiplication with straggler tolerance. 
The code approximates the submatrix blocks $\{\Cb_{i,j}:(i,j)\in\N_{k_1}\times\N_{k_2}\}$ of the final product matrix $\Cb=\Ab\cdot\Bb$ (matrix $\Cb$ is partitioned $k_1$ times across its rows, and $k_2$ times across its columns), with an accuracy that depends on $\|\Cb_{i',j'}\|_F$ for all $(i',j')\in\N_{k_1}\times\N_{k_2}$. This hinders its practical applications, as it requires accuracy guarantees without oracle knowledge of the outcome of each submatrix of the matrix product $\Cb$. This approach permits each block of $\Cb$ to be approximately recovered, if only a subset of the servers complete their tasks.

In ``OverSketch'' \cite{GWCR18}, redundancy is introduced through additional Count-Sketches, to mitigate the effect of stragglers in distributed matrix multiplication. In particular, the Count-Sketches $\breve{\Ab}=\Ab\Sb$ and $\breve{\Bb}=\Sb^\top\Bb$ of the two inputs $\Ab$ and $\Bb$ are computed, and are partitioned into submatrices of size $b\times b$. The $b\times b$ submatrices of the final product $\Cb=\Ab\cdot\Bb$ are then approximately calculated, by multiplying the corresponding row-block of $\breve{\Ab}$ and column-block of $\breve{\Bb}$, each of which is done by a single server. The ``OverSketch'' idea has also been extended to distributed Newton Sketching \cite{GKCMR20} for convex optimization problems.

\subsection{Related Work on Approximate GC}
\label{appr_GC_related_work_subs}

Many approaches have been proposed for approximate GC, which introduces redundancy in order to accelerate the convergence rate of distributed optimization algorithms \cite{RTTD17,BWE20,KKR19,SH22,CMPH22,CMPH23,CPE17,GW21,CP18,WCP19,WLS19,CHZP18,CPH20a,HYKM19,MSM19}. These approaches can be grouped into two main categories:
\begin{enumerate}[label=(\arabic*)]
  \item \underline{\textit{fixed coefficient decoding}}: at each iteration the same decoding step takes place, which is a linear combination of the received computations with predetermined coefficients \cite{RTTD17,CPE17,KKR19,BWE20,CMPH23,CMPH22,SH22},
  \item \underline{\textit{optimal coefficient decoding}}: at each iteration another least squares optimization problem is solved, in order to determine the optimal coefficients that will be used in the linear combination of the received computations, as part of the decoding \cite{CPE17,WCP19,WLS19,CP18,CPH20a,GW21}.
\end{enumerate}
We note that optimal coefficient decoding does not necessarily lead to optimal convergence, though we use this term to be consistent with the literature. 

The scheme proposed in this paper falls in the category of fixed coefficient decoding, instead of optimal coefficient decoding. Even though we do not consider a decoding step, our approach can be viewed as having a fixed decoding step, which we explain in Subsection \ref{appr_GC_subsec}. The main drawback of optimal coefficient decoding schemes is that a decoding vector has to be determined at every iteration, which is usually done by solving a linear system of equations, putting an additional computational burden on the central server. In Subsection \ref{appr_GC_subsec} we show how our gradient approximations can be quantified in terms of the optimal coefficient decoding error, and also discuss how approaches from both fixed and optimal coefficient decoding can be accelerated by reducing the total number of iterations.

Similarly to our data replication approach is the work of \cite{BWE20}, which replicates each data point $\Ab_{(i)}$ proportionally to its norm $\|\Ab_{(i)}\|_2$, to attain what they call a \textit{pairwise-balanced} distribution scheme. The ``Stochastic Gradient Coding'' algorithm of \cite{BWE20} also converges in expectation to the optimal solution with appropriately selected step-sizes. We use a similar analysis to that of \cite{BWE20} to show that our method achieves convergence rate of $O(1/S)$ of the parameters, where $S$ is the total number of iterations which take place. Furthermore, we also show that our expected regret converges to zero at a rate of $O(1/\sqrt{S}+r/S)$, and, unlike \cite{BWE20}, we quantify the error of our gradient approximation in terms of the optimal gradient decoding residual error. The allocation of the data points in the Stochastic Gradient Coding algorithm \cite{BWE20} is done randomly, and the servers need to compute a weighted sum of the gradients corresponding to each of their allocated points, while we compute a partial gradient corresponding to the block and rescale it according to its block leverage score. The encoding of our scheme corresponds to the weighted coefficients of \cite{BWE20}, since both are rescalings of the individual and block partial gradients. A benefit of our proposed replication of the data according to the block leverage scores is that it is more appropriate for spectral approximations of the data matrix, as it preserves the subspace spanned by the basis of the data matrix $\Ab$. This in turn allows us to also draw comparisons to approximations from optimal coefficient decoding schemes (Theorem \ref{app_GC_thm}).

The authors of \cite{KKR19,SH22} base their constructions on balanced incomplete block designs (BIBDs), which are combinatorial structures that allow resiliency to potential adversarial stragglers. A drawback of the methods proposed is \cite{KKR19,SH22}, is that they do not provide a convergence guarantee to the optimal solution $\xb^\star$ of the linear system, for their proposed iterative approximate gradient based scheme. A similar line of work to \cite{BWE20} that also considers adversarial stragglers is that of \cite{CPE17}, which considers ``Fractional Repetition Codes'' \cite{TLDK17,CMH20} with approximate decoding. This work also proposes ``Bernoulli Gradient Codes'' and a regularized variant (rBGC), where each data point $\Ab_{(i)}$ is randomly assigned to a fixed number of distinct servers. This GC scheme is an approximation of the pairwise-balanced scheme of \cite{BWE20}, and can be viewed as a case of SSD when all the data points have the same norm.

In \cite{CMPH23}, an optimal (in expectation) step-size $\xi_s^{\star}$ was determined for each iteration of GC methods with unbiased gradients, which can be applied by the central server. The idea proposed in \cite{CMPH23} applies to all fixed coefficient decoding GC methods, where the step-size is modified in order to reduce the overall number of iterations needed for SD to converge. The optimal step-size $\xi_s^{\star}$ is determined by the central server at each iteration once it receives sufficiently many computations to do so, from the computational nodes. This computation replaces the decoding step performed by the central server, which is of high complexity.

Another line of work considers using low density matrices as encoding matrices for distributed optimization \cite{HYKM19,MSM19}. In \cite{HYKM19} the authors distribute the data to the servers using a Low Density Generator Matrix code. As with optimal coefficient decoding, a drawback of \cite{HYKM19} is that at each iteration the central server has to run a decoding algorithm to decode the sum of the received partial gradients. The work of \cite{MSM19} focuses on linear loss functions, in which the data sent to the computational nodes is encoded through a Low Density Parity Check code. As is the case for all exact schemes, in \cite{MSM19} there is a threshold on the number of servers that need to respond in order to recover the exact gradient by performing the decoding step. In the CC literature, this is referred to as the \textit{recovery threshold}, and has been extensively studied \cite{DFHJCG20}. If this threshold is not met in the scheme of \cite{MSM19}, the central server leverages the Low Density Parity Check code to compute an estimate of the gradient. To this extent, it is worth mentioning that both our proposed method and that of \cite{BWE20} are flexible in terms of the required number of responses, as in both approaches the gradient approximation degrades gracefully if more stragglers than expected occur. This is an important consideration, as in practice there is a great deal of variability in the number of stragglers over time.

It is important to also mention that there are many papers which consider distributed stochastic first-order methods which do not take account of stragglers \cite{RM51}. There are also many works on CC that focus on the issue of numerical stability, and the problems of exact and approximate matrix multiplication, e.g., \cite{DCG16,FC18,BP19,JSM19,DFHJCG20,JDCC21,KD22,SHN19,DRV21,RT21,DR21}. We summarize the most relevant results from the works discussed above on approximate GC in Table \ref{rel_work_table}.

\begin{center}
\begin{table}[h]
\centering
\begin{tabular}{ |p{2.59cm}||p{1.9cm}|p{1.2cm}|p{1.4cm}| }
\hline
\multicolumn{4}{|c|}{\textbf{Comparison of Related Work}} \\
\hline
\hline
\textbf{Gradient Code} & \textbf{Decoding $\quad$ Coefficients} & \textbf{Residual Error} & \textbf{Convergence of} $\xb^{[s]}$ \\
\hline
\hline
Expander Codes \cite{RTTD17} & Fixed & --- & Yes \\ \hline 
Expander Codes \cite{GW21} & Optimal & $Np^{\bar{r}-o(\bar{r})}$ & Yes \\ \hline
Pairwise Balanced \cite{BWE20} & Fixed & --- & Yes \\ \hline
BIBD \cite{KKR19} & Fixed \& Optimal & --- & No \\ \hline
rBGC \cite{CPE17} & Fixed & --- & No \\ \hline
\textit{Leverage Scores GC} & Fixed & $4\epsilon^2$ & Yes \\ \hline
\end{tabular}

\caption{The second column indicates the category of the GC. The third column contains the average case normalized gradient residual error $\frac{\|g^{[s]}-\gh^{[s]}\|_2^2}{\|\Ab\|_2\cdot\|\Ab\xb^{[s]}-\bb\|_2}$, where $p=\frac{m-q}{m}$ is the fraction of stragglers, $\bar{r}$ is the average replication factor, $N$ is the number of data points in the data matrix, and $\epsilon$ is the $\ell_2$-subspace embedding error of our sketching algorithm. The fourth column, indicates whether a convergence proof of the parameters vector $\xb^{[s]}$ is provided in the corresponding work. Our work is presented in the last row.}
\label{rel_work_table}
\end{table}
\end{center}

\section{Notation and Background}
\label{not_backgr_sec}

We denote $\N_n\coloneqq\{1,2,\ldots,n\}$, and $X_{\{n\}}=\{X_i\}_{i=1}^n$, where $X$ is an arbitrary variable. We use $\Ab,\Bb$ to denote real matrices, $\bb,\xb$ real column vectors, $\Ib_n$ the $n\times n$ identity matrix, $\bold{0}_{n\times m}$ and $\bold{1}_{n\times m}$ respectively the $n\times m$ all zeros and all ones matrices, and by $\eb_i$ the standard basis column vector whose dimension will be clear from the context. The largest eigenvalue of a matrix $\Mb$, is denoted by $\lambda_1(\Mb)$. By $\Ab_{(i)}$ we denote the $i^{th}$ row of $\Ab$, by $\Ab^{(j)}$ its $j^{th}$ column, by $\Ab_{ij}$ the value of $\Ab$'s entry in position $(i,j)$, and by $\xb_i$ the $i^{th}$ element of $\xb$. The rounding function to the nearest integer is expressed by $\left\lfloor\cdot\right\rceil$, i.e., $\left\lfloor a\right\rceil=\floor{a+1/2}$ for $a\in\R$. The cardinality of a set is expressed by $|\cdot|$, e.g., $|\{1,2,5,9\}|=4$. Disjoint unions are represented by $\bigsqcup$, e.g., $\Z=\{j:j\text{ is odd}\}\bigsqcup\{j:j\text{ is even}\}$, and we define $\biguplus$ as the union of multisets, e.g., $\{1,2,3\}\biguplus\{3,4\}=\{1,2,3,3,4\}$. The diagonal matrix with real entries $a_{\{n\}}$ is expressed as $\diag\big(a_{\{n\}})$. The Moore–Penrose inverse of a rank $m$ matrix $\Mb\in\R^{n\times m}$ with $n>m$, is defined as $\Mb^{\dagger}\coloneqq(\Mb^\top\Mb)^{-1}\Mb^\top$.

We partition vectors and matrices across their rows into $K$ submatrices, so that any two submatrices either have the same number of rows or one of them has at most one additional row. Such a partitioning is always attainable through simple arithmetic \cite{CMH21}. For example, if we are to partition 100 rows into 7 blocks, 5 blocks are comprised of 14 rows and 2 blocks are comprised of 15 rows, and $100=5*14+2*15$. This allows us to assume that each node has the same computational load, and expected response times. For simplicity, we assume that $K$ divides the number of rows $N$, denoted $K\mid N$. That is, for an $\ell_2$-subspace embedding of $\Ab\in\R^{N\times d}$ with target $\bb\in\R^N$, we assume $K\mid N$ and the ``size of each partition’’ is $\tau=N/K$.\footnote{If $K\nmid N$, we appropriately append zero vectors/entries until this is met. It is not required that all blocks have the same size, though we discuss this case to simplify the presentation. One can easily extend our results to blocks of varying sizes, and use the analysis from \cite{CMH20} to determine the optimal size of each partition.} We partition $\Ab$ and $\bb$ across their rows:
\begin{equation}
\label{part_matrix}
  \Ab = \Big[\Ab_1^\top \ \cdots \ \Ab_K^\top\Big]^\top \qquad \text{and} \qquad \bb = \Big[\bb_1^\top \ \cdots \ \bb_K^\top\Big]^\top
\end{equation}
where $\Ab_i\in\R^{\tau\times d}$ and $\bb_i\in\R^{\tau}$ for all $i\in\N_K$. Partitions, are referred to as \textit{blocks}. Throughout the paper we consider the case where $N\gg d$. For $\Ab$ full-rank, its $\SVD$ is $\Ab=\Ub\Sigb\Vb^\top$, where $\Ub\in\R^{N\times d}$ is its reduced left orthonormal basis.

Matrix $\Ab$ represents a dataset $\D$ of $N$ samples with $d$ features, and $\bb$ the corresponding labels of the data points. The partitioning \eqref{part_matrix} corresponds to $K$ sub-datasets $\D_{\{K\}}$, i.e., $\D=\bigsqcup_{j=1}^K\D_j$. Our results are presented in terms of an arbitrary partition $\N_N=\bigsqcup_{\iota=1}^K\K_\iota$, for $\N_N$ the index set of the rows of $\Ab$ and $\bb$. The index subsets $\K_{\{K\}}$ indicate which data samples are in each sub-dataset. By $\Ab_{(\K_\iota)}$, we denote the submatrix of $\Ab$ comprised of the rows indexed by $\K_\iota$. That is, for $\Ib_{(\K_\iota)}$ the restriction of $\Ib_N$ to only include its rows corresponding to $\K_\iota$, we have $\Ab_{(\K_\iota)}=\Ib_{(\K_\iota)}\cdot\Ab$. By $\K_{\iota}^i$, we indicate that the $\iota^{th}$ block was sampled at trial $i$, i.e., the superscript $i$ indicates the sampling trial and the subscript $\iota\in\N_K$ which block was sampled at that trial. An explicit example where this notation is used, can be found in Appendix \ref{app_example}. Also, by $j(i)$ we denote the index of the submatrix which was sampled at the $i^{th}$ sampling trial, i.e., $\K_{j(i)}=\K_{j(i)}^i$. The complement of $\K_j$ is denoted by $\Kbar_j$, i.e., $\Kbar_j=\N_N\backslash\K_j$, for which $\Ub_{(\K_j)}^\top\Ub_{(\K_j)}=\Ib_d-\Ub_{(\Kbar_j)}^\top\Ub_{(\Kbar_j)}$. By $t\gets T$, we mean $T$ is a realization of the time variable $t$.

Sketching matrices are represented by $\Sb$ and $\Sbwt_{[s]}$. The script $[s]$ indexes an iteration $s=0,1,2,\ldots$ which we drop when it is clear from the context. Throughout the paper, we will be reducing dimension $N$ of $\Ab$ to $r$, i.e., $\Sb\in\R^{r\times N}$. Sampling matrices are denoted by $\Omb\in\{0,1\}^{r\times N}$, and diagonal rescaling matrices by $\Db\in\R^{N\times N}$.

\subsection{Least Squares Approximation}

Linear least squares approximation, abbreviated to least squares, is a technique to find an approximate solution to a system of linear equations that has no exact solution \cite{DMMS11}. Consider the system $\Ab\xb=\bb$, for which we want to find an approximation to the best-fitted
\begin{equation}
\label{x_star_lr}
  \xb^\star = \argmin_{\xb\in\R^d}\Big\{L_{ls}(\Ab,\bb;\xb)\coloneqq\|\Ab\xb-\bb\|_2^2\Big\},
\end{equation}
where the objective function $L_{ls}$ has gradient 
\begin{equation}
\label{gr_ls}
  g^{[s]} = \nabla_{\xb}L_{ls}(\Ab,\bb;\xb^{[s]}) = 2\Ab^\top(\Ab\xb^{[s]}-\bb).
\end{equation}
We refer to the gradient of the block pair $(\Ab_i,\bb_i)$ from \eqref{part_matrix} as the $i^{th}$ \textit{partial gradient}
\begin{equation}
\label{part_gr_ls}
  g_i^{[s]} = \nabla_{\xb}L_{ls}(\Ab_i,\bb_i;\xb^{[s]}) = 2\Ab_i^\top(\Ab_i\xb^{[s]}-\bb_i).
\end{equation}
Existing exact methods find a solution vector $\xb^\star$ in $O(Nd^2)$ time, where $\xb^\star=\Ab^{\dagger}\bb$. In Subsection \ref{sk_lin_regr} we focus on approximating the optimal solution $\xb^\star$ by using our methods, via distributive SD/SSD and iterative sketching. What we present also accommodates regularizers of the form $\gamma\|\xb\|_2^2$, though to simplify the presentation of our expressions, we only consider \eqref{x_star_lr}.

\subsection{Steepest Descent}

When considering a minimization problem with a convex differentiable objective function $L\colon\R^d\to\R$, we select an initial $\xb^{[0]}\in\R^d$ and repeat at iteration $s+1$:
$$ \xb^{[s+1]}\gets\xb^{[s]}-\xi_s\cdot\nabla_{\xb}L(\xb^{[s]}) $$
for $s=0,1,2,\ldots$, until a prespecified termination criterion is met. This iterative procedure is called steepest descent, also known as gradient descent. The parameter $\xi_s>0$ is the corresponding step-size, which may be adaptive or fixed. To guarantee convergence of $L_{ls}$, one can select $\xi_s=2/\sigma_{\text{max}}(\Ab)^2$ for all iterations, though this is too conservative.

\subsection{Leverage Scores}

In high dimensional least squares problems subsampling of the rows of $\Ab$ is often used to reduce computational load. One commonly used method for subsampling uses what are known as \textit{leverage scores} \cite{MMY15,DMMW12}  which characterize the importance of the data points (rows of $\Ab$). Common applications of leverage scores include: approximation by low rank \cite{DM08,MD09}, approximate matrix multiplication \cite{Neo23}, tensor products and approximations \cite{FGF20,MS21,WZ22}, matrix completion \cite{HLD20}, kernel ridge regression \cite{AM15}, $k$-means clustering \cite{BDM09}, graph sparsification \cite{SS11}, and Nystr\"{o}m-based low-rank  approximations \cite{MM17}. The leverage scores of $\Ab$ measure the extent to which the vectors of its orthonormal basis $\Ub$ are correlated with the standard basis, and define the key structural non-uniformity that underlies fast randomized matrix algorithms. Leverage scores are defined as $\ell_i\coloneqq\|\Ub_{(i)}\|_2^2$, and are agnostic to any particular basis, as they are equal to the diagonal entries of the projection matrix $P_\Ab=\Ab\Ab^\dagger=\Ub\Ub^\top$. The normalized leverage scores $\pi_{\{N\}}$ of $\Ab$:
\begin{equation}
\label{norm_lvg_sc}
  \pi_i \coloneqq \left\|\Ub_{(i)}\right\|_2^2\big/\|\Ub\|_F^2 = \left\|\Ub_{(i)}\right\|_2^2\big/d
\end{equation}
for each $i\in\N_N$, form a sampling probability distribution, as $\sum_{i=1}^N\pi_i=1$ and $\pi_i\geqslant0$ for all $i$. This induced distribution has proven to be useful in linear regression \cite{DMMW12,Woo14,Mah16,ERNM22,BDN15}.

The \textit{normalized block leverage scores}, introduced independently in \cite{CPH20a,OJXE18}, are the sum of the normalized leverage scores of the subset of rows constituting each row-block of $\Ab$ \eqref{part_matrix}. Similar to leverage scores, these scores measure the correlation between the data block's corresponding submatrices in $\Ub$ and $\Ib_N$, as well as how much a particular row-block contributes to the approximation of $\Ab$. Analogous to \eqref{norm_lvg_sc}, considering the partitioning of the dataset $\D$ according to $\K_{\{K\}}$, the normalized block leverage scores of $\Ab$ are defined as
\begin{equation}
\label{norm_block_lvg_sc}
  \Pi_l \coloneqq \left\|\Ub_{(\K_l)}\right\|_F^2\big/\|\Ub\|_F^2 = \left\|\Ub_{(\K_l)}\right\|_F^2\big/d = \sum_{j\in\K_l}\pi_j
\end{equation}
for each $l\in\N_K$. This is a direct generalization of \eqref{norm_lvg_sc}, as 
\begin{equation*}
  \|\Ub_{(\K_l)}\|_F^2 = \left\|\big[(\Ub_{(\K_l)})_{(1)}\ \dots \ (\Ub_{(\K_l)})_{(|\K_l|)}\big]\right\|_2^2 .
\end{equation*}
Much like individual leverage scores, block leverage scores measure how much a particular block contributes to the approximation of the matrix $\Ab$ \cite{OJXE18}.

A related notion is that of the \textit{Frobenius block scores}, which, in the case of a partitioning as in \eqref{part_matrix}, are $\|\Ab_\iota\|_F^2$ for each $\iota\in\N_K$. Such scores have been used for approximate matrix multiplication, e.g., $CR$-multiplication \cite{DK01,DKM06a,DKM06b}. In our context, the block leverage scores of $\Ab$ are the Frobenius block scores of $\Ub$.

A drawback of using leverage scores, is that calculating them requires $O(Nd^2)$ time. To alleviate this, one can instead settle for relative-error approximations which can be approximated much faster, e.g., \cite{DMMW12} does so in $O(Nd\log N)$ time. In particular, one can consider approximate normalized scores $\Pit_{\{K\}}$ where $\Pit_i\geqslant \beta\Pi_i$ for all $i$, for some misestimation factor $\beta\in(0,1]$. Since $\Pi_{\{K\}}$ and $\Pit_{\{K\}}$ are identical if and only if $\beta=1$, a higher $\beta$ implies the approximate distribution is more accurate. We specify that a misestimation factor $\beta$ to $\Pi_{\{K\}}$ is for a specific approximate distribution $\Pit_{\{K\}}$, by denoting it as $\beta_{\Pit}$.

Approximate block sampling distributions to $\Pi_{\{K\}}$ are denoted by $\Pit_{\{K\}}$, and the distributions induced through \textit{expansion networks}; which we introduce in Subsection \ref{exp_netw_sec}, are denoted by $\Pib_{\{K\}}$. We quantify the difference between distributions $\Pi_{\{K\}}$ and $\Pit_{\{K\}}$ by the distortion metric $d_{\Pi,\Pit}\coloneqq\frac{1}{K}\sum_{i=1}^K|\Pi_i-\Pit_i|$, which is the $\ell_1$ distortion between $\Pi_{\{K\}}$ and $\Pit_{\{K\}}$ \cite{IWN22}.

\subsection{Subspace Embedding}

Our approach to approximating \eqref{x_star_lr}, is to apply an $\ell_2$-subspace embedding sketching matrix $\Sb\in\R^{r\times N}$ on $\Ab$. Recall that $\Sb\in\R^{r\times N}$ is a $(1\rpm\epsilon)$ $\ell_2$-\textit{subspace embedding} of the column-space of $\Ab$, if 
\begin{equation}
\label{subsp_emb_Ax}
  \Pr\big[(1-\epsilon)\|\Ab\xb\|_2^2 \leqslant \|\Sb\Ab\xb\|_2^2 \leqslant (1+\epsilon)\|\Ab\xb\|_2^2\big]\geqslant 1-\delta
\end{equation}
for all $\xb\in\R^d$, where $\delta\geqslant0$ is the failure probability \cite{Woo14}. Notice that such an $\Sb$, is also a $(1\rpm\epsilon)$ $\ell_2$-subspace embedding of $\Ub$, as $\{\Ab\xb:\xb\in\R^{d}\}=\{\Ub\yb:\yb\in\R^d\}$. This implies that the event whose probability is bounded in \eqref{subsp_emb_Ax} is equivalent to simultaneously satisfying
\begin{equation}
\label{subsp_emb_lower}
  (1-\epsilon)\|\yb\|_2^2 = (1-\epsilon)\|\Ub\yb\|_2^2 \leqslant \|\Sb\Ub\yb\|_2^2 
\end{equation}
and
\begin{equation}
\label{subsp_emb_upper}
  (1+\epsilon)\|\yb\|_2^2 = (1+\epsilon)\|\Ub\yb\|_2^2 \geqslant \|\Sb\Ub\yb\|_2^2
\end{equation}
for all $\yb\in\R^{d}$. The lower \eqref{subsp_emb_lower} and upper \eqref{subsp_emb_upper} bounds on $\|\Sb\Ub\yb\|_2^2$ respectively imply
\begin{equation*}
  \yb^\top\big(\Ib_d-(\Sb\Ub)^\top\Sb\Ub\big)\yb \leqslant \epsilon\|\yb\|_2^2
\end{equation*}
and
\begin{equation*}
  \yb^\top\big((\Sb\Ub)^\top\Sb\Ub-\Ib_d\big)\yb \leqslant \epsilon\|\yb\|_2^2
\end{equation*}
thus, a simplified condition for an $\ell_2$-subspace embedding of $\Ab$ is
\begin{equation}
\label{eq_form}  
  \Pr\big[\|\Ib_d-\Ub^\top\Sb^\top\Sb\Ub\|_2\leqslant\epsilon\big]\geqslant 1-\delta
\end{equation}
for a small $\delta\geqslant0$. Further details on the derivation of \eqref{eq_form} can be found in Appendix \ref{notation_app}.

For the overdetermined system $\Ab\xb=\bb$, we require $r>d$, and in the sketch-and-solve paradigm the objective is to determine an $\xbh$ that satisfies
\begin{equation}
\label{appr_xbh}  
  (1-\epsilon)\|\Ab\xb^\star-\bb\|_2 \leqslant \|\Ab\xbh-\bb\|_2 \leqslant (1+\epsilon)\|\Ab\xb^\star-\bb\|_2
\end{equation}
where $\xbh$ is an approximate solution to the modified least squares problem
\begin{equation}
\label{mod_OLS}  
  \xbh = \argmin_{\xb\in\R^d} \Big\{L_\Sb(\Sb,\Ab,\bb;\xb)\coloneqq\|\Sb(\Ab\xb-\bb)\|_2^2\Big\}.
\end{equation}
If \eqref{eq_form} is met, we get with high probability the approximation characterizations:
\begin{enumerate}
  \item $\|\Ab\xbh-\bb\|_2 \leqslant \frac{1+\epsilon}{1-\epsilon}\|\Ab\xb^\star-\bb\|_2 \leqslant (1+O(\epsilon))\|\Ab\xb^\star-\bb\|_2$
  \item $\|\Ab(\xb^\star-\xbh)\|_2\leqslant\epsilon\|(\Ib_N-\Ub\Ub^\top)\bb\|_2=\epsilon\|\bb^{\perp}\|_2$
\end{enumerate}
where $\bb^\perp=\bb-\Ab\xb^\star$ is orthogonal to the column-span of $\Ab$, i.e., $\Ab^\top\bb^\perp=\bold{0}_{d\times 1}$.

\subsection{Coded Computing Probabilistic Model}
\label{CC_model}

In this subsection, we describe the GC problem and CC probabilistic model we will be considering, which were respectively introduced in \cite{TLDK17} and \cite{LLPPR17}. In GC (Figure \ref{CG_schematic}), there is a central server that shares the $K$ disjoint subsets $\D_{\{K\}}$ of $\D$ among $m$ \textit{homogeneous}\footnote{This means that they have the same computational power, independent and identically distributed statistics for the computing time of similar tasks, and expected response time.}
servers, to facilitate computing the solution of the following minimization problem, of the form \eqref{x_star_lr}:
\begin{equation*}
\label{sep_ls_obj}  
  \xb^\star = \arg\min_{\xb\in\R^d}\bigg\{\|\Ab\xb-\bb\|_2^2=\sum_{j=1}^KL_{ls}(\D_j;\xb)\bigg\}.
\end{equation*}
Since the objective function $L_{ls}(\Ab,\bb;\xb)$ is  differentiable and additively separable, it follows that $g^{[s]}=\sum_{j=1}^Kg_j^{[s]}$. The objective function's gradient is updated in a distributed manner while only requiring $q$ servers to respond, i.e., it is robust to $m-q$ stragglers. This is achieved through an encoding of the computed partial gradients by the servers, and a decoding step once $q$ servers have sent back their encoded computation. In our approach, we consider \textit{approximate} GC by requesting updated partial gradients according to a sketched version of the objective problem given in \eqref{mod_OLS}.

The probabilistic computational model introduced in \cite{LLPPR17}, which is the standard CC paradigm, is central to our framework. In this model one assumes the existence of a \textit{mother runtime distribution} $F(t)$, with a corresponding probability density function $f(t)$. Let $T_0$ be the time it takes a single machine to complete its computation, and define the cumulative distribution function $F(t)\coloneqq\Pr[T_0\leqslant t]$. When the computation budget is $\tau$, and the computations are partitioned into $K$ subtasks and distributed, each of size $\frac{\tau}{N}$, the runtime distribution of each subtask is assumed to follow the distribution $\Ft(t)\coloneqq F(t\tau/N)=\Pr[T_i\leqslant t]$, where $T_i$ is the random completion time of the $i^{th}$ subtask. Moreover, since the $m$ servers are homogeneous and the subtasks are of the same size, the distribution is the same for each $i\in\N_K$. In this work, we view the computations as being communicated to the central server over erasure channels, where the $l^{th}$ server $W_l$ has an erasure probability\footnote{This is also known as the \textit{survival function}: $\phi(t)=\int_{t}^{\infty}\big(1-\tilde{f}(u)\big)du=1-\Ft(t)=\Pr[T_i>t]$, for $\tilde{f}(t)$ the PDF corresponding to $\Ft(t)$. The function $\phi(t)$ is monotonically decreasing.}
\begin{equation}
\label{eras_prob}
   \phi(t)\coloneqq1-\Ft(t)=1-\Pr\big[W_l \text{ responds by time } t\big]
\end{equation}
i.e., the probability that $W_l$ is a straggler at time $t$ is $\phi(t)$. All servers have the same erasure probability, as we are assuming they are homogeneous.

In our setting, there are two hyperparameters required for determining an expansion network. First, one needs to determine a time instance $t\gets T$ after which the central server will stop receiving servers' computations. This may be decided by factors such as the system's limitations, number of servers, or an upper bound on the desired waiting time for the servers to respond. At time $T$, according to $\Ft(t)$, the central server receives roughly $q(T)\coloneqq\floor{\Ft(T)\cdot m}$ server computations, where $m$ is the total number of computational nodes in the network. We refer to the prespecified time instance $T$ after which the central server stops receiving computations, as the \textit{ending time}. If the sketching procedure of the proposed sketching algorithm were to be carried out by a single server, there would be no benefit in setting $T$ such that $q(T)\tau>N$, as the exact calculation could have taken place in the same amount of time. In distributed networks though there is no control over which servers respond, and it is not a major concern if $q(T)\tau$ is slightly above $N$, as this still results in acceleration of the computation. The trade-off between accuracy and waiting time $t$ is captured in Theorem \ref{subsp_emb_thm_lvg}, for $q\gets q(t)$ sampling trials. The second hyperparameter we need in order to design an expansion network is the block size $\tau$, which is determined by the number of partitions $K$ illustrated in \eqref{part_matrix}. Together, $q(T)$ and $\tau$ determine the ideal number of servers needed for our framework to perfectly emulate sampling according to the normalized block leverage scores $\Pi_{\{K\}}$.

\section{Coded Computing from RandNLA}
\label{cc_fr_rnla_sec}

In this section, we introduce our \textit{block} leverage score sampling algorithm, which is more practical and can be carried out more efficiently than its vector-wise counterpart. Our $\ell_2$-subspace embedding result is presented in Theorem \ref{subsp_emb_thm_lvg}. By setting $\tau=1$ and $\beta=1$, we recover a known result for (exact) leverage score sampling.
\begin{figure*}[t]
  \centering
  \includegraphics[scale=.16]{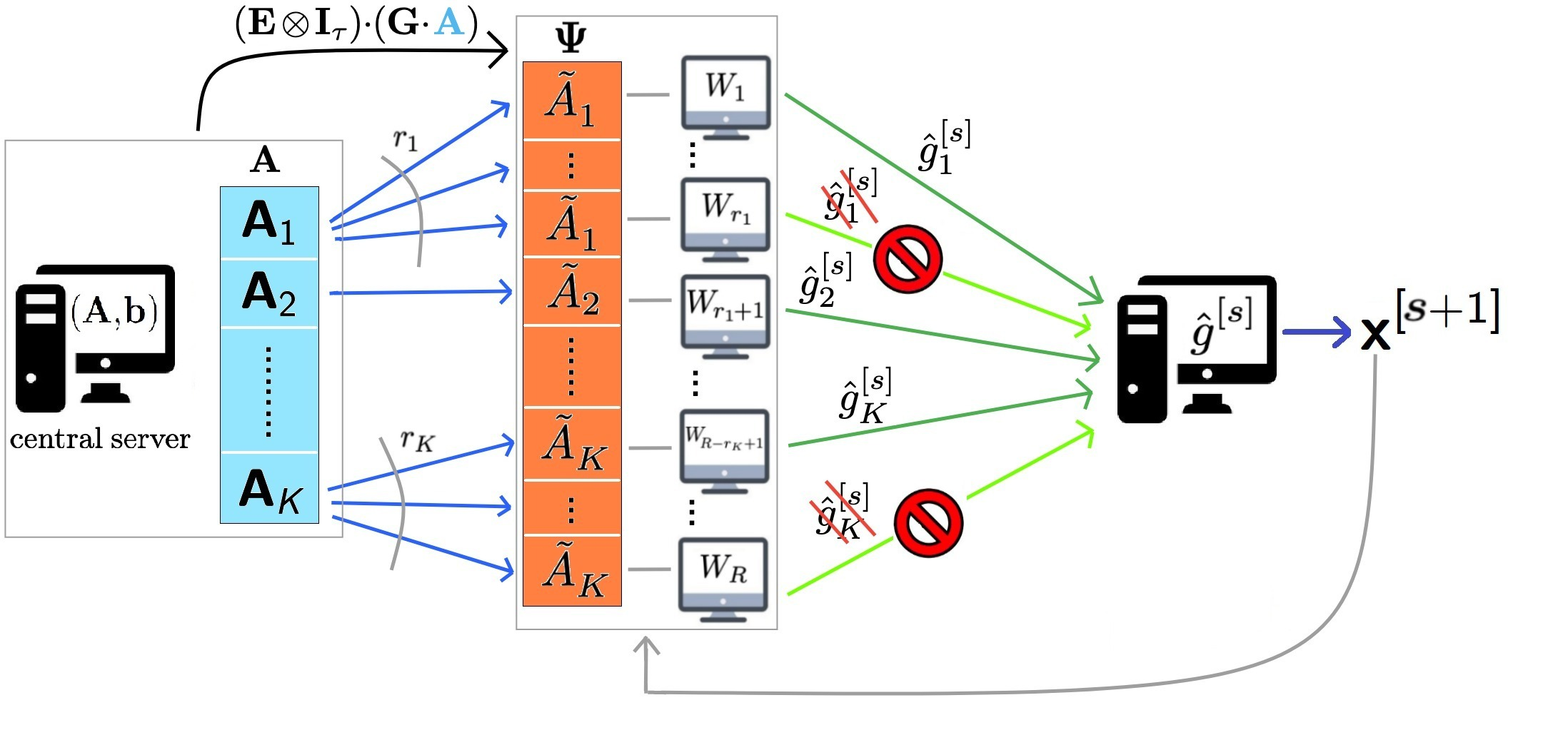}
  \caption{Illustration of our GC approach, at iteration $s+1$. The blocks of $\Ab$ (and $\bb$) are encoded through $\Gb$ and then replicated through $\Eb\otimes\Ib_\tau$, where each block of the resulting $\PsiB$ is given to a single server, before the iterative SD procedure takes place. At the illustrated iteration, servers $W_{r_1}$ and $W_{R}$ are stragglers, and their computations are not received. The central server determines the estimate $\gh^{[s]}$, and then shares $\xb^{[s+1]}$ with all the computational nodes. The resulting estimate is the gradient of the induced sketch, i.e., $\gh^{[s]}=\nabla_\xb L_\Sb(\Sbwt_{[s]},\Ab,\bb;\xb^{[s]})$.}
  \label{GC_block_lvg_schematic}
\end{figure*}

In Subsection \ref{exp_netw_sec} we incorporate our block sampling algorithm into the CC probabilistic model described in Subsection \ref{CC_model}, in which we leverage task redundancy to mitigate stragglers. Specifically, we show how to replicate computational tasks among the servers, under the integer constraints imposed by the physical system and the desired waiting time. This is then used to approximate the gradient at each iteration in a way that emulates the sampling procedure of the sketching method presented in Algorithm \ref{alg_1_pseudocode}. In Subsection \ref{opt_ind_distr_subsec} we further elaborate on when a perfect emulation is possible, and how emulated block leverage score sampling can be improved when it cannot be done perfectly, through the proposed networks. In Subsection \ref{sk_lin_regr} we present our GC approach and relate it to SD and SSD, which in turn implies convergence guarantees with appropriate step-sizes. In Subsection \ref{sk_lin_regr}, we also provide an algorithmic description of our GC procedure, to tie together the algorithms discussed in this section. Furthermore, at each iteration we have a different induced sketch, hence our procedure lies under the framework of iterative sketching. Specifically, we obtain gradients of multiple sketches of the data
$\big(\Sbwt_{[1]}\Ab,\Sbwt_{[2]}\Ab,\ldots,\Sbwt_{[n]}\Ab\big)$ and iteratively refine the solution, where $n$ can be chosen logarithmic in $N$. A schematic of our approach is presented in Figure \ref{GC_block_lvg_schematic}, and in Appendix \ref{app_example} we provide a concrete example of the induced sketching matrices resulting from the iterative process.

\subsection{Block Leverage Score Sampling}
\label{lvg_sec}

In the leverage score sketch \cite{DMMW12,ERNM22,MMY15,Mah16,Woo14} we sample with replacement $r$ rows according $\pi_{\{N\}}$ \eqref{norm_lvg_sc}, and then rescale each sampled row by $1/\sqrt{r\pi_{i_j}}$, where $\pi_{i_j}$ is the sampling probability of the row that was selected at the $j^{th}$ trial. Instead, we sample with replacement $q$ blocks from \eqref{part_matrix} according to $\Pi_{\{K\}}$ \eqref{norm_block_lvg_sc}, and rescale them by $1/\sqrt{q\Pi_{i_j}}$. The pseudocode of the block leverage score sketch is given in Algorithm \ref{alg_1_pseudocode}, where we consider an approximate distribution $\Pit_{\{K\}}$ such that $\Pit_i\geqslant \beta\Pi_i$ for all $i$, for $\beta\in(0,1]$ a dependent loss in accuracy \cite{DMMW12,DMM06,Mah16}. The spectral guarantee of the sketching matrix $\Sbwt$ of Algorithm \ref{alg_1_pseudocode} is presented in Theorem \ref{subsp_emb_thm_lvg}. Iterative sketching in our distributed GC approach through Algorithm \ref{alg_1_pseudocode}, corresponds to selecting a new sampling matrix $\Ombwt^{[s]}$ at each iteration through the servers' responses, i.e., $\Sbwt_{[s]}=\Dbwt\cdot\Ombwt^{[s]}$ for each $s$.

\begin{algorithm}[h]
\caption{Block Leverage Score Sketch}
\label{alg_1_pseudocode}
\SetAlgoLined
{\small
  \KwIn{$\Ab\in\R^{N\times d}$, $\tau=\frac{N}{K}$, $q=\frac{r}{\tau}>\frac{d}{\tau}$}
  \KwOut{$\Sbwt\in\R^{r\times N}$, $\widehat{\Ab}\in\R^{r\times d}$}
  \textbf{Initialize:} $\Omb=\bold{0}_{q\times K}$, $\Db=\bold{0}_{q\times q}$\\
  \textbf{Compute:} (approximate) distribution $\Pit_{\{K\}}$ \eqref{norm_block_lvg_sc}\\
  \For{$j=1$ to $q$}
    {
      sample with replacement $i_j$ from $\N_K$, according to $\Pit_{\{K\}}$\\
      $\Omb_{j,i_j}=1$ \Comment{equivalently $\Omb_{(j)}=\eb_{i_j}^\top$}\\
      $\Db_{j,j}=\sqrt{\frac{\tau}{r\Pit_{i_j}}}=\sqrt{\frac{1}{q\Pit_{i_j}}}$\\
    }
  $\Ombwt\gets\Omb\otimes\Ib_\tau$\\
  $\Dbwt\gets\Db\otimes\Ib_\tau$\\
  $\Sbwt\gets\Dbwt\cdot\Ombwt$ \Comment{$\Sbwt=(\Db\cdot\Omb)\otimes\Ib_{\tau}$}\\
  $\widehat{\Ab}\gets\Sbwt\cdot\Ab$\\
}
\end{algorithm}

\begin{Thm}
\label{subsp_emb_thm_lvg}
The sketching matrix $\Sbwt$ of Algorithm \ref{alg_1_pseudocode} is a $(1\rpm\epsilon)$ $\ell_2$-subspace embedding of $\Ab$, according to \eqref{eq_form}. Specifically, for $\delta>0$ and $q=\Theta\left(\frac{d}{\tau}\log{(2d/\delta)}/(\beta\epsilon^2)\right)$:
\begin{equation*}
  \Pr\big[\|\Ib_d-\Ub^\top\Sbwt^\top\Sbwt \Ub\|_2\leqslant\epsilon\big]\geqslant 1-\delta.
\end{equation*}
\label{alg_lvg_b}
\end{Thm}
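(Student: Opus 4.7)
The plan is to recast the event in \eqref{eq_form} as a matrix concentration statement for a sum of independent symmetric PSD random matrices, and then invoke a matrix Chernoff inequality. Unpacking the Kronecker structure $\Sbwt=(\Db\cdot\Omb)\otimes\Ib_\tau$ from Algorithm \ref{alg_1_pseudocode}, I would first write
$$\Ub^\top\Sbwt^\top\Sbwt\Ub \;=\; \sum_{j=1}^{q} X_j, \qquad X_j \coloneqq \frac{1}{q\,\Pit_{i_j}}\,\Ub_{(\K_{i_j})}^\top\Ub_{(\K_{i_j})},$$
where $i_1,\ldots,i_q\in\N_K$ are the i.i.d.\ block indices drawn from $\Pit_{\{K\}}$ by Algorithm \ref{alg_1_pseudocode}. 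The summands are i.i.d., symmetric, and positive semidefinite. A direct calculation yields unbiasedness,
$$\E[X_j] \;=\; \sum_{l=1}^{K}\Pit_l\cdot\frac{1}{q\,\Pit_l}\,\Ub_{(\K_l)}^\top\Ub_{(\K_l)} \;=\; \frac{1}{q}\,\Ub^\top\Ub \;=\; \frac{1}{q}\,\Ib_d,$$
so $\E\big[\sum_j X_j\big]=\Ib_d$, the target in \eqref{eq_form}. Notice that unbiasedness is independent of the misestimation factor $\beta$, because the rescaling $1/(q\Pit_{i_j})$ exactly cancels the draw probability.

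Next, I would establish a uniform almost-sure bound on $\lambda_{\max}(X_j)$. Since $\Ub_{(\K_{i_j})}^\top\Ub_{(\K_{i_j})}$ is PSD, its largest eigenvalue is at most its trace; combining this with \eqref{norm_block_lvg_sc} and the hypothesis $\Pit_l\geq\beta\Pi_l$,
$$\lambda_{\max}(X_j) \;\leq\; \frac{\|\Ub_{(\K_{i_j})}\|_F^2}{q\,\Pit_{i_j}} \;=\; \frac{d\,\Pi_{i_j}}{q\,\Pit_{i_j}} \;\leq\; \frac{d}{q\,\beta}.$$
This is the unique appearance of $\beta$ in the argument, and it confirms that a worse approximation (smaller $\beta$) enlarges the effective sample size by only a factor $1/\beta$.

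With $R\coloneqq d/(q\beta)$ as the uniform spectral bound, I would invoke the matrix Chernoff inequality for sums of independent PSD matrices: for $Y=\sum_j X_j$ with $\E[Y]=\Ib_d$,
$$\Pr\big[\|\Ib_d-Y\|_2\geq\epsilon\big] \;\leq\; 2d\,\exp\!\big({-}c\,\epsilon^2/R\big),$$
for an absolute constant $c>0$ (e.g.\ $c=1/3$ on the upper tail and $c=1/2$ on the lower tail, via the Lieb/Laplace-transform argument). Substituting $R=d/(q\beta)$ and demanding the right-hand side be at most $\delta$ yields the announced sample complexity for $q$ after taking logarithms.

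The main technical obstacle is recovering the $1/\tau$ improvement in the quoted sample complexity. The crude bound $\lambda_{\max}(\Ub_{(\K_l)}^\top\Ub_{(\K_l)})\leq\|\Ub_{(\K_l)}\|_F^2$ gives the correct leverage-rescaled ratio $d/(q\beta)$ but does not by itself track the block size $\tau$. To tighten this, I would complement the trace bound with the structural identity $\sum_l\Ub_{(\K_l)}^\top\Ub_{(\K_l)}=\Ib_d$, which forces $\lambda_{\max}(\Ub_{(\K_l)}^\top\Ub_{(\K_l)})\leq 1$ uniformly, and then switch to matrix Bernstein, whose variance proxy $\|\sum_j\E[X_j^2]\|_2$ should admit an $O(d/(q\tau\beta))$ estimate exposing the block-size dependence. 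Outside of this refined variance computation, the rest is a standard matrix-concentration routine.
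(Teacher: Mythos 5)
Your proposal follows essentially the same route as the paper's proof (Proposition~\ref{prop_Sp_lvg} in Appendix~\ref{lvg_app}): define i.i.d.\ random matrices indexed by the $q$ block draws, verify $\E\big[\Ub^\top\Sbwt^\top\Sbwt\Ub\big]=\Ib_d$ using $\sum_{l}\Ub_{(\K_l)}^\top\Ub_{(\K_l)}=\Ib_d$, bound the spectral norm of each summand by the trace identity $\|\Ub_{(\K_l)}\|_F^2=d\,\Pi_l$ together with $\Pit_l\geqslant\beta\Pi_l$, and invoke a matrix Chernoff inequality. The only cosmetic differences are that you keep the $1/q$ inside the summands and work with the uncentered PSD sum, whereas the paper centers first and applies the Bernstein-type bound of \cite[Fact 1]{Woo14} with a norm bound $\gamma=1+d/\beta$ and a variance proxy $\sigma^2=d/\beta-1$; both yield the tail $2d\exp\big({-q\epsilon^2\,\Theta(\beta/d)}\big)$.

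The issue you flag at the end — that the trace bound delivers $q=\Theta\big(d\log(2d/\delta)/(\beta\epsilon^2)\big)$ rather than the stated $q=\Theta\big(\tfrac{d}{\tau}\log(2d/\delta)/(\beta\epsilon^2)\big)$ — is a real unclosed step in your write-up, since you only sketch the matrix Bernstein refinement and assert, without proof, that the variance proxy admits an $O\big(d/(q\tau\beta)\big)$ estimate; in general this requires some flatness of the leverage scores within each block (a worst-case block can concentrate all of its Frobenius mass on a single direction, in which case the $1/\tau$ gain is not available). You should know, however, that the paper's own proof does not close this step either: its variance bound is $d/\beta-1$ with no $\tau$ dependence, and the final substitution of $q=\Theta\big(\tfrac{d}{\tau}\cdots\big)$ into the exponent $-q\epsilon^2\,\Theta(\beta/d)$ leaves a residual factor of $\tau$ unaccounted for. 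So your proposal matches the paper's argument in both its structure and its actual strength; if you want the theorem exactly as stated, the missing ingredient is precisely the refined per-block operator-norm (or variance) estimate you identify, and being explicit about the assumption under which it holds would make your version strictly more careful than the original.
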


\begin{proof}
The main tool is a matrix Chernoff bound \cite[Fact 1]{Woo14}. We define random matrices corresponding to the sampling process and bound their norm and variance, in order to apply the matrix Chernoff bound. The complete proof can be found in Appendix \ref{lvg_app}.
\end{proof}

Theorem \ref{subsp_emb_thm_lvg} applies to more general problems than leverage score sampling. Specifically, one can apply a random projection to ``flatten'' the block leverage scores, i.e., to make them all approximately equal, and then sample with replacement uniformly at random. This is the main idea behind the analysis of the SRHT \cite{AC06,AC09}. The trade-off between such algorithms and Algorithm \ref{alg_1_pseudocode}, is computing the leverage scores explicitly vs. the computational reduction due to applying a random projection. Sketching approaches similar to the SRHT which do not directly utilize the data, are referred to as ``data oblivious sketches''. Theses approaches are better positioned for handling high velocity streams as well as highly unstructured and arbitrarily distributed data \cite{MOW21}. Multiplying the data by a random matrix spreads the information in the rows of the matrix, such that all rows are of equal importance, and the new matrix is ``incoherent''. In Appendix \ref{comp_SRHT_app}, we show when Algorithm \ref{alg_1_pseudocode} and the corresponding block sampling counterpart of the SRHT \cite{CMPH22} achieve the same asymptotic guarantees for the same number of sampling trials $q$.

Next, we provide a sub-optimality result for non-iterative sketching of the block leverage score sketch for ordinary least squares
\begin{equation}
\label{nonit_OLS}
  \xbt = \argmin_{\xb\in\R^d} \Big\{L_\Sb(\Sbwt,\Ab,\bb;\xb)\coloneqq\|\Sbwt(\Ab\xb-\bb)\|_2^2\Big\}
\end{equation}
which follows from the results of \cite{PW16}. The following Corollary is established by adapting the proof of {\cite[Theorem 1]{PW16}}, which is based on a reduction from statistical minimax theory combined with information-theoretic bounds and an application of Fano’s inequality, yielding an upper bound to $\big\|\E\big[\Sbwt^\top\big(\Sbwt\Sbwt^\top\big)^{-1}\Sbwt\big]\big\|_2$. The proof of Corollary \ref{sub_opt_cor} can be found in Appendix \ref{lvg_app}. 

\begin{Cor}
\label{sub_opt_cor}
For any full-rank data matrix $\Ab\in\R^{N\times d}$ with a noisy observation model $\bb=\Ab\xb^\bullet+\wb$ where $\xb^\bullet$ is an arbitrary vector of length $d$ and $\wb\sim\Nn\left(0,\sigma^2\Ib_N\right)$, the optimal least squares solution $\xb^\star$ of \eqref{x_star_lr} has prediction error $\E\big[\|\Ab(\xb^\bullet-\xb^\star)\|_2^2\big]\lesssim\frac{\sigma^2d}{N}$. Furthermore, any estimate $\xbt$ based on the sketched system $(\Sbwt\Ab,\Sbwt\bb)$ produced by Algorithm \ref{alg_1_pseudocode} with sampling probabilities $\Pi_{\{K\}}$, has a prediction error lower bound of
\begin{equation}
\label{sub_opt_error}
  \E\big[\|\Ab(\xb^\bullet-\xbt)\|_2^2\big]\gtrsim\frac{\sigma^2d}{\min\{r,N\}} .
\end{equation}
\end{Cor}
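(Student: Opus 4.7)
The plan is to prove the two claims separately: the upper bound is immediate, while the lower bound follows the strategy of [PW16, Theorem 1] based on Fano's inequality and a control of $\bigl\|\E\bigl[\Sbwt^\top(\Sbwt\Sbwt^\top)^{-1}\Sbwt\bigr]\bigr\|_2$. For the first claim, since $\Ab$ is full rank, $\xb^\star = \Ab^\dagger\bb = \xb^\bullet + \Ab^\dagger\wb$, so $\Ab(\xb^\star-\xb^\bullet) = P_\Ab\wb$ where $P_\Ab$ is the orthogonal projection onto the column span of $\Ab$. Taking expectations,
\begin{equation*}
  \E\bigl[\|\Ab(\xb^\star-\xb^\bullet)\|_2^2\bigr] \,=\, \sigma^2\,\mathrm{tr}(P_\Ab) \,=\, \sigma^2 d,
\end{equation*}
which matches the claim under the per-sample mean-squared prediction error normalization used in \cite{PW16}.

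For the lower bound I would reduce estimation to multi-way hypothesis testing. First, construct a $2\delta$-packing $\{\xb_1^\bullet,\ldots,\xb_M^\bullet\}$ in the prediction semimetric $\rho(\xb,\xb')=\|\Ab(\xb-\xb')\|_2$, with $\log M = \Omega(d)$; such a packing is obtained by a Gilbert--Varshamov / volume argument on the $d$-dimensional column span of $\Ab$. Drawing $J$ uniformly on $[M]$ and setting $\xb^\bullet = \xb_J^\bullet$, Fano's inequality yields
\begin{equation*}
  \inf_{\xbt}\max_J \E\bigl[\rho(\xbt,\xb_J^\bullet)^2\bigr] \,\gtrsim\, \delta^2\left(1 - \frac{I(J;\Sbwt,\Sbwt\bb) + \log 2}{\log M}\right),
\end{equation*}
so it suffices to upper bound the mutual information. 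Conditionally on $\Sbwt$ and $J$, the observation $\Sbwt\bb = \Sbwt\Ab\xb_J^\bullet + \Sbwt\wb$ is Gaussian with mean $\Sbwt\Ab\xb_J^\bullet$ and covariance $\sigma^2\Sbwt\Sbwt^\top$. The standard Gaussian KL identity combined with the convexity bound $I(J;\cdot)\leq M^{-2}\sum_{i,j}\mathrm{KL}(\mathbb{P}_i\|\mathbb{P}_j)$ produces, after integrating out $\Sbwt$,
\begin{equation*}
  I(J;\Sbwt,\Sbwt\bb) \,\leq\, \frac{1}{2\sigma^2}\max_{i,j}(\xb_i^\bullet-\xb_j^\bullet)^\top\Ab^\top\,\E\bigl[\Sbwt^\top(\Sbwt\Sbwt^\top)^{-1}\Sbwt\bigr]\,\Ab(\xb_i^\bullet-\xb_j^\bullet).
\end{equation*}

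The main obstacle, and the step at which the block leverage score structure enters, is establishing the operator-norm bound $\bigl\|\E\bigl[\Sbwt^\top(\Sbwt\Sbwt^\top)^{-1}\Sbwt\bigr]\bigr\|_2 \lesssim r/N$. Since $\Pi_\Sb \coloneqq \Sbwt^\top(\Sbwt\Sbwt^\top)^{\dagger}\Sbwt$ is the orthogonal projection onto the row span of $\Sbwt$, it has rank at most $r$ and therefore $\mathrm{tr}(\E[\Pi_\Sb]) \leq r$, giving an \emph{average} eigenvalue of $r/N$. Upgrading this trace estimate to a matching operator-norm bound requires exploiting the explicit form of Algorithm \ref{alg_1_pseudocode}: with distinct sampled blocks, $\Pi_\Sb$ is diagonal with $(\Pi_\Sb)_{ii}=\mathbf{1}\{\K_{l(i)}\text{ is selected at some trial}\}$, so $\E[\Pi_\Sb]$ is diagonal with entries $1-(1-\Pit_{l(i)})^q$; the rescaling $1/\sqrt{q\Pit_{i_j}}$ together with the misestimation guarantee $\Pit_l \geq \beta\Pi_l$ prevents any single block from having selection probability too close to one in the regime $q\tau \lesssim N$, yielding $\|\E[\Pi_\Sb]\|_2 \lesssim r/(\beta N)$. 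Plugging this back into Fano and tuning the packing radius $\delta^2 \asymp \sigma^2 d/\min\{r,N\}$ so that the mutual information is at most $\tfrac{1}{2}\log M$ yields the claimed prediction-error lower bound; the analogue argument for $r\geq N$ reduces to the matching-optimal rate of the first part.
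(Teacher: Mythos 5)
Your overall route is the same as the paper's: defer the Fano/packing/KL machinery to the proof of \cite[Theorem 1]{PW16} and reduce everything to the operator-norm bound $\big\|\E\big[\Sbwt^\top(\Sbwt\Sbwt^\top)^{-1}\Sbwt\big]\big\|_2\lesssim r/N$. You also correctly identify that this expectation is $\Qb\otimes\Ib_\tau$ with $\Qb$ diagonal and $\Qb_{ii}=1-(1-\Pi_i)^q$, which is exactly the intermediate result in the paper's proof.

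The gap is in your final step. You assert that ``the rescaling $1/\sqrt{q\Pit_{i_j}}$ together with the misestimation guarantee $\Pit_l\geq\beta\Pi_l$ prevents any single block from having selection probability too close to one,'' yielding $\|\E[\Pi_\Sb]\|_2\lesssim r/(\beta N)$. Neither ingredient does this. The rescaling is irrelevant: the projection onto the row span of $\Sbwt$ is invariant to the diagonal factor $\Dbwt$ (it cancels, as the explicit computation shows), so $\E[\Pi_\Sb]$ depends only on which blocks are selected. The misestimation factor $\beta$ only relates $\Pit$ to $\Pi$ from below; it says nothing about how large $\max_i\Pi_i$ can be. Without a further assumption the claimed bound is false: if one block has $\Pi_K=1/2$, then the corresponding diagonal entry is $1-2^{-q}\geq 1/2$, while $r/N=q/K$ can be arbitrarily small for large $K$. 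The trace argument you give only controls the \emph{average} eigenvalue; upgrading it to an operator-norm bound genuinely requires that no single block score dominates. The paper handles this by assuming the leverage scores are $\alpha$-balanced (so that $\max_i\Pi_i\leqslant\alpha/K$, whence $1-(1-\Pi_K)^q\leqslant q\Pi_K\leqslant\alpha q/K=\alpha r/N$), an assumption inherited from \cite{PW16} and stated just before the proof rather than in the corollary itself. Your proof needs this (or an equivalent) hypothesis made explicit; as written, the step that carries all the weight is unsupported.
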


Even though Corollary \ref{sub_opt_cor} considers sampling according to the exact block leverage scores, its proof can be modified to accommodate approximate sampling. Additionally, the above corollary holds for constrained least squares, though we do not explicitly state it, as this is not a focus of the work presented in this paper. From \eqref{sub_opt_error}, it is clear that for a smaller $r$ with $r<N$ the proposed approach produces a less efficient sketch and poorer approximation $\xbt$, though when considering a higher $r$ approaching $N$, there is an improvement in the accuracy of $\xbt$ at the cost of a higher computational load.

\subsection{Expansion Networks}
\label{exp_netw_sec}

The framework we propose emulates the sampling with replacement procedure of Algorithm \ref{alg_1_pseudocode} in distributed CC environments. Even though we focus on $\ell_2$-subspace embedding and descent methods in this paper, the proposed framework applies to any matrix algorithm which utilizes importance sampling with replacement. In contrast to other CC schemes in which RandNLA was used to compress the network, e.g., \cite{CPH20a,CPH20c,RCHV23}, here the networks are expanded according to $\Pi_{\{K\}}$. That is, the computations corresponding to the row-blocks of $\Ab$ are replicated by the servers proportionally to $\Pi_{\{K\}}$. It is unlikely that we can exactly emulate this distribution, as the number of replications per task need to be integers. Instead, we mimic the exact probabilities with an induced distribution $\Pib_{\{K\}}$ through expansion networks, which are determined by $\Ft(t)$ at a prespecified $T$, after which the central server stops receiving computations for that iteration.

The proposed method solves the minimization problem \eqref{appr_opt_sol}, whose approximate solution $\rh_{\{K\}}$ \eqref{appr_distr} determines the number of replicas $r_{\{K\}}$ of each block in our expansion network. We note that \eqref{appr_opt_sol} is a surrogate to the integer program \eqref{int_program}, whose solution can achieve an accurate realizable distribution $\Pib_{\{K\}}$ to $\Pi_{\{K\}}$ through the distributed network, by appropriately replicating the blocks. Unfortunately, the integer program \eqref{int_program} is not always solvable. Nonetheless, when we have an approximation to \eqref{appr_opt_sol} or \eqref{int_program}, through uniform sampling we can minimize with high probability the $\ell_2$-subspace embedding condition for $\Ab$ \eqref{eq_form}, up to a small error, given the integer constraints imposed by the physical system --- $r_i\in\Z_+$ for all $i$ and $R=\sum_{l=1}^Kr_l$ such that $R\approx m$. In the CC context we want $m=R$, i.e., the total number of replicated blocks is equal to the number of servers. Next, we describe the desired induced distribution $\Pib_{\{K\}}$, in order to set up the optimization problem \eqref{appr_opt_sol}.

Assume without loss of generality that $\Pi_j\leqslant\Pi_{j+1}$ for all $j\in\N_{K-1}$, thus $r_j\leqslant r_{j+1}$. The sampling distribution through the expansion network translates to
\begin{equation}
\label{exp_netw_distr}
  \Pib_i \coloneqq \Pr\left[\At_i \text{ is sampled}\right] = r_i/R \approx \Pi_i
\end{equation}
for all $i\in\N_K$ and $R=\sum_{l=1}^Kr_l$, where $\At_i$ is the $i^{th}$ encoded partition of our GC scheme. Our objective is to determine $r_{\{K\}}$ such that $\Pib_i\approx\Pi_i$ for all $i$. Furthermore, for an erasure probability determined by $\phi(t)$ at a specified time $t$ \eqref{eras_prob}, the probability that the computation corresponding to the $i^{th}$ block is sampled with replacement through the erasure channels at time $t$ is
\begin{equation}
\label{sample_er_channel}
  \Pr\left[\text{sample } \At_i \text{ through the channels}\right]=1-\phi(t)^{\rho_i(t)}
\end{equation}
for some $\rho_i(t)\in\R_{>0}$,\footnote{To be realizable, through replications, we need $\rho_i(t)\in\Z_+$.} and the network emulates the sampling distribution $\Pi_{\{K\}}$ exactly when
\begin{equation}
\label{lvg_score_eras_prob}
  \Pi_i=1-\phi(t)^{\rho_i(t)}
\end{equation}
for all $i$.

The replications which take place can be interpreted as the task allocation through a directed bipartite graph $G=(\mathcal{L},\mathcal{R},\mathcal{E})$, where $\mathcal{L}$ and $\mathcal{R}$ correspond to the $K$ encoded partitions $\At_{\{K\}}$ and $m$ servers respectively, where $\text{deg}(x_i)=r_i$ for all $x_i\in\mathcal{L}$ and $\text{deg}(y_j)=1$ for all $y_j\in\mathcal{R}$, with $\{x_i,y_j\}\in\mathcal{E}$ only if the $j^{th}$ server $W_j$ is assigned $\At_i$. This allocation is done only once, before the iterative procedure begins.
\begin{figure}[h]
  \centering
  \includegraphics[scale=.2]{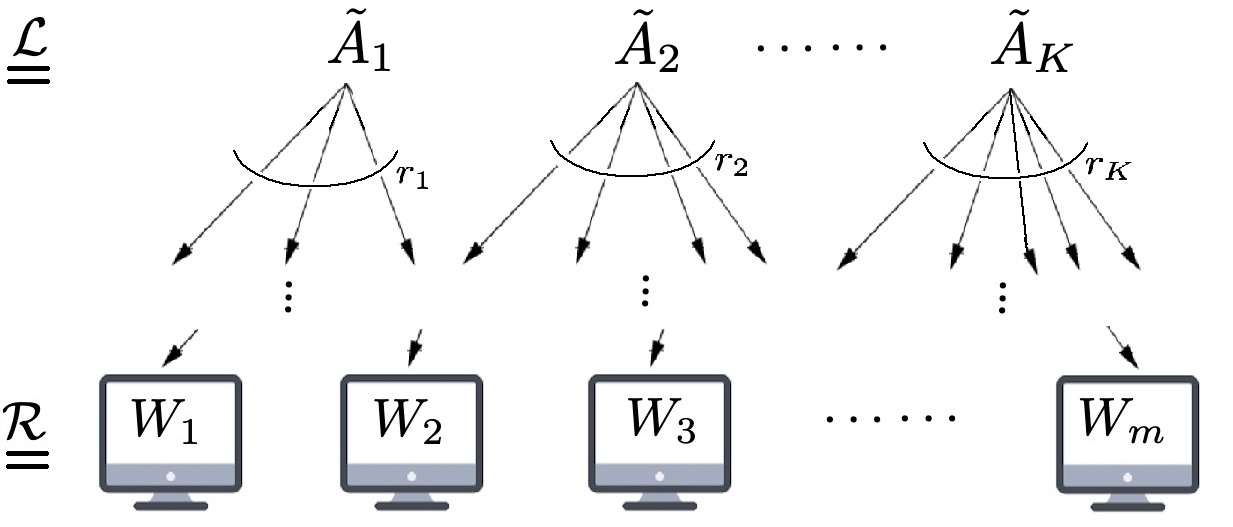}
  \caption{Depiction of an expansion network as a bipartite graph, for $m=\sum_{l=1}^Kr_l$.}
  \label{bip_exp_network}
\end{figure}

Our goal is to determine $r_{\{K\}}$, which minimize the error in the emulated distribution $\Pib_{\{K\}}$. Under the assumption that we have an integer number of replicas per block, from \eqref{sample_er_channel} and \eqref{lvg_score_eras_prob} we deduce that $\Pi_i\approx1-\phi(t)^{r_i}$ for $r_i\in\Z_+$, which lead to the minimization problem
\begin{equation}
  \arg\min_{\substack{r_{\{K\}}\subsetneq\Z_+}} \left\{\Delta_{\Pi,\Pib}\coloneqq\frac{1}{K}\sum_{i=1}^K\big|\Pi_i-\big(1-\phi(t)^{r_i}\big)\big|\right\} = \arg\left\{ \sum_{i=1}^K\min_{r_i\in\Z_+} \Big\{\big|\Pi_i-\big(1-\phi(t)^{r_i}\big)\big|\Big\} \right\} . \label{appr_opt_sol}
\end{equation}
Further note that $\Delta_{\Pi,\Pib}\equiv d_{\Pi,\Pit}$, where $\Pit_{i}=1-\phi(t)^{r_i}$ for each $i\in\N_K$. For our proposed distribution $\Pib_{\{K\}}$, we may have $d_{\Pi,\Pib}\neq\Delta_{\Pi,\Pib}$. By combining \eqref{exp_netw_distr} and \eqref{sample_er_channel}, we then solve for the approximate replications $\rh_{\{K\}}$ at time $t$:
\begin{equation}
\label{appr_distr}
  \Pib_i\approx\Pi_i=1-\phi(t)^{\rho_i(t)} \quad \implies \quad \rh_i=\left\lfloor\frac{\log(1-\Pi_i)}{\log(\phi(t))}\right\rceil=\left\lfloor\rho_i(t)\right\rceil
\end{equation}
which result in the induced distribution $\Pib_i=\rh_i/\Rh$, for $\Rh\coloneqq\sum_{l=1}^K\rh_l$. In our context, we also require that $\Rh\approx m$.

Ideally, the above procedure would result in replication numbers $\rh_{\{K\}}$ for which $\Rh=m$. This though is unlikely to occur, as $\Pi_{\{K\}}$ and $R$ are determined by the data, and $m$ is the fixed number of servers. To address this issue we redefine $\rh_{\{K\}}$ to $\rt_{\{K\}}$ by $\rt_i=\rh_i\rpm\alpha_i$ for $\alpha_i$ small integers such that $\sum_{l=1}^K\rt_l=m$, and $\sum_{l=1}^K|\Pi_l-\rt_l/m|$ is minimal. If $m\gg \Rh$ for a large enough $\tau$, we can set the number of replicas to be $\rt_i\approx\left\lfloor m/\Rh\right\rceil\cdot \rh_i$. Furthermore, the block size $\tau$ can be selected such that $\Rh$ is approximately equal to the system's parameter $m$. We address the fact that $\Rh$ is only approximately equal to $m$ in Subsection \ref{opt_ind_distr_subsec}.

\begin{Lemma}
\label{appr_err_rounding}
The approximation $\rh_{\{K\}}$ according to \eqref{appr_distr} of the minimization problem \eqref{appr_opt_sol} at time $t$, satisfies
$$ \Delta_{\Pi,\Pib}\leqslant\Big(1-\sqrt{\phi(t)}\Big)\cdot\left(\sum_{l=1}^K\phi(t)^{\min\limits_{i\in\N_K}\left\{\rh_i,\rho_i(t)\right\}}\right). $$
\end{Lemma}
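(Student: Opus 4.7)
The plan is to reduce each summand of $\Delta_{\Pi,\Pib}$ to a difference of powers of $\phi(t)$ and then bound it using the nearest-integer rounding property of $\hat r_i$. First, by the definition in equation \eqref{lvg_score_eras_prob}, the exponents $\rho_i(t)$ satisfy $\Pi_i = 1 - \phi(t)^{\rho_i(t)}$ exactly, while the realizable induced distribution uses $r_i = \hat r_i$, giving $1 - \phi(t)^{\hat r_i}$. Consequently,
\begin{equation*}
  \bigl|\Pi_i - (1-\phi(t)^{\hat r_i})\bigr| = \bigl|\phi(t)^{\hat r_i} - \phi(t)^{\rho_i(t)}\bigr|.
\end{equation*}

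Next, I would exploit the fact that $\hat r_i = \lfloor \rho_i(t) \rceil$ is the nearest integer to $\rho_i(t)$, so $|\hat r_i - \rho_i(t)| \leq 1/2$. Writing $m_i := \min\{\hat r_i,\rho_i(t)\}$ and $M_i := \max\{\hat r_i,\rho_i(t)\}$, the difference becomes
\begin{equation*}
  \phi(t)^{m_i} - \phi(t)^{M_i} = \phi(t)^{m_i}\bigl(1 - \phi(t)^{M_i - m_i}\bigr).
\end{equation*}
Because $\phi(t) \in (0,1)$ strictly at any finite time $t$ (the survival function is strictly between $0$ and $1$ under the CC runtime model), the map $x \mapsto \phi(t)^x$ is strictly decreasing, so $M_i - m_i \leq 1/2$ yields $\phi(t)^{M_i - m_i} \geq \sqrt{\phi(t)}$, and therefore each summand is bounded by $\bigl(1-\sqrt{\phi(t)}\bigr)\phi(t)^{m_i}$.

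Finally, I would sum these bounds over $i \in \N_K$ and normalize by $K$ per the definition of $\Delta_{\Pi,\Pib}$, obtaining
\begin{equation*}
  \Delta_{\Pi,\Pib} \leq \frac{1-\sqrt{\phi(t)}}{K}\sum_{i=1}^K \phi(t)^{\min\{\hat r_i,\rho_i(t)\}},
\end{equation*}
which already implies the claim since dropping the $1/K$ factor and replacing $\phi(t)^{m_i}$ by $\phi(t)^{\min_{j\in\N_K} m_j}$ (an upper bound, again by strict monotonicity of $\phi(t)^x$ when $\phi(t)<1$) only loosens the bound. The sole point of care in the argument is the strict monotonicity step, which needs $0 < \phi(t) < 1$; this is automatic in the CC model so there is no serious obstacle, and the remainder is a direct algebraic manipulation.
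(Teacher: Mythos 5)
Your proposal is correct and follows essentially the same route as the paper's proof: both reduce each summand to $\bigl|\phi(t)^{\hat r_i}-\phi(t)^{\rho_i(t)}\bigr|$, factor out the smaller exponent, and use $|\hat r_i-\rho_i(t)|\leqslant 1/2$ together with monotonicity of $x\mapsto\phi(t)^x$ to bound the remaining factor by $1-\sqrt{\phi(t)}$. The only cosmetic difference is that you unify the floor and ceiling cases via $\min/\max$ where the paper treats them separately, and you retain the $1/K$ normalization, which makes your bound slightly tighter than the stated one.
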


\begin{proof}
We break the proof into the cases where we round $\rho_i(t)$ to both the closest integers above and below. In either case, we know that $\big(\rho_i(t)-\rh_i(t)\big)\in[-1/2,1/2]$, for each $i\in\N_K$. Denote the respective individual summands of $\Delta_{\Pi,\Pib}$ by $\Delta_i$. In the case where $r_i=\lfloor\rho_i(t)\rfloor$, we have $\rho_i(t)=\rh_i+\eta$ for $\eta\in[0,1/2]$, hence
\begin{align*}
  \Delta_i &= \left|\big(1-\phi(t)^{\rho_i(t)}\big)-\big(1-\phi(t)^{\rh_i}\big)\right| \\
  &= \left|\phi(t)^{\rh_i}-\phi(t)^{\rho_i(t)}\right| \\
  &= \left|\phi(t)^{\rh_i}-\phi(t)^{r_i+\eta}\right| \\
  &= \left|\phi(t)^{\rh_i}\cdot\big(1-\phi(t)^\eta\big)\right| \\
  &\leqslant \left|\phi(t)^{\rh_i}\cdot\big(1-\phi(t)^{1/2}\big)\right| \\
  &= \phi(t)^{\rh_i} \cdot \Big(1-\sqrt{\phi(t)}\Big) .
\end{align*}
Similarly, in the case where $r_i=\lceil\rho_i(t)\rceil$ we have $\rh_i=\rho_i(t)+\eta$ for $\eta\in[0,1/2]$, and
\begin{align*}
  \Delta_i \leqslant \phi(t)^{\rho_i(t)} \cdot \Big(1-\sqrt{\phi(t)}\Big) .  
\end{align*}
Considering all summands, it follows that
$$ \Delta_{\Pi,\Pib} = \sum_{l=1}^K \Delta_l \leqslant \sum_{l=1}^K\left(\phi(t)^{\min\limits_{i\in\N_K}\left\{\rh_i,\rho_i(t)\right\}} \cdot \Big(1-\sqrt{\phi(t)}\Big)\right) . $$
\end{proof}

We note that all terms involved in the upper bound of Lemma \ref{appr_err_rounding} are positive and strictly less than one. Furthermore, for a larger $t$ we have a smaller $\phi(t)$, while for a smaller $t$ we have a smaller $\rh_i$ for each $i$. This bound further corroborates the importance of the hyperparameter $t$ and the distribution $F(t)$, in designing expansion networks.

The replication of blocks which takes place, can be described through a corresponding $R\tau\times K\tau$ binary \textit{expansion matrix}:
\begin{equation}
\label{expansion_matrix}
  \Ebwt = \Eb\otimes\Ib_\tau = \begin{pmatrix} \bold{1}_{r_1\times 1} & & & \\ & \bold{1}_{r_2\times 1} & & \\ & & \ddots & \\ & & & \bold{1}_{r_K\times 1} \end{pmatrix} \otimes\Ib_\tau
\end{equation}
where $\Eb\in\{0,1\}^{R\times K}$ is the adjacency matrix of the bipartite graph $G$ portrayed in Figure \ref{bip_exp_network} (up to a permutation of the rows/server indices). It follows that $(\Ebwt\cdot\Ab,\Ebwt\cdot\bb)$ are comprised of replicated blocks of the partitioning in \eqref{part_matrix}, with replications according to $r_{\{K\}}$.

For the proposed networks, the multiplicative misestimation factor in Theorem \ref{subsp_emb_thm_lvg} is $\beta_{\Pib}=\min_{i\in\N_K}\{\Pi_i/\Pib_i\}\leqslant1$. In the case where $\Rt=\sum_{l=1}^K\rt_l>m$ and $\Pit_i\coloneqq\rt_i/\Rt$, Algorithm \ref{alg_replicas_mod_R} takes $\rt_{\{K\}}$ as an input and determines $r_{\{K\}}$ such that $R=\sum_{l=1}^Kr_l=m$. The updated distribution $\Pib_{\{K\}}$ where $\Pib_i=r_i/R$ for each $i$, also has a more accurate misestimation factor, i.e., $\beta_{\Pib}>\beta_{\Pit}$. To establish sampling guarantees in relation to $d_{\Pi,\Pib}$, one would need to invoke an additive approximation error to the scores, i.e., $\Pit_i\leqslant \Pi_i+\epsilon$ for all $i$ where $\epsilon\geqslant0$ is a small constant \cite{CLMMPS15,DMMW12}. In our distributed networks, the additive error would be $\epsilon_{\Pib}=\max_{i\in\N_K}\{|\Pi_i-\Pib_i|\}$.

\subsection{Optimally Induced Distributions}
\label{opt_ind_distr_subsec}

Recall that \eqref{appr_opt_sol} is a surrogate to the integer program
\begin{equation}
\label{int_program}
  r^{\star}_{\{K\}}=\arg\min_{\substack{r_1,\ldots,r_K\in\Z_+\\ R=\sum_{l=1}^Kr_l}} \bigg\{d_{\Pi,\Pis}=\frac{1}{K}\sum_{i=1}^K\big|\Pi_i-\overbrace{r_i/R}^{\Pis_i}\big|\bigg\}
\end{equation}
where $R$ approximates the total number of servers. The solution $r^{\star}_{\{K\}}$ approximates the block leverage score distribution $\Pi_{\{K\}}$ through expansion networks. Similar to $\Delta_{\Pi,\Pib}$ from \eqref{appr_opt_sol}, the distortion metric $d_{\Pi,\Pis}$ is a measure of closeness between the distributions $\Pi_{\{K\}}$ and $\Pib_{\{K\}}$, under the network imposed constraints. In the case where $\Pi_{\{K\}}\subsetneq(0,1)\backslash\Q_+$, i.e., $\Pi_{\{K\}}$ are not necessarily all rational, the integer constraints of the physical network may make $\{r_i^\star/R\}_{i=1}^K$ differ from $\Pi_{\{K\}}$. The integer program \eqref{int_program} cannot be solved exactly when $\Pi_{\{K\}}\subsetneq(0,1)\backslash\Q_+$, as we can always get finer approximations, e.g., through a continued fraction approximation. This is specific to the ending time $T$, which is not made explicit in \eqref{int_program}, when considering erasures over the communication channels according to \eqref{eras_prob}, which $T$ we do not include in \eqref{int_program}, to simplify notation. Furthermore, \eqref{int_program} can also be considered for centralized distributed settings which differ from the system model proposed in \cite{LLPPR17}. We note that solvers of \eqref{int_program} exist when $R$ is fixed and we remove the constraint $R=\sum_{l=1}^Kr_l$. The proof of Corollary \ref{cor_exact_sol} below is a constructive solution of \eqref{int_program} when $\Pi_{\{K\}}\subsetneq[0,1]\cap\Q_+$, in which case $\{r_i^\star/R\}_{i=1}^K$ can be made exactly equal to $\Pi_{\{K\}}$, a condition we call \textit{perfect emulation}.

\begin{Prop}
\label{perf_emulation_prop}
A perfect emulation occurs when $d_{\Pi,\Pis}=0$. This is possible if and only if $\Pi_i\in[0,1]\cap\Q_+$ for all $i$ and the denominators of $\Pi_{\{K\}}$ in reduced form are factors of $R$, i.e., $R\cdot\Pi_i\in\Z_+$.
\end{Prop}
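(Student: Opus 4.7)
The plan is to unfold the definition of $d_{\Pi,\Pis}$ and argue both directions of the biconditional using only elementary rational arithmetic; no new machinery beyond the setup in \eqref{int_program} is needed. First, for necessity, I would assume $d_{\Pi,\Pis}=0$ and invoke the non-negativity of each summand $|\Pi_i - r_i^\star/R|$ to conclude $\Pi_i = r_i^\star/R$ pointwise for every $i\in\N_K$. Since $r_i^\star, R \in \Z_+$, this ratio is a positive rational, so $\Pi_i \in [0,1]\cap\Q_+$, and $R\Pi_i = r_i^\star \in \Z_+$. Writing $\Pi_i = p_i/q_i$ in lowest terms, coprimality of $p_i,q_i$ together with Euclid's lemma then gives $q_i \mid R$, which is the divisibility condition on the reduced denominator.

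For sufficiency, I would construct an explicit feasible point of the integer program \eqref{int_program} by setting $r_i^\star := R\Pi_i$. Membership in $\Z_+$ is immediate from the hypothesis $R\Pi_i \in \Z_+$. The nontrivial check is the summation constraint: since $\Pi_{\{K\}}$ is a probability distribution, $\sum_{l=1}^K r_l^\star = R\sum_{l=1}^K \Pi_l = R$, so the assignment is indeed feasible. With this choice, $\Pis_i = r_i^\star/R = \Pi_i$ for every $i$, hence every summand of $d_{\Pi,\Pis}$ vanishes and $d_{\Pi,\Pis}=0$, establishing perfect emulation.

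There is no substantive obstacle in this proof; the argument is essentially a definition-chase. The only subtlety worth being explicit about is that the summation constraint $R = \sum_l r_l^\star$ appearing in \eqref{int_program} is automatically respected by the candidate $r_i^\star = R\Pi_i$, precisely because $\Pi_{\{K\}}$ sums to one --- so the divisibility hypothesis on the individual denominators is in fact the only nontrivial requirement on the data, with the remaining structural constraint of the integer program coming for free.
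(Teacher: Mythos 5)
Your proposal is correct and follows essentially the same definition-chasing route as the paper: both directions reduce to the observation that $\Pis_i=r_i^\star/R$ is a ratio of positive integers, with sufficiency handled by the explicit assignment $r_i^\star=R\,\Pi_i$. If anything, your write-up is slightly more complete than the paper's, since you explicitly verify the feasibility constraint $\sum_{l}r_l^\star=R$ (which follows from $\sum_l\Pi_l=1$), a check the paper's proof leaves implicit.
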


\begin{proof}
If $d_{\Pi,\Pis}=0$, then $\Pi_i=\Pis_i$ for all $i\in\N_K$, thus $\Pi_{\{K\}}$ and $\Pis_{\{K\}}$ are the same sampling distributions. For the reverse direction, assume that for all $i$ we have $\Pi_i=a_i/b_i$ for coprime integers $a_i,b_i\in\Z_+$, and that $R=\mu_i b_i$ for some $\mu_i\in\Z_+$; thus $R\cdot\Pi_i=\mu_i a_i\in\Z_+$. Let $r_i=\mu_i a_i$. It follows that $\Pis_i=\frac{r_i}{R}=\frac{\mu_i a_i}{\mu_i b_i}=\frac{a_i}{b_i}=\Pi_i$ for all $i$, hence $d_{\Pi,\Pis}=0$. Now, assume for a contradiction that there is a $j\in\N_K$ for which $\Pi_j\in(0,1)\backslash\Q_+$. Then, by definition, $\Pi_j$ cannot be expressed as a fraction $r_j/R$ for $r_j,R\in\Z_+$, thus $d_{\Pi,\Pis}\geqslant|\Pi_j-\Pis_j|>0$.
\end{proof}

\begin{Cor}
\label{cor_exact_sol}
When $\Pi_{\{K\}}\subsetneq[0,1]\cap\Q_+$, we can solve \eqref{int_program}, so that $d_{\Pi,\Pis}=0$.
\end{Cor}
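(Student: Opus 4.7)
The plan is to construct $R$ and $r_{\{K\}}$ explicitly, invoking the characterization provided by Proposition \ref{perf_emulation_prop}. Since we are assuming $\Pi_{\{K\}} \subsetneq [0,1] \cap \Q_+$, each block leverage score can be written in reduced form as $\Pi_i = a_i/b_i$ with coprime $a_i, b_i \in \Z_+$. Proposition \ref{perf_emulation_prop} tells us that perfect emulation is achieved precisely when the denominators $b_{\{K\}}$ all divide $R$, so the natural choice is to set $R = \lcm(b_1, b_2, \ldots, b_K)$, or any positive integer multiple thereof.

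Given this choice, define $\mu_i \coloneqq R/b_i \in \Z_+$ for each $i \in \N_K$, and let $r_i \coloneqq \mu_i a_i = R \cdot \Pi_i \in \Z_+$. First I would verify that the resulting $r_{\{K\}}$ is feasible for the integer program \eqref{int_program}, i.e., that $\sum_{l=1}^K r_l = R$. This is immediate from the fact that $\Pi_{\{K\}}$ is a probability distribution: $\sum_{l=1}^K r_l = R \cdot \sum_{l=1}^K \Pi_l = R \cdot 1 = R$. Next, I would compute $\Pis_i = r_i/R = \Pi_i$ for every $i$, which gives $d_{\Pi,\Pis} = \frac{1}{K}\sum_{i=1}^K |\Pi_i - \Pis_i| = 0$. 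Since the objective in \eqref{int_program} is non-negative, this attains its global minimum, so $r_{\{K\}}$ is an optimal solution.

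There is no real obstacle here, since the construction is a direct application of the ``if'' direction of Proposition \ref{perf_emulation_prop}. The only subtlety worth flagging is that $R$ obtained via $\lcm$ need not match the system parameter $m$ exactly; however, the statement of the corollary only asks for solvability of \eqref{int_program} with $d_{\Pi,\Pis} = 0$, and $R$ is a free integer in the optimization (it is defined as $R = \sum_{l=1}^K r_l$ rather than fixed in advance). If one additionally required $R \approx m$, this could be addressed by taking an integer multiple of the $\lcm$, or by choosing the block-size $\tau$ accordingly, as discussed in Subsection \ref{exp_netw_sec}, but this is not part of the claim being proved.
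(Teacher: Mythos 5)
Your proposal is correct and follows essentially the same route as the paper's proof: set $R=\lcm(b_1,\ldots,b_K)$, take $r_i=\mu_i a_i$ with $\mu_i=R/b_i$, and observe $\Pis_i=\Pi_i$ via Proposition \ref{perf_emulation_prop}. Your explicit check that $\sum_{l=1}^K r_l = R$ (using $\sum_l \Pi_l = 1$) is a small but welcome addition that the paper leaves implicit.
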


\begin{proof}
From Proposition \ref{perf_emulation_prop}, the smallest $R$ in order to attain $r_{\{K\}}$ for which $d_{\Pi,\Pis}=0$ when considering $\Pi_i=a_i/b_i$ in reduced form, is the least common multiple $R=\lcm(b_1,\ldots,b_K)$. For each $i\in\N_K$, we then have $R=\mu_i b_i$ for $\mu_i\in\Z_+$, and $r_i=\mu_i a_i$. Hence $\Pis_i=\frac{r_i}{R}=\frac{\mu_i a_i}{\mu_i b_i}=\frac{a_i}{b_i}=\Pi_i$, for which $d_{\Pi,\Pis}=0$.
\end{proof}

\begin{Lemma}
If for a set of integers $\rt_{\{K\}}$ we have $\Rt=\sum_{l=1}^K\rt_l$, $m=\Rt$, and $\Pit_i=\rt_i/\Rt$ for all $i\in\N_K$, then:
\begin{equation*}
\label{bds_distortion_lower}
  d_{\Pi,\Pit} \geqslant \frac{1}{m}\cdot\min_{i\in\N_K}\left\{\bfloor{|m\cdot\Pi_i-\rt_i|}\right\} 
\end{equation*}
and
\begin{equation}
\label{bds_distortion_upper}
  d_{\Pi,\Pit} \leqslant \frac{1}{m}\cdot\max_{i\in\N_K}\left\{\bceil{|m\cdot\Pi_i-\rt_i|}\right\}.
\end{equation}
\end{Lemma}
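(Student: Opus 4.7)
The plan is to rewrite the distortion in the form $d_{\Pi,\Pit}=\frac{1}{Km}\sum_{i=1}^{K}\lvert m\Pi_i-\rt_i\rvert$, which follows immediately from the definition $d_{\Pi,\Pit}=\frac{1}{K}\sum_{i=1}^{K}\lvert\Pi_i-\Pit_i\rvert$ together with the hypothesis $\Pit_i=\rt_i/m$ (since $\Rt=m$). This reduces the claim to a pair of elementary inequalities relating the arithmetic mean of the $K$ non-negative summands $x_i\coloneqq\lvert m\Pi_i-\rt_i\rvert$ to their minimum, maximum, and nearest integers.

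For the lower bound, I would invoke the trivial fact that the arithmetic mean of $K$ non-negative reals is at least their minimum, giving
\[
d_{\Pi,\Pit}\;=\;\frac{1}{Km}\sum_{i=1}^{K}x_i\;\geqslant\;\frac{1}{m}\min_{i\in\N_K}x_i.
\]
Then, because $\lfloor y\rfloor\leqslant y$ for every $y\geqslant 0$, applying the floor to the minimizing index only decreases the value, so $\min_i x_i\geqslant \min_i\lfloor x_i\rfloor$, which yields the stated lower bound. For the upper bound I would argue dually: the mean is at most the maximum, and $y\leqslant\lceil y\rceil$ gives
\[
d_{\Pi,\Pit}\;=\;\frac{1}{Km}\sum_{i=1}^{K}x_i\;\leqslant\;\frac{1}{m}\max_{i\in\N_K}x_i\;\leqslant\;\frac{1}{m}\max_{i\in\N_K}\lceil x_i\rceil,
\]
which is exactly \eqref{bds_distortion_upper}.

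There is no real obstacle here; the proof is essentially a two-line manipulation. The only subtle point worth stating explicitly is that the hypothesis $m=\Rt=\sum_l\rt_l$ is what allows the common denominator $m$ to be pulled out of each $\lvert\Pi_i-\rt_i/\Rt\rvert$, so that the quantities $\lvert m\Pi_i-\rt_i\rvert$ appearing in the floor/ceiling are well-defined non-negative reals. After that, the bounds are immediate from the mean-vs-extremum inequalities and the monotonicity of $\lfloor\cdot\rfloor$ and $\lceil\cdot\rceil$.
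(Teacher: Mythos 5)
Your proof is correct and follows essentially the same route as the paper's: both arguments reduce to the observation that each summand $\frac{1}{m}\lvert m\Pi_i-\rt_i\rvert$ is sandwiched between $\frac{1}{m}\min_j\lfloor\lvert m\Pi_j-\rt_j\rvert\rfloor$ and $\frac{1}{m}\max_j\lceil\lvert m\Pi_j-\rt_j\rvert\rceil$, so the average over $i\in\N_K$ is as well. The only cosmetic difference is that you pass through the mean-vs-extremum inequality explicitly before applying floor/ceiling monotonicity, whereas the paper bounds each summand directly; the content is identical.
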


\begin{proof}
Let $\dt_i=|\Pi_i-\Pit_i|=|\Pi_i-\rt_i/m|$ for each $i$, hence
\begin{equation*}
  \rt_L \coloneqq \frac{1}{m}\cdot\min_{i\in\N_K}\left\{\bfloor{|m\cdot\Pi_i-\rt_i|}\right\} \leqslant \dt_i
\end{equation*}
and
\begin{equation*}
  \rt_U \coloneqq \frac{1}{m}\cdot\max_{i\in\N_K}\left\{\bceil{|m\cdot\Pi_i-\rt_i|}\right\} \geqslant \dt_i 
\end{equation*}
for all $i\in\N_K$. By rescaling the sum over all $\dt_{\{K\}}$ by $1/K$, we get
\begin{equation*}
 \rt_L = \frac{K\cdot \rt_L}{K} \leqslant \frac{1}{K}\cdot\sum_{i=1}^K \dt_i \leqslant \frac{K\cdot \rt_U}{K} = \rt_U
\end{equation*}
which completes the proof.
\end{proof}

Next, we give a simple approximation to \eqref{int_program}, for when we do not consider the erasure channel characterization through \eqref{lvg_score_eras_prob}; nor an ending time $T$. Given $\Pi_{\{K\}}$, the replication numbers are $\rt_i = \left\lfloor\Pi_i/\Pi_1\right\rceil$ and $\Rt=\sum_{l=1}^K\rt_l$. Further note that for more accurate approximations, we can select an integer $\nu>1$ and take $\rt_i = \left\lfloor\nu\cdot\Pi_i/\Pi_1\right\rceil$. From the proof of Corollary \ref{cor_exact_sol}, it follows that if $\Pi_{\{K\}}\subsetneq[0,1]\cap\Q_+$ and $\nu=\mu_1 a_1$, we get $\rt_i = \Pi_i/\Pi_1 = \mu_i a_i\in\Z_+$, which solves \eqref{lvg_score_eras_prob}. The drawback of designing an expansion network with this solution, is that as $\nu$ increases; $\Rt$ also increases.

We drop the constraint $R=\sum_{l=1}^Kr_l$ of \eqref{int_program} and the assumption that $m=R$, by proposing a procedure in Algorithm \ref{alg_replicas_mod_R} for determining $r_{\{K\}}$ from a given set $\rt_{\{K\}}$ (e.g., those proposed in \eqref{appr_distr}) to get the induced distribution $\{\Pib_i=r_i/m\}_{i=1}^K$, where $m=\sum_{l=1}^K r_l$. In Algorithm \ref{alg_replicas_mod_R}, $\chi=1$ and $\chi=0$ indicate whether $\Rt\geqslant m$ or $\Rt<m$ respectively.

\begin{algorithm}[h]
\caption{Determine $r_{\{K\}}$ from $\rt_{\{K\}}$}
\label{alg_replicas_mod_R}  
\SetAlgoLined
{\small
  \KwIn{$m$, $\Pi_{\{K\}}$, $\rt_{\{K\}}$, $\Rt=\sum_{i=1}^K\rt_i$}
  \KwOut{replication numbers $r_{\{K\}}$}
  \textbf{Initialize:} $\dt_{\{K\}}=\left\{\dt_i\coloneqq\Pi_i-\rt_i/m\right\}_{i=1}^K$, $r_{\{K\}}=\rt_{\{K\}}$, $R=\Rt$, $\chi=1$, $\tilde{j}=0$\\
  \If{$R<m$}
    {
      $\chi\gets 0$ \Comment{$\chi$ indicates: $\Rt\geqslant m$ or $\Rt<m$}\\
      $\dt_{\{K\}}\gets \left\{-\dt_i\right\}_{i=1}^K$\\
    }
  \While{$R\neq m$}
    {
      $j\gets \arg\min\limits_{i\in\N_K}\big\{\dt_{\{K\}}\big\} \qquad (\blacklozenge)$\\
        \If{$(-1)^{\chi+1}\cdot\left(\Pi_j-\frac{1}{m}\left(r_j+(-1)^{\chi}\right)\right)>0$ \textbf{and} $\tilde{j} \equiv j$}
          {
            $\dt_j\gets 1$
          }
        \Else
          {
            $r_j\gets r_j+(-1)^{\chi}$\\
            $R\gets R+(-1)^{\chi}$\\
            $\dt_j\gets(-1)^{\chi+1}\cdot\left(\Pi_j-r_j/m\right)$
          }
       $\tilde{j}\gets j$
    }
}
\end{algorithm}

\begin{Rmk}
\label{rmk_repl_nums}
The objective of Algorithm \ref{alg_replicas_mod_R} is to reduce the upper bound \eqref{bds_distortion_upper} when $m<\Rt$, while guaranteeing that $\sum_{l=1}^Kr_l=m$. In practice, the more concerning and limiting case is when $m<\Rt$. The bottleneck of Algorithm \ref{alg_replicas_mod_R} is retrieving the index $j$ in the \textnormal{\textbf{while}} loop, which takes $O(K)$ time. In order to reduce the number of instances we solve $(\blacklozenge)$, we ensure that we only reduce the replica numbers $\rh_{\{K\}}$ in the case where $R>m$; and increase them when $R<m$, by the inner \textnormal{\textbf{if}} statement. Moreover, this is carried out once before sharing the replicated blocks. The more practical and realistic case is when $R>m$, as we can get a closer approximation with a greater $R$, and $\lcm(b_1,\ldots,b_K)$ will likely be large when $\Pi_{\{K\}}\subsetneq[0,1]\cap\Q_+$. We note that the integers $\rh_{\{K\}}$ of \eqref{appr_distr} and their sum $\Rh$ are a byproduct of the prespecified ending time $t\gets T$, the mother runtime distribution $F(t)$, and the block size $\tau$, which can be selected so that $\Rh>m$.
\end{Rmk}

\begin{Prop}
\label{prop_reduct_alg}  
Assume we are given $\rt_{\{K\}}$ (not necessarily according to \eqref{appr_distr}), for which $m<\Rt=\sum_{l=1}^K\rt_l$. Denote by $\Pit_{\{K\}}$ the corresponding sampling distribution $\left\{\Pit_i=\rt_i/\Rt\right\}_{i=1}^K$, for which $d_{\Pi,\Pit}\leqslant\tilde{\epsilon}$. Then, the output $r_{\{K\}}$ of Algorithm \ref{alg_replicas_mod_R} produces an induced distribution $\{\Pib_i=r_i/m\}_{i=1}^K$ which satisfies $d_{\Pi,\Pib}< d_{\Pi,\Pit}\leqslant\tilde{\epsilon}$.
\end{Prop}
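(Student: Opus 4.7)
The plan is to analyze the iterative refinement performed by Algorithm \ref{alg_replicas_mod_R} and to show the induced distribution $\Pib_{\{K\}}$ strictly improves upon $\Pit_{\{K\}}$ as an approximation to $\Pi_{\{K\}}$. First I would verify termination and well-posedness of the output: each iteration of the while loop either decrements $R$ by one (else branch) or marks an index as temporarily ineligible for the argmin selection by setting $\dt_j\gets 1$ (if branch), and the revisit guard $\tilde{j}\equiv j$ prevents consecutive markings of the same index without an intervening decrement elsewhere. Hence after at most $\Rt-m+K$ iterations the loop exits with $R=m$ and $r_{\{K\}}\in\Z_+^K$ summing to $m$, so $\Pib_{\{K\}}$ is a valid sampling distribution.

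Second, I would introduce the running signed excess $E_i^{(k)}=r_i^{(k)}/m-\Pi_i$, so that the greedy argmin (smallest $\dt_i$) always selects the index $j$ with the largest $E_j$. Since $\sum_i E_i^{(k)}=R^{(k)}/m-1>0$ strictly whenever $R^{(k)}>m$, a positive-excess index always exists prior to termination. At each decrement the chosen $E_j$ decreases by $1/m$, which strictly reduces $|E_j|$ whenever $E_j>1/(2m)$. The inner if-condition $\Pi_j>(r_j-1)/m$ is precisely the detector for the delicate regime $0<E_j<1/m$ in which a decrement would push $r_j/m$ below $\Pi_j$ and potentially inflate $|E_j|$; combined with the revisit guard, it ensures the algorithm defers such counterproductive steps in favor of another candidate with larger positive excess.

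Third, I would relate the terminal sum back to $d_{\Pi,\Pit}$ via the identity $\rt_i/m=(\Rt/m)\,\Pit_i$ together with $\sum_i(\rt_i/m-\Pi_i)=(\Rt-m)/m>0$. The $\Rt-m$ decrements executed by the algorithm absorb exactly this scaling excess through indices with positive $E_j$, and the safeguards prevent bad overshoots, so the cumulative reduction in $\sum_i|\Pi_i-r_i/m|$ strictly outweighs the bias introduced by replacing the denominator $\Rt$ with $m$. Dividing by $K$ yields the required $d_{\Pi,\Pib}<d_{\Pi,\Pit}\leqslant\tilde{\epsilon}$.

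The main obstacle is rigorously ruling out any genuine decrement that strictly inflates some $|E_j|$ without a compensating reduction elsewhere. The inner if together with the revisit guard are designed exactly for this purpose, but verifying their sufficiency --- especially near termination where a small unavoidable overshoot may occur --- requires a careful per-iteration invariant. A natural formalization is a potential function tracking the decomposition $\sum_i|E_i|=\sum_i\max(0,E_i)+\sum_i\max(0,-E_i)$ into positive and negative parts, showing it is monotone non-increasing across iterations and strictly decreasing at least once before termination, with the strict decrease attributable to the initial step in which the largest $E_j$ is comfortably greater than $1/(2m)$ because $\sum_i E_i^{(0)}=(\Rt-m)/m>0$.
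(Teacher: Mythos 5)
Your proposal has a genuine gap at exactly the step you flag as the ``main obstacle,'' and it is not a technicality: the quantity your bookkeeping controls is not the quantity in the proposition. You track $\sum_i|E_i|=\sum_i\left|\Pi_i-r_i/m\right|$ with the denominator fixed at $m$ throughout (which does match the algorithm's initialization $\dt_i=\Pi_i-\rt_i/m$), so your potential starts at $\sum_i\left|\Pi_i-\rt_i/m\right|$. The hypothesis, however, bounds $K\,d_{\Pi,\Pit}=\sum_i\left|\Pi_i-\rt_i/\Rt\right|$, and since $\Rt>m$ these two starting values differ --- the former can be substantially larger. Your claim that the $\Rt-m$ decrements ``absorb the scaling excess'' so that the cumulative reduction ``strictly outweighs the bias'' from replacing $\Rt$ by $m$ is precisely the content of the proposition; you assert it rather than prove it. In addition, your per-step estimate only gives a strict decrease of $|E_j|$ when $E_j>1/(2m)$, and the inner \textbf{if} together with the revisit flag only defers a counterproductive decrement, so near termination a forced step can increase some $|E_j|$; the positive/negative-part potential you propose is not shown to be monotone in that regime.

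The paper's proof takes a different route that dissolves the denominator-bias issue rather than confronting it. It compares, for the single modified index $j$, the quantity $\left|\Pi_j-\frac{r_j-1}{R-1}\right|$ against $\left|\Pi_j-\frac{r_j}{R}\right|$ using the \emph{running} normalizer $R$ that decrements in lockstep with $r_j$: since $R$ starts at $\Rt$ and ends at $m$, the tracked distortion begins at exactly $K\,d_{\Pi,\Pit}$ and terminates at exactly $K\,d_{\Pi,\Pib}$, and the elementary inequality $\frac{r_j-1}{R-1}<\frac{r_j}{R}$ (valid because $r_j<R$), combined with the \textbf{if}-statement's guarantee that the decrement does not overshoot $\Pi_j$, yields a strict per-step decrease at each \textbf{else} iteration. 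To salvage your fixed-denominator framework you would need to prove the bridging inequality $\sum_i\left|\Pi_i-r_i/m\right|<\sum_i\left|\Pi_i-\rt_i/\Rt\right|$ directly, which your sketch does not do; absent that, the argument does not reach the stated conclusion $d_{\Pi,\Pib}<d_{\Pi,\Pit}$.
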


\begin{proof}
The case where $\Rt>m$, corresponds to $\chi=1$, in which case Algorithm \ref{alg_replicas_mod_R} returns $r_{\{K\}}$ for which $r_i\leqslant\rt_i$ for all $i\in\N_K$. In this case, the optimization problem $(\blacklozenge)$ assigns to $j$ a partition index for which $\Pi_j<r_j/m$. Since $\dt_j=\Pi_j-r_j/m<0$, we have $\Pi_j<r_j/m$. The \textbf{if} statement guarantees that we did not produce an $r_j$ for which $\Pi_j>r_j/m$, when we previously had $\Pi_j<\rt_j/\Rt$ (or $\Pi_j<r_j/R$ after a reassignment of $\rt_j$). Along with the fact that $\frac{r_j-1}{R-1}<\frac{r_j}{R}$, it follows that for the updated difference $d_j'$:
\begin{align*}
  \left|d_j'\right| &= \left|\Pi_j-\frac{r_j-1}{R-1}\right|\\
  &= \frac{r_j-1}{R-1}-\Pi_j\\
  &< \frac{r_j}{R}-\Pi_j\\
  &= \left|\Pi_j-\frac{r_j}{R}\right|\\
  &= \left|\dt_j\right|
\end{align*}
i.e., at each iteration of the \textbf{else} statement, we decrease the distortion metric, thus $d_{\Pi,\Pib}< d_{\Pi,\Pit}\leqslant\tilde{\epsilon}$.

The \textbf{else} statement is carried out $\Rt-m$ times, producing $r_{\{K\}}$ for which $\sum_{l=1}^Kr_l=\Rt-(\Rt-m)=m$, hence the normalizing integer for the distribution $\Pib_{\{K\}}$ is $R=m$.
\end{proof}

By incorporating Algorithm \ref{alg_replicas_mod_R} into the proposed GC method, our procedure is more versatile in comparison to other methods when it comes to the code design parameters $m$ and $q$. Other methods require a decoding recovery threshold of $q$ responses in order to apply their decoding step, and the pre-selected parameters need to satisfy $(m-q+1)\mid m$ \cite{TLDK17,HASH17,CMH21}. In exact GC schemes, the average replication number of sub-datasets depends on the number of stragglers it can tolerate, as such codes are maximum distance separable. Specifically, to tolerate $(m-q)$ stragglers, each data block needs to be replicated at least $(m-q+1)$ times across the servers. In contrast, our method can be deployed when this is condition is not met, and does not require a recovery threshold.

\subsection{GC with Leverage Score Sampling}
\label{sk_lin_regr}
 
Next, we derive the server computations of our GC scheme, so that the central server retrieves the gradient of the modified objective function $L_\Sb(\Sbwt_{[s]},\Ab,\bb;\xb^{[s]})=\|\Sbwt_{[s]}(\Ab\xb^{[s]}-\bb)\|_2^2$:
\begin{equation}
\label{appr_gr}
  \gh^{[s]}\coloneqq\nabla_\xb L_\Sb(\Sbwt_{[s]},\Ab,\bb;\xb^{[s]})
\end{equation}
for $\Sbwt_{[s]}$ the sketching matrix according to Algorithm \ref{alg_1_pseudocode} and $\xb^{[s]}$ our current parameters update; both at iteration $s$, to iteratively approximate the solution $\xb^\star$ to \eqref{x_star_lr}. Furthermore, we show that with a diminishing step-size, our updates $\xb^{[s]}$ converge in expectation to the optimal solution $\xb^\star$, and the expected regret of the least squares objective converges to zero. First though, in Algorithm \ref{alg_SD_iter_sketching} we provide the pseudocode of how SD is performed with iterative sketching. This corresponds to the `Iterative Procedure at Iteration $s$' in Algorithm \ref{app_GC_alg}.

\begin{algorithm}[h]
\caption{Steepest Descent with Iterative Sketching}
\label{alg_SD_iter_sketching}
\SetAlgoLined
  \KwIn{$\Ab\in\R^{N\times d}$, $\bb\in\R^N$, $\xb^{[0]}\in\R^d$, $\tau$, $q=\frac{r}{\tau}>\frac{d}{\tau}$}
  \KwOut{$\xbh\in\R^d$ s.t. $\Ab\xbh\approx\bb$}
  \For{$s=0,1,2,\ldots$}
    {
    \begin{itemize}[label={--}]
      \item  run Algorithm \ref{alg_1_pseudocode}, to obtain $\Sbwt_{[s]}\in\R^{r\times N}$\\
      $\quad \triangleright$ approximate $\Pit_{\{K\}}$ {only at} $s=0$
      \item $\Abwh_{[s]}\gets\Sbwt_{[s]}\cdot\Ab$
      \item $\bbwh_{[s]}\gets\Sbwt_{[s]}\cdot\bb$
      \item {\footnotesize$\gh^{[s]} \gets \nabla_\xb L_{ls}\big(\Abwh_{[s]},\bbwh_{[s]};\xb^{[s]}\big) \equiv \nabla_\xb L_\Sb\big(\Sbwt_{[s]},\Ab,\bb;\xb^{[s]}\big)$}\\
      \item \textbf{Select:} step-size $\xi_s>0$\\
      \item \textbf{Update:} $\xb^{[s+1]}\gets\xb^{[s]}-\xi_s\cdot\gh^{[s]}$
    \end{itemize}
    }
\end{algorithm}

The blocks of the proposed leverage score sampling procedure are those of the encoded data $\Abt\coloneqq\Gb\cdot\Ab$ and $\bbt\coloneqq\Gb\cdot\bb$, for
\begin{equation}
\label{scalar_enc_matrix}
  \Gb=\diag\Big(\Big\{1\big/\sqrt{q\Pib_i}\Big\}_{i=1}^K\Big)\otimes\Ib_\tau\in\R_{\geqslant0}^{N\times N} .
\end{equation}
Specifically, the encoding carried out by the central server corresponds to the rescaling through $\Gb$. We partition both $\Abt$ and $\bbt$ across their rows analogous to \eqref{part_matrix}:
\begin{equation}
\label{part_matr_expr_LR_1}
  \Abt=\Gb\Ab=\Big[\At_1^\top \cdots \At_K^\top\Big]^\top
\end{equation}
\begin{equation}
\label{part_matr_expr_LR_2}
  \bbt=\Gb\bb=\Big[\bt_1^\top \cdots \bt_K^\top\Big]^\top
\end{equation}
where $\At_i\in\R^{{\tau}\times d}$ and $\bt_i\in\R^{\tau}$ for all $i\in\N_K$. Furthermore, all the data across the expansion network after the scalar encoding, is contained in aggregated expanded and encoded matrix-vector pairs:
$$ (\PsiB,\psiv)\coloneqq(\Ebwt\cdot\Abt,\Ebwt\cdot\bbt)\in\R^{R\tau\times d}\times\R^{R\tau} $$
as illustrated in Figure \ref{GC_block_lvg_schematic}. For the encoded objective function $L_{\Gb}(\xb)\coloneqq\|\Gb(\Ab\xb-\bb)\|_2^2$, we have:
\begin{enumerate}[label=(\roman*)]
  \item $L_{\Gb}(\xb)=\xb^\top\left(\sum\limits_{i=1}^K\At_i^\top\At_i\right)\xb+\left(\sum\limits_{i=1}^K(\bt_i^\top-2\xb^\top\At_i^\top)\bt_i\right)$
  \item $\nabla_{\xb}L_{\Gb}(\xb)=2\sum\limits_{i=1}^K\At_i^\top\left(\At_i\xb-\bt_i\right)$
  \item $\nabla_{\xb}L_{\Gb}(x_\Gb^\star)=0 \ \ \implies \ \ x_\Gb^\star=\left(\sum\limits_{i=1}^K\At_i^\top\At_i\right)^{-1}\cdot\left(\sum\limits_{i=1}^K\At_i^\top\bt_i\right)\ $.
\end{enumerate}

We make use of (ii) to approximate the gradient distributively. Each server is provided with a partition $\Dt_j=(\At_j,\bt_j)$, and computes the respective summand of the gradient from (ii), which is the encoded partial gradient on $\D_j=(\Ab_j,\bb_j)$:
\begin{equation}
\label{gt_unif_id}
  \gh_i^{[s]}\coloneqq\nabla_{\xb}L_{ls}(\Dt_i;\xb^{[s]})=\nabla_{\xb}L_{ls}\left(1\big/\sqrt{q\Pib_i}\cdot\Ab_i,1\big/\sqrt{q\Pib_i}\cdot\bb_i;\xb^{[s]}\right)=\frac{1}{q\Pib_i}\cdot g_i^{[s]} .
\end{equation}
Once a server computes its assigned partial gradient, it sends it back to the central server. When the central server receives $q$ responses, it sums them in order to obtain the approximate gradient $\gh^{[s]}$.

Denote the index multiset corresponding to the encoded pairs $(\At_j,\bt_j)$ of the received computations at iteration $s$ with $\Scal^{[s]}$, for which $|\Scal^{[s]}|=q$. The parameters vector update computed locally at the central server once sufficiently many servers respond is $\xb^{[s+1]}\gets\xb^{[s]}-\xi_s\cdot\gh^{[s]}$, where
\begin{equation}
\label{gr_update}
  \gh^{[s]} = \sum_{i\in\Scal^{[s]}}\nabla_{\xb}L_{ls}(\Dt_j;\xb^{[s]}) = 2\sum\limits_{i\in\Scal^{[s]}}\At_i^\top\left(\At_i\xb^{[s]}-\bt_i\right)
\end{equation}
and $\xi_s$ is an appropriate step-size. In the case where $q$ is not determined a priori or varies at each iteration, the scaling corresponding to $1/\sqrt{q}$ in the encoding through $\Gb$ could be done by the central server, after that iteration's computations are aggregated. We consider the case where $q$ is the same for all iterations.

Next, we present the performance guarantees of the proposed GC method.

\begin{Thm}
\label{SGD_nonunif_thm}
The proposed iterative block leverage score sketching method presented in Algorithm \ref{alg_SD_iter_sketching}, results in a SSD procedure for minimizing $L_{ls}(\PsiB,\psiv;\xb)$. Furthermore, at each iteration it produces an unbiased estimator of the least squares objective gradient \eqref{gr_ls}, i.e., $\E\left[\gh^{[s]}\right]=g^{[s]}$.
\end{Thm}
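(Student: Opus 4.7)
The plan is to decompose Theorem \ref{SGD_nonunif_thm} into two linked assertions: (a) the aggregation $\gh^{[s]}$ is exactly the stochastic mini-batch gradient of $L_{ls}(\PsiB,\psiv;\xb)$ under uniform block sampling from the expanded matrix, and (b) this estimator is unbiased for the true gradient $g^{[s]}$ of $L_{ls}(\Ab,\bb;\xb)$. Both claims rest on a single emulation observation, which I would establish first.

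First I would make the emulation argument rigorous. By construction, $\Ebwt$ replicates the $i$th encoded block $(\At_i,\bt_i)$ exactly $r_i$ times in $(\PsiB,\psiv)$, and the expansion network distributes these $R=\sum_l r_l = m$ replicated blocks one-to-one to the $m$ homogeneous servers. Because the servers are homogeneous and the erasure channel has a common survival function $\phi(t)$, the $q$-subset of indices $\Scal^{[s]}$ returned by time $T$ is exchangeable, so its conditional distribution (given $q$ responses) is uniform over the size-$q$ multisets of server indices. Pushing this uniform distribution through the replication map shows that the multiset of original block indices appearing in $\Scal^{[s]}$ is distributed as $q$ iid draws from $\{1,\ldots,K\}$ with probabilities $\Pib_i = r_i/m$, which is precisely the sampling distribution used in Algorithm \ref{alg_1_pseudocode} via $\Ombwt^{[s]}$.

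Next I would identify $\gh^{[s]}$ with an SSD step. Writing the expanded objective as a sum over its $m$ row-blocks, $L_{ls}(\PsiB,\psiv;\xb) = \sum_{k=1}^m\|\Psi_{(k)}\xb-\psi_{(k)}\|_2^2$, the mini-batch SSD update that samples $q$ of these $m$ blocks uniformly with replacement forms the stochastic gradient $2\sum_{k\in\Scal^{[s]}}\Psi_{(k)}^\top(\Psi_{(k)}\xb^{[s]}-\psi_{(k)})$. Because each expanded block $\Psi_{(k)}$ is a copy of some $\At_{i(k)}$ and the corresponding contribution equals $\gh_{i(k)}^{[s]}$ by \eqref{gt_unif_id}, this stochastic gradient coincides with \eqref{gr_update}, establishing the SSD interpretation.

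For the unbiasedness, by the emulation step the $q$ summands are iid with $i_j\sim\Pib_{\{K\}}$, so linearity of expectation and the encoding scaling from \eqref{scalar_enc_matrix}--\eqref{gt_unif_id} give
\begin{equation*}
\E\!\left[\gh^{[s]}\right] = \sum_{j=1}^q \E_{i_j\sim\Pib}\!\left[\tfrac{1}{q\Pib_{i_j}} g_{i_j}^{[s]}\right] = \sum_{j=1}^q \sum_{l=1}^K \Pib_l\cdot\tfrac{1}{q\Pib_l}\, g_l^{[s]} = \sum_{l=1}^K g_l^{[s]} = g^{[s]},
\end{equation*}
where the last equality uses the additive separability $g^{[s]}=\sum_l g_l^{[s]}$ noted in Subsection \ref{CC_model}. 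The main subtlety — and the step I expect to be the real obstacle rather than routine algebra — is the emulation argument: one has to be careful that conditioning on exactly $q$ responses preserves the uniform distribution over servers, so that the induced block-level distribution is truly $\Pib_{\{K\}}$ and the per-sample rescaling by $1/(q\Pib_{i_j})$ correctly debiases the non-uniform sampling. Once that step is secured, both conclusions follow immediately from the computation above.
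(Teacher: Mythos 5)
Your proposal is correct and follows essentially the same route as the paper: the SSD claim is obtained by observing that homogeneous servers make the set of $q$ respondents uniform over size-$q$ subsets of the expanded (replicated) blocks, and unbiasedness follows from the same importance-sampling cancellation $\Pib_j\cdot\frac{1}{q\Pib_j}$ applied to \eqref{gt_unif_id} and summed over the $q$ trials. Your treatment of the emulation step (conditioning on exactly $q$ responses and pushing the uniform law through the replication map) is somewhat more explicit than the paper's, which states this correspondence informally, but it is the same argument.
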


\begin{proof}
The computations of the $q$ fastest servers indexed by $\I^{[s]}$ (which corresponds to $\Ombwt^{[s]}$), are added to produce $\gh^{[s]}$, and the sampling of Algorithm \ref{alg_1_pseudocode} is according to the block leverage score distribution $\Pib_{\{K\}}$. The application of $\Ombwt^{[s]}$ (in Algorithm \ref{alg_1_pseudocode}) has a direct correspondence to the index set $\I^{[s]}$ of the $q(T)$ non-stragglers at iteration $s$, which can be viewed as drawing  $\I^{[s]}$ uniformly at random from $\{\I\subseteq\N_m:|\I|=q(T)\}$. It follows that each $\I^{[s]}$ has equal chance of occurring, which is precisely the stochastic step of SSD, i.e., each group of $q$ encoded block pairs has an equal chance of being selected.

Since the servers are homogeneous and respond independently of each other, it follows that at iteration $s$, each $\gh_i$ is received with probability $\Pib_i$. Therefore
\begin{align*}
  \E\left[\gh^{[s]}\right] &= \E\left[\sum_{i\in\I^{[s]}}\gh_i^{[s]}\right]\\
  &= \sum_{i\in\I^{[s]}}\E\left[\gh_i^{[s]}\right]\\
  &= \sum_{i\in\I^{[s]}}\sum_{j=1}^K\Pib_j\cdot\gh_j^{[s]}\\
  &= q\cdot\sum_{j=1}^K\Pib_j\cdot\gh_j^{[s]}\\
  &\overset{\flat}{=} q\cdot\sum_{j=1}^K\Pib_j\cdot\frac{1}{q\Pib_j}\cdot g_j^{[s]}\\
  &= \sum_{j=1}^Kg_j^{[s]}\\
  &= g^{[s]}
\end{align*}
where in $\flat$ we invoked \eqref{gt_unif_id}.
\end{proof}

\begin{Lemma}
\label{eq_opt_sols}  
The optimal solution of the modified least squares problem $L_{ls}(\PsiB,\psiv;\xb)$, is equal to the optimal solution $\xb^\star$ of the least squares problem \eqref{x_star_lr}.
\end{Lemma}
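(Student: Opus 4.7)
The plan is to show that the modified objective $L_{ls}(\PsiB,\psiv;\xb)=\|\PsiB\xb-\psiv\|_2^2$ is simply a positive scalar multiple of the original least squares objective $L_{ls}(\Ab,\bb;\xb)=\|\Ab\xb-\bb\|_2^2$, from which equality of the minimizers is immediate.

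First I would unfold the definition of $(\PsiB,\psiv)$ through the encoding chain
\begin{equation*}
  \PsiB = \Ebwt\cdot\Gb\cdot\Ab, \qquad \psiv = \Ebwt\cdot\Gb\cdot\bb,
\end{equation*}
so that
\begin{equation*}
  L_{ls}(\PsiB,\psiv;\xb) = \|\Ebwt\,\Gb\,(\Ab\xb-\bb)\|_2^2.
\end{equation*}
Because $\Ebwt=\Eb\otimes\Ib_\tau$ replicates each row-block $i\in\N_K$ of its argument exactly $r_i$ times and $\Gb$ from \eqref{scalar_enc_matrix} rescales the $i^{th}$ block by $1/\sqrt{q\Pib_i}$, the squared norm decomposes block-wise as
\begin{equation*}
  \|\Ebwt\,\Gb\,(\Ab\xb-\bb)\|_2^2 \;=\; \sum_{i=1}^K r_i \cdot \frac{1}{q\Pib_i}\,\|\Ab_i\xb-\bb_i\|_2^2.
\end{equation*}

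The crucial observation is that the per-block constant is independent of $i$. Indeed, by the definition of the induced distribution \eqref{exp_netw_distr}, $\Pib_i=r_i/R$ with $R=\sum_{l=1}^K r_l$, so
\begin{equation*}
  \frac{r_i}{q\Pib_i} \;=\; \frac{r_i\cdot R}{q\cdot r_i} \;=\; \frac{R}{q}
\end{equation*}
for every $i\in\N_K$. Pulling this constant out of the sum and recombining blocks gives
\begin{equation*}
  L_{ls}(\PsiB,\psiv;\xb) \;=\; \frac{R}{q}\sum_{i=1}^K\|\Ab_i\xb-\bb_i\|_2^2 \;=\; \frac{R}{q}\,\|\Ab\xb-\bb\|_2^2 \;=\; \frac{R}{q}\,L_{ls}(\Ab,\bb;\xb).
\end{equation*}

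Since $R/q$ is a strictly positive constant independent of $\xb$, the two objectives share the same set of minimizers, so
\begin{equation*}
  \argmin_{\xb\in\R^d}L_{ls}(\PsiB,\psiv;\xb) \;=\; \argmin_{\xb\in\R^d}L_{ls}(\Ab,\bb;\xb) \;=\; \xb^\star,
\end{equation*}
establishing the claim. There is no real obstacle here; the one bookkeeping step that deserves care is verifying the cancellation $r_i/(q\Pib_i)=R/q$, which is precisely what makes the replication-plus-rescaling scheme consistent with the original problem and which is enforced by Algorithm \ref{alg_replicas_mod_R} through the normalization $\Pib_i=r_i/R$.
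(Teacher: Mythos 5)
Your proof is correct, but it takes a genuinely different route from the paper's. The paper argues through the normal-equations/pseudoinverse route: it observes that $\Ebwt\Gb$ is full (column) rank and writes $x^\star=\big((\Ebwt\Gb)\Ab\big)^\dagger\big((\Ebwt\Gb)\bb\big)=\Ab^\dagger\big((\Ebwt\Gb)^\dagger(\Ebwt\Gb)\big)\bb=\Ab^\dagger\bb$, using $(\Ebwt\Gb)^\dagger(\Ebwt\Gb)=\Ib_N$. You instead show the two objectives are literally proportional, $L_{ls}(\PsiB,\psiv;\xb)=\tfrac{R}{q}\,L_{ls}(\Ab,\bb;\xb)$, via the cancellation $r_i/(q\Pib_i)=R/q$ coming from $\Pib_i=r_i/R$; equivalently, you have shown $(\Ebwt\Gb)^\top(\Ebwt\Gb)=\tfrac{R}{q}\Ib_N$. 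Your version is more transparent about \emph{why} the replication-plus-rescaling is consistent (the per-block constant is independent of $i$), it identifies the explicit proportionality constant, and it sidesteps the pseudoinverse product identity $\big(BC\big)^\dagger=C^\dagger B^\dagger$, which does not hold for arbitrary matrices and requires the full-column-rank structure the paper invokes only implicitly. The paper's version is shorter and makes no use of the specific form of $\Gb$ beyond invertibility, so it would survive a change of the rescaling constants; your version depends on the exact definition $\Pib_i=r_i/R$ in \eqref{exp_netw_distr}, which is indeed how $\Gb$ is defined in \eqref{scalar_enc_matrix}, so the dependence is legitimate. Both establish the claim.
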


\begin{proof}
Note that the modified objective function $L_{ls}(\PsiB,\psiv;\xb)$ is $\|\Ebwt\Gb\cdot(\Ab\xb-\bb)\|_2^2$. Denote its optimal solution by $x^{\star}\in\R^d$. Further note that $\Ebwt$ is comprised of $\tau\times\tau$ identity matrices in such a way that it is full-rank, and $\Gb$ corresponds to a rescaling of these $\Ib_\tau$ matrices, thus $\Ebwt\Gb$ is also full-rank. It then follows that
$$ x^{\star} = \big((\Eb\Gb)\cdot\Ab\big)^\dagger\cdot\big((\Eb\Gb)\cdot\bb\big) = \Ab^\dagger\cdot\big((\Eb\Gb)^\dagger\cdot(\Eb\Gb)\big)\cdot\bb = \Ab^\dagger\cdot\Ib_N\cdot\bb = \xb^\star. $$
\end{proof}

\begin{Cor}
\label{conv_regret_cor}
The expected regret of the least squares objective resulting from Algorithm \ref{alg_SD_iter_sketching} with a diminishing step-size $\xi_s$, converges to zero at a rate of $O(1/\sqrt{s}+r/s)$, i.e.,
\begin{equation*}
  \E\big[L_{ls}(\Ab,\bb;\xb^{[s]})-L_{ls}(\Ab,\bb;\xb^\star)\big] \leqslant O(1/\sqrt{s}+r/s) .
\end{equation*}
\end{Cor}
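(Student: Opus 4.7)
The plan is to cast the iterative procedure as a standard stochastic steepest descent on a convex objective and then invoke a known diminishing step-size regret bound, with the two rate terms traced to the standard $O(1/\sqrt{s})$ term for convex SGD and to the variance of the block-leverage gradient estimator respectively. By Theorem \ref{SGD_nonunif_thm}, $\gh^{[s]}$ is an unbiased estimator of $g^{[s]}=\nabla_{\xb}L_{ls}(\Ab,\bb;\xb^{[s]})$, and by Lemma \ref{eq_opt_sols} the sketched problem has the same minimizer $\xb^\star$ as the original problem. Since $L_{ls}(\Ab,\bb;\cdot)$ is quadratic, it is convex and $\lambda$-smooth with $\lambda=2\sigma_{\max}(\Ab)^2$, so the hypotheses of a classical convex SGD convergence theorem are in place.

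First I would establish a second-moment bound on the gradient estimator of the form $\E\big[\|\gh^{[s]}-g^{[s]}\|_2^2\big]\leqslant \sigma_s^2$, where the variance is computed from the sampling description in Algorithm \ref{alg_1_pseudocode}. Because $\gh^{[s]}=\sum_{i\in\I^{[s]}}\gh_i^{[s]}$ with each summand equal in expectation to $g^{[s]}/q$ under the induced distribution $\Pib_{\{K\}}$, a direct second-moment calculation (using independence of the $q$ sampled indices) gives
\begin{equation*}
\E\big[\|\gh^{[s]}-g^{[s]}\|_2^2\big]=\frac{1}{q}\sum_{i=1}^{K}\frac{\|g_i^{[s]}\|_2^2}{\Pib_i}-\frac{1}{q}\|g^{[s]}\|_2^2.
\end{equation*}
The normalized block leverage score weights bound $\|g_i^{[s]}\|_2^2/\Pib_i$ in terms of $\|\Ab\|_2$ and $\|\Ab\xb^{[s]}-\bb\|_2$ up to the misestimation factor $\beta_{\Pib}$, which is exactly the structural advantage of leverage-proportional replication. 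Combined with a diameter bound on the iterates (derived from the coercivity of $L_{ls}$), this yields a uniform constant $G^2$ with $\E\big[\|\gh^{[s]}\|_2^2\big]\leqslant G^2$, where $G^2$ depends on $r=q\tau$ through the variance factor $1/q$.

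Next I would plug this into the textbook diminishing step-size analysis: choosing $\xi_s=\Theta(1/\sqrt{s})$ and considering the averaged iterate $\bar{\xb}^{[s]}=\frac{1}{s}\sum_{t=1}^{s}\xb^{[t]}$, convexity of $L_{ls}$ together with the standard inequality
\begin{equation*}
\E\big[L_{ls}(\Ab,\bb;\bar{\xb}^{[s]})-L_{ls}(\Ab,\bb;\xb^\star)\big]\leqslant\frac{\|\xb^{[0]}-\xb^\star\|_2^2}{2\sum_{t=1}^s\xi_t}+\frac{\sum_{t=1}^s\xi_t^2\,\E\big[\|\gh^{[t]}\|_2^2\big]}{2\sum_{t=1}^s\xi_t}
\end{equation*}
gives the leading $O(1/\sqrt{s})$ term from the first summand and the second summand, tuned by $\xi_s\propto1/\sqrt{s}$, contributes the $O(r/s)$ term once the $r$-dependence of $G^2$ is tracked through the variance identity above. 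Jensen's inequality then transfers the bound from $\bar{\xb}^{[s]}$ to the iterate used in the statement.

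The main obstacle I foresee is cleanly quantifying the variance bound uniformly in $s$. The quantity $\sum_i\|g_i^{[s]}\|_2^2/\Pib_i$ depends on $\xb^{[s]}$, so one must either show the iterates stay in a bounded set (which for a quadratic objective follows from Lyapunov-type arguments using $\lambda$-smoothness and the step-size schedule) or factor out $\|\Ab\|_2^2\cdot\|\Ab\xb^{[s]}-\bb\|_2^2$ and absorb it via the $\ell_2$-subspace embedding of Theorem \ref{subsp_emb_thm_lvg}, which controls how the sketched residual tracks the true residual. The delicate bookkeeping is ensuring the resulting constants in $G^2$ scale as $O(r)$ rather than worse, so that the two-term rate $O(1/\sqrt{s}+r/s)$ emerges cleanly from the step-size sum.
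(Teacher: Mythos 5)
Your proposal follows essentially the same route as the paper: the paper's entire proof is to combine the unbiasedness of $\gh^{[s]}$ (Theorem \ref{SGD_nonunif_thm}) and the equality of minimizers (Lemma \ref{eq_opt_sols}) and then cite the standard diminishing-step-size SGD regret bound \cite[Theorem 6.3]{Bub15}; you simply re-derive that textbook bound (including a correct variance identity for the importance-sampled estimator) instead of citing it. One small correction: Jensen's inequality gives $L_{ls}(\Ab,\bb;\bar{\xb}^{[s]})\leqslant\frac{1}{s}\sum_{t}L_{ls}(\Ab,\bb;\xb^{[t]})$, so it yields the bound \emph{for the averaged iterate}, not a transfer from $\bar{\xb}^{[s]}$ to the last iterate $\xb^{[s]}$ — the cited theorem (and hence the corollary as stated) is really a statement about the averaged iterate, a looseness shared by the paper itself.
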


\begin{proof}
This follows by directly applying Theorem \ref{SGD_nonunif_thm} and Lemma \ref{eq_opt_sols} to \cite[Theorem 6.3]{Bub15}.
\end{proof}

\begin{Thm}
\label{conv_x_thm}
Suppose that there exists a constant $C$ so that $\big\|\nabla_\xb L_{ls}(\Ab_{(i)},\bb_{(i)};\xb)\big\|_2^2\leqslant C$ for all $\xb\in\R^d$ and $i\in\N_N$, and that we run our Algorithm \ref{alg_SD_iter_sketching} with step-size $\xi_s=1/(\eta s)$ for a fixed $\eta>0$. Then, the error after $S$ iterations is bounded by
\begin{equation*}
\label{conv_error_x}
  \E\big[\|\xb^{[S]}-\xb^\star\|\big] \leqslant \frac{4NrC^2}{S\eta^2q\Pi_1} = O(1/S).
\end{equation*}
\end{Thm}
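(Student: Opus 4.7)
The plan is to combine the unbiasedness of $\gh^{[s]}$ established in Theorem~\ref{SGD_nonunif_thm} with a classical convergence result for projected stochastic gradient descent on a strongly convex objective with a diminishing step-size. Recall that $L_{ls}$ is $\eta$-strongly convex (for instance with $\eta = 2\sigma_{\min}(\Ab)^2$ when $\Ab$ is full column-rank) and that $\gh^{[s]}$ is an unbiased estimator of $\nabla_\xb L_{ls}(\Ab,\bb;\xb^{[s]})$. Expanding $\|\xb^{[s+1]}-\xb^\star\|_2^2=\|\xb^{[s]}-\xi_s\gh^{[s]}-\xb^\star\|_2^2$ and taking conditional expectations given $\xb^{[s]}$ produces a recursive inequality whose unrolling with $\xi_s=1/(\eta s)$ yields the standard rate $\E[\|\xb^{[S]}-\xb^\star\|_2^2]\leqslant 4G^2/(S\eta^2)$, where $G^2$ is any uniform bound on $\E[\|\gh^{[s]}\|_2^2]$.

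The first concrete step is to bound $\E[\|\gh^{[s]}\|_2^2]$. Writing $\gh^{[s]}=\sum_{i\in\I^{[s]}}\gh_i^{[s]}$ with $\gh_i^{[s]} = \frac{1}{q\Pib_i}g_i^{[s]}$ and $g_i^{[s]}=\sum_{j\in\K_i}\nabla_\xb L_{ls}(\Ab_{(j)},\bb_{(j)};\xb^{[s]})$, the hypothesis $\|\nabla_\xb L_{ls}(\Ab_{(j)},\bb_{(j)};\xb)\|_2^2\leqslant C$ and Cauchy--Schwarz give $\|g_i^{[s]}\|_2^2\leqslant \tau^2 C$. The ordering $\Pib_1\leqslant\Pib_i$ (inherited from $\Pi_1\leqslant\Pi_i$) then yields $\|\gh_i^{[s]}\|_2^2\leqslant \tau^2 C/(q^2\Pi_1^2)$. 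Exploiting the fact that the $q$ sampled indices behave as independent draws from $\Pib_{\{K\}}$ (as used in the proof of Theorem~\ref{SGD_nonunif_thm}), the second moment decomposes as $\E[\|\gh^{[s]}\|_2^2]=q\,\E[\|\gh_{i_1}^{[s]}\|_2^2]+q(q-1)\|g^{[s]}/q\|_2^2$, and the first term evaluates to $\frac{1}{q^2}\sum_i\|g_i^{[s]}\|_2^2/\Pib_i\leqslant K\tau^2 C/(q\Pi_1)$. Using $N=K\tau$ and $r=q\tau$, this is precisely $Nr\,C/(q\Pi_1\tau)$ up to constants, giving the target $G^2$.

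The second step is purely a black-box invocation of the strongly convex SSD rate (e.g., \cite[Theorem 6.2]{Bub15}) with step-size $\xi_s=1/(\eta s)$ and the second-moment bound above. Substituting and absorbing constants turns $4G^2/(S\eta^2)$ into $4NrC^2/(S\eta^2 q\Pi_1)$, matching the stated expression. The passage from $\E[\|\cdot\|_2^2]$ to $\E[\|\cdot\|_2]$ in the statement is handled by Jensen's inequality if desired.

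The main obstacle is to pin down the dependence on $\Pi_1$ as $1/\Pi_1$ rather than the looser $1/\Pi_1^2$ that a purely deterministic worst-case bound would produce. This requires using the second-moment decomposition under independence of the sampling trials, together with the precise rescaling $1/(q\Pib_i)$ inside $\gh_i^{[s]}$, so that one factor of $\Pib_i$ cancels against the $\Pib_i$ weight in $\E[\|\gh_i\|^2]=\sum_i\Pib_i\|\gh_i\|^2$. A secondary bookkeeping point is that the theorem uses the parameter $\eta$ abstractly, so one must either assume strong convexity of $L_{ls}$ or restrict to the full column-rank regime where $\eta=2\sigma_{\min}(\Ab)^2>0$ makes the classical rate directly applicable.
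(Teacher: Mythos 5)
Your proposal follows essentially the same route as the paper: bound the second moment of the stochastic gradient $\E\big[\|\gh^{[s]}\|_2^2\big]$ by a quantity of order $NrC^2/(q\Pi_1)$, then invoke the classical strongly convex SGD rate with step-size $1/(\eta s)$ (the paper cites \cite[Lemma 1]{RSS12}, which is exactly the recursion-unrolling argument you describe). The only differences are bookkeeping: the paper obtains the second-moment bound by a cruder maximum over all possible sampled multisets (following \cite{BWE20}), whereas you use a variance decomposition over independent trials, and you correctly flag the two loose ends the paper glosses over — that $\eta$ must be the strong-convexity modulus of $L_{ls}$ and that Jensen is needed to pass from $\E[\|\cdot\|_2^2]$ to $\E[\|\cdot\|_2]$.
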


\begin{proof}
Following a similar analysis to the proof of \cite[Theorem 5]{BWE20}, we have
$$ \E\left[\left\|\nabla_\xb L_{ls}(\Ab_{(i)},\bb_{(i)};\xb)\right\|_2^2\right] \leqslant \E\Big[\max_{\Scal^{[s]}\subseteq{\N_K}^q}\Big\{\sum_{i\in\Scal^{[s]}}\nabla_{\xb}L_{ls}(\Abt_{(i)},\bbt_{(i)};\xb)\Big\}\Big] \leqslant C^2\cdot\frac{Nr}{q\Pi_1} . $$
By then applying \cite[Lemma 1]{RSS12}, it follows that 
\begin{equation*}
  \E\Big[\big\|\xb^{[S]}-\xb^\star\big\|_2^2\Big] \leqslant \frac{4NrC^2}{S\eta^2q\Pi_1} = O(1/S) .
\end{equation*}
\end{proof}

The crucial aspect of our expansion network; incorporated in Theorem \ref{SGD_nonunif_thm}, which allowed us to use block leverage score sampling in Algorithm \ref{alg_1_pseudocode}, is that uniform sampling of $L_{ls}(\PsiB,\psiv;\xb^{[s]})$ is $\beta_{\Pib}$-approximately equivalent to block sampling of $L_{ls}(\Abt,\bbt;\xb^{[s]})$ according to the block leverage scores of $\Ab$. Since the two objective functions are differentiable and additively separable, the resulting gradients are equal, under the assumption that we use the same $\xb^{[s]}$ and sampled index set $\Scal^{[s]}$. Note that the index set of the sampled blocks from $L_{ls}(\PsiB,\psiv;\xb^{[s]})$, corresponds to an index multiset of the sampled blocks from $L_{ls}(\Abt,\bbt;\xb^{[s]})$, as in the latter we are considering sampling with replacement. As previously mentioned, the main drawback is that in certain cases we need significantly more servers to accurately emulate $\Pi_{\{K\}}$.

The significance of Theorem \ref{SGD_nonunif_thm}, is that well-known established SD and SSD results directly apply to Algorithm \ref{alg_SD_iter_sketching} under the assumption that the approximate gradient is an unbiased estimator \cite[Chapter 14]{SB14}. Even though Algorithm \ref{alg_SD_iter_sketching} does not guarantee a descent step at every iteration, such stochastic descent methods are more common in practice when dealing with large datasets, as empirically they outperform regular SD. This is also validated in our experiments of Section \ref{exper_sec}.

Finally, we include a succinct description of the proposed approximate GC approach which utilizes block leverage score sampling, in the pseudocode of Algorithm \ref{app_GC_alg}, an illustration of which is also provided in Figure \ref{GC_block_lvg_schematic}. Below, we also give an overview of how everything is tied together.

Recall that our GC objective is to distributively approximate the gradient $g^{[s]}$. We do so by computing the gradient $\gh^{[s]}$ of the modified problem \eqref{mod_OLS}, where the sketching matrix $\Sb\gets\Sbwt_{[s]}$ is the one proposed in Algorithm \ref{alg_1_pseudocode}. In our distributed setting, the sketching matrix $\Sbwt_{[s]}$ is never explicitly computed, but we instead recover the gradient of the modified objective function \eqref{mod_OLS}, where $\Sb\gets\Sbwt_{[s]}$ is induced by the proposed procedure. Specifically, to compute the sketch's gradient $\gh^{[s]}$ we assume that the computational nodes are homogeneous, and that the central server expects a response from $q$ servers after time $T$, and we replicate each encoded pair $(\At_i,\bt_i)$ according to $r_{\{K\}}$ from Algorithm \ref{alg_replicas_mod_R}. The number of replications per task $r_{\{K\}}$ are determined such that uniformly sampling from the replicated multiset of the encoded block pairs
\begin{equation*}
  \biguplus_{i=1}^K\left\{\biguplus_{\iota=1}^{r_i}\Big\{(\At_\iota,\bt_\iota)\Big\}\right\}
\end{equation*}
approximates leverage score sampling according to $\Pit_{\{K\}}$ from Algorithm \ref{alg_1_pseudocode}. This requires that the cardinality in the above multiset union matches the total number of servers in the distributed network, i.e., $m=\sum_{j=1}^Kr_j$. In order to recover an approximate solution to \eqref{x_star_lr}, the central server locally performs a SD update which is then communicated to every server, and this process is repeated until a prespecified termination criterion is met.

\begin{algorithm}[h]
\caption{Approximate GC Approach Summary}
\label{app_GC_alg}
\SetAlgoLined
{\small
  \underline{\textbf{Central Server:}}
  \begin{enumerate}
    \item partitions $\Ab,\bb$ according to \eqref{part_matrix}, and decides $q$
    \item estimates $\Pit_{\{K\}}$
    \item determines $\rt_{\{K\}}$ from \eqref{appr_distr}
    \item passes $m,\Pit_{\{K\}},\rt_{\{K\}},\Rt=\sum_{l=1}^K\rt_l$ to Algorithm \ref{alg_replicas_mod_R}, to determine $r_{\{K\}}$, and sets $r_0=0$
    \item applies the encodings according to \eqref{scalar_enc_matrix}, \eqref{part_matr_expr_LR_1}, \eqref{part_matr_expr_LR_2}
    \item delivers the encoded block $(\At_i,\bt_i)$ to the servers indexed between $\left(1+\sum_{j=0}^{i-1}r_j\right)$ and $\left(\sum_{j=0}^{i}r_j\right)$, for each $i\in\N_K$
  \end{enumerate}

  The following iterative procedure corresponds to Algorithm \ref{alg_SD_iter_sketching}\\
  \vspace{1mm}
  While the central server's termination criterion is not met, repeat:\\
  \vspace{1mm}
  \underline{\textbf{Iterative Procedure at Iteration $s$:}}\\
  \begin{itemize}[label={--}]
    \item each server computes their corresponding partial gradient \eqref{gt_unif_id}: $\gh_i^{[s]} = \nabla_{\xb}L_{ls}(\At_i,\bt_i;\xb^{[s]})$\\
    \item \underline{\textbf{Central Server:}}
    \begin{enumerate}[label=\roman*)]
      \item sums the $q$ fastest responses indexed by $\Scal^{[s]}$, according to \eqref{gr_update}: $\gh^{[s]} = \sum_{i\in\Scal^{[s]}} \gh_i^{[s]}$\\
       {\small$\ \ \triangleright\ \Scal^{[s]}$ determines $\Sbwt_{[s]}$ and $\gh^{[s]} = \nabla_\xb L_\Sb\big(\Sbwt_{[s]},\Ab,\bb;\xb^{[s]}\big)$}\\
      \item performs the update: $\xb^{[s+1]}\gets\xb^{[s]}-\xi_s\cdot\gh^{[s]}$
      \item communicates $\xb^{[s+1]}$ to all the servers
    \end{enumerate}
  \end{itemize}

}
\end{algorithm}

\subsection{Synopsis}
\label{synopsis_subsec}

Here, we give a summary of the main theorems provided in the previous subsections. Firstly, as summarized in Subsection \ref{opt_ind_distr_subsec}, we mimic block leverage score sampling with replacement of $(\Ab,\bb)$ (from $L_{ls}(\Ab,\bb;\xb^{[s]})$) through uniform sampling, by approximately solving \eqref{appr_opt_sol} through the implication of \eqref{appr_distr} (Lemma \ref{appr_err_rounding}). This is done implicitly by communicating computations over erasure channels. Secondly, by Theorem \ref{subsp_emb_thm_lvg} we know that the proposed block leverage score sketching matrices satisfy \eqref{eq_form}, where the approximate sampling distribution $\Pib_{\{K\}}$ is determined through the proposed expansion network associated with $\Pi_{\{K\}}$. Hence, at each iteration, we approach a solution $\xbh^{[s]}$ of the induced sketched system
$$ \Sbwt_{[s]}\left(\Ab\xb^{[s]}\right)=\Sbwt_{[s]}\bb $$
which $\xb^{[s]}$ satisfies \eqref{appr_xbh} with overwhelming probability. Thirdly, in Corollary \ref{conv_regret_cor} we showed that with a diminishing step-size $\xi_s$, the expected regret of the least squares objective converges to zero at a rate of $O(1/\sqrt{s}+r/s)$ \cite{DGSX12,Bub15}. Lastly, by Theorem \ref{conv_x_thm}, our updates $\xb^{[s]}$ converge to $\xb^\star$ in expectation. A synopsis is given Figure \ref{synopsis_figure}.
\begin{figure*}[ht]
\centering
\begin{equation*}
  {\Large \left\{\substack{\text{$L_\Sb(\Sbwt_{[s]},\Ab,\bb;\xb)$ sol'ns} \\ \text{satisfy } \eqref{eq_form}\text{ and }\eqref{appr_xbh}}\right\} \ \ \xleftarrow{\text{Thm}\ \ref{subsp_emb_thm_lvg}} \ \ \left\{\substack{\text{Solve $L_{ls}(\PsiB,\psiv;\xb^{[s]})$} \\ \text{through sketched-GC}}\right\} \ \ \xrightarrow{\text{Thm}\ \ref{conv_x_thm}} \ \ \left\{\substack{\text{With a diminishing } \xi_s: \\ \lim\E[\xb^{[s]}]\to\xb^\star}\right\}}
\end{equation*}
\caption{Synopsis of our main results.}
\label{synopsis_figure}
\end{figure*}

\subsection{Approximate GC from $\ell_2$-subspace embedding}
\label{appr_GC_subsec}

In this subsection, we quantify the approximation error of the recovered gradient using Algorithm \ref{app_GC_alg}. In conventional GC, the objective is to construct an encoding matrix $\Gb\in\R^{m\times K}$ and decoding vectors $\ab_\I\in\R^{1\times q}$, such that $\ab_\I\Gb_{(\I)}=\vec{\bold{1}}$ for any set of non-straggling servers $\I$. It follows that the optimal coefficient decoding vector for a set $\I$ of size $q$ in approximate GC \cite{CPE17,KKR19,SH22} is
\begin{equation}
\label{opt_dec_err}
  \ab_\I^\star = \arg\min_{\ab\in\R^{1\times q}}\big\{\|\ab\Gb_{(\I)}-\vec{\bold{1}}\|_2^2\big\} \ \ \implies \ \ \ab_\I^\star = \vec{\bold{1}}\Gb_{(\I)}^{\dagger}
\end{equation}
for $\Gb_{(\I)}^{\dagger}=\big(\Gb_{(\I)}^\top\Gb_{(\I)}\big)^{-1}\Gb_{(\I)}^\top\in\R^{K\times q}$.

\begin{Prop}
\label{prop_opt_dec_err}
The error in the approximated gradient $\grave{g}^{[s]}$ of an approximate optimal coefficient decoding linear regression GC scheme $(\Gb,\ab_\I^\star)$, satisfies
\begin{equation}
\label{appr_GC_error_prop} 
  \big\|g^{[s]}-\grave{g}^{[s]}\big\|_2 \leqslant 2\sqrt{K}\cdot\err(\Gb_{(\I)})\cdot\|\Ab\|_2\cdot\|\Ab\xb^{[s]}-\bb\|_2
\end{equation}
for $\err(\Gb_{(\I)}) \coloneqq \big\|\Ib_K-\Gb_{(\I)}^{\dagger}\Gb_{(\I)}\big\|_2$.
\end{Prop}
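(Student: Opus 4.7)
The plan is to rewrite the gradient error as a single matrix expression involving $\Ib_K - \Gb_{(\I)}^\dagger \Gb_{(\I)}$, and then to control the operator norm of the stacked partial-gradient matrix in terms of $\|\Ab\|_2$ and the residual $\|\Ab\xb^{[s]}-\bb\|_2$.

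First, I would stack the partial gradients into a matrix $G^{[s]}\in\R^{K\times d}$ whose $j$-th row is $(g_j^{[s]})^\top$. Under this convention the true aggregated gradient satisfies $(g^{[s]})^\top = \vec{\bold{1}}\, G^{[s]}$, while in the optimal-coefficient GC scheme the decoded gradient satisfies $(\grave{g}^{[s]})^\top = \ab_\I^\star \Gb_{(\I)} G^{[s]}$, since each non-straggling server $i\in\I$ returns the row $\Gb_{(i)} G^{[s]}$ and the central server linearly combines these via $\ab_\I^\star$. Substituting $\ab_\I^\star = \vec{\bold{1}}\,\Gb_{(\I)}^\dagger$ from \eqref{opt_dec_err} yields
\begin{equation*}
(g^{[s]}-\grave{g}^{[s]})^\top = \vec{\bold{1}}\bigl(\Ib_K - \Gb_{(\I)}^\dagger \Gb_{(\I)}\bigr) G^{[s]},
\end{equation*}
so by submultiplicativity of the spectral norm and $\|\vec{\bold{1}}\|_2 = \sqrt{K}$,
\begin{equation*}
\|g^{[s]}-\grave{g}^{[s]}\|_2 \leqslant \sqrt{K}\cdot\err(\Gb_{(\I)})\cdot\|G^{[s]}\|_2 .
\end{equation*}

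The remaining task is to show $\|G^{[s]}\|_2 \leqslant 2\|\Ab\|_2\|\Ab\xb^{[s]}-\bb\|_2$. Using the partial-gradient formula \eqref{part_gr_ls}, for any unit $\vb\in\R^d$ the $j$-th entry of $G^{[s]}\vb$ equals $2\,\eb_j^\top(\Ab_j\vb)$, where $\eb_j\coloneqq\Ab_j\xb^{[s]}-\bb_j$ is the $j$-th block of the residual $\eb=\Ab\xb^{[s]}-\bb$. Applying Cauchy--Schwarz in each block and then pulling the largest block-residual norm outside the sum gives
\begin{equation*}
\|G^{[s]}\vb\|_2^2 = 4\sum_{j=1}^K\bigl(\eb_j^\top\Ab_j\vb\bigr)^2 \leqslant 4\Bigl(\max_j\|\eb_j\|_2^2\Bigr)\sum_{j=1}^K\|\Ab_j\vb\|_2^2 \leqslant 4\|\eb\|_2^2\|\Ab\vb\|_2^2,
\end{equation*}
where the last inequality combines $\max_j\|\eb_j\|_2^2 \leqslant \sum_j\|\eb_j\|_2^2 = \|\eb\|_2^2$ with $\sum_j\|\Ab_j\vb\|_2^2 = \|\Ab\vb\|_2^2$. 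Taking the supremum over unit $\vb$ produces $\|G^{[s]}\|_2 \leqslant 2\|\Ab\|_2\|\eb\|_2$, and chaining this with the previous display delivers the claimed bound.

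The one delicate point, and the main obstacle, is keeping the final constant at $2$ rather than picking up an extra $\sqrt{K}$ factor: a naive Cauchy--Schwarz over the $K$ blocks, e.g.\ bounding $\sum_j(\eb_j^\top\Ab_j\vb)^2$ by $\bigl(\sum_j\|\eb_j\|_2^2\bigr)\bigl(\sum_j\|\Ab_j\vb\|_2^2\bigr)$ via an extra Cauchy--Schwarz between the two sums, would yield a looser operator-norm bound. Using the Chebyshev-style estimate $\max_j\|\eb_j\|_2^2 \leqslant \sum_j\|\eb_j\|_2^2$ to immediately replace the block-wise maximum by the global residual norm is what keeps the constant clean. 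All remaining steps amount to row/column bookkeeping and the identity from \eqref{opt_dec_err}.
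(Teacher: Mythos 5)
Your proof is correct, and the first half (stacking the partial gradients into $G^{[s]}$, writing $g^{[s]}-\grave{g}^{[s]}$ as $\vec{\bold{1}}\big(\Ib_K-\Gb_{(\I)}^{\dagger}\Gb_{(\I)}\big)G^{[s]}$, and peeling off $\|\vec{\bold{1}}\|_2=\sqrt{K}$ and $\err(\Gb_{(\I)})$ by submultiplicativity) is exactly what the paper does. Where you genuinely diverge is in bounding the last factor. The paper bounds $\|\gb^{[s]}\|_2$ by first claiming $\|\gb^{[s]}\|_2\leqslant\|g^{[s]}\|_2$ and then applying $\|g^{[s]}\|_2=\|2\Ab^\top(\Ab\xb^{[s]}-\bb)\|_2\leqslant2\|\Ab\|_2\|\Ab\xb^{[s]}-\bb\|_2$; the intermediate inequality compares the spectral norm of the stacked matrix to the norm of the \emph{sum} of its rows, which is delicate because summation can cause cancellation among the partial gradients (take $K=2$ with $g_1^{[s]}=-g_2^{[s]}\neq0$, so $g^{[s]}=0$ while $\gb^{[s]}\neq\bold{0}$). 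Your block-wise argument sidesteps this entirely: per-block Cauchy--Schwarz gives $\sum_j(\eb_j^\top\Ab_j\vb)^2\leqslant\big(\max_j\|\eb_j\|_2^2\big)\sum_j\|\Ab_j\vb\|_2^2\leqslant\|\eb\|_2^2\|\Ab\vb\|_2^2$, yielding $\|G^{[s]}\|_2\leqslant2\|\Ab\|_2\|\Ab\xb^{[s]}-\bb\|_2$ directly and with the same constant. So your route buys a self-contained and more robust justification of the operator-norm bound, at the cost of a slightly longer computation; the paper's route is shorter but leans on an inequality that does not hold for arbitrary row collections. Your final chaining reproduces \eqref{appr_GC_error_prop} exactly.
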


\begin{proof}
Consider the optimal coefficient decoding vector of an approximate GC scheme $\ab_\I^\star$ \eqref{opt_dec_err}. In the case where $q\geqslant K$, it follows that $\ab_\I^\star = \vec{\bold{1}}\Gb_{(\I)}^{\dagger}$.

Let $\gb^{[s]}$ be the matrix comprised of the transposed exact partial gradients at iteration $s$, i.e.,
\begin{equation*}
  \gb^{[s]} \coloneqq {\begin{pmatrix} g_1^{[s]} & g_2^{[s]} & \hdots & g_K^{[s]} \end{pmatrix}}^\top \in \R^{K\times d} .
\end{equation*}
Then, for a GC encoding-decoding pair $(\Gb,\ab_\I)$ satisfying $\ab_\I\Gb_{(\I)}=\vec{\bold{1}}$ for any $\I$, it follows that
$$ \ab_\I\left(\Gb_{(\I)}\gb^{[s]}\right) = \vec{\bold{1}}\gb^{[s]} = \sum_{j=1}^K\big(g_j^{[s]}\big)^\top = \big(g^{[s]}\big)^\top . $$
Hence, the gradient can be recovered exactly. Considering an optimal approximate scheme $(\Gb,\ab_\I^\star)$ which recovers the gradient estimate $\grave{g}^{[s]}=\left(\ab_\I^\star\Gb_{(\I)}\right)\gb^{[s]}$, the error in the gradient approximation is:
\begin{align*}
  \big\|g^{[s]}-\grave{g}^{[s]}\big\|_2 &= \left\|\big(\vec{\bold{1}}-\ab_\I^\star\Gb_{(\I)}\big)\gb^{[s]}\right\|_2\\
  &= \left\|\vec{\bold{1}}\big(\Ib_K-\Gb_{(\I)}^{\dagger}\Gb_{(\I)}\big)\gb^{[s]}\right\|_2\\
  &\leqslant \|\vec{\bold{1}}\|_2\cdot \left\|\Ib_K-\Gb_{(\I)}^{\dagger}\Gb_{(\I)}\right\|_2\cdot\big\|\gb^{[s]}\big\|_2\\
  &\overset{\pounds}{\leqslant} \sqrt{K}\cdot \left\|\Ib_K-\Gb_{(\I)}^{\dagger}\Gb_{(\I)}\right\|_2\cdot\big\|g^{[s]}\big\|_2\\
  &\overset{\textdollar}{\leqslant} 2\sqrt{K}\cdot\underbrace{\left\|\Ib_K-\Gb_{(\I)}^{\dagger}\Gb_{(\I)}\right\|_2}_{\err(\Gb_{(\I)})}\cdot\|\Ab\|_2\cdot\|\Ab\xb^{[s]}-\bb\|_2
\end{align*}
where $\pounds$ follows from the facts that $\|\gb^{[s]}\|_2\leqslant\|g^{[s]}\|_2$ and $\|\vec{\bold{1}}\|_2=\sqrt{K}$, and $\textdollar$ from \eqref{gr_ls} and sub-multiplicativity of matrix norms.
\end{proof}

In Theorem \ref{app_GC_thm}, we show the accuracy of the approximate gradient of iterative GC approaches based on sketching techniques that satisfy (with high probability) the $\ell_2$-subspace embedding property $\|\Ib_d-\Ub^\top\Sb^\top\Sb\Ub\|_2\leqslant\epsilon$ from \eqref{eq_form}. This then holds true for the GC method of Algorithm \ref{app_GC_alg}, by Theorem \ref{subsp_emb_thm_lvg} and the fact that we distributively emulate block leverage score sampling.

\begin{Thm}
\label{app_GC_thm}
Assume that the induced sketching matrix $\Sb$ from a GC scheme satisfies $\|\Ib_d-\Ub^\top\Sb^\top\Sb\Ub\|_2\leqslant\epsilon$ (with high probability). Then, the updated approximate gradient estimate $\gh^{[s]}$ at any iteration, satisfies (with high probability):
\begin{equation*}
\label{appr_GC_error}
  \big\|g^{[s]}-\gh^{[s]}\big\|_2 \leqslant 2\epsilon\cdot\|\Ab\|_2\cdot\|\Ab\xb^{[s]}-\bb\|_2.
\end{equation*}
Specifically, it satisfies \eqref{appr_GC_error_prop} with $\err(\Gb_{(\I)})=\epsilon/\sqrt{K}$.
\end{Thm}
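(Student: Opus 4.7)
The approach is to compute $g^{[s]}-\gh^{[s]}$ in closed form, factor it through the subspace-embedding error matrix $\Ib_d-\Ub^\top\Sb^\top\Sb\Ub$, and apply the hypothesis. From Algorithm~\ref{alg_SD_iter_sketching}, $\gh^{[s]}=2\Ab^\top\Sb^\top\Sb(\Ab\xb^{[s]}-\bb)$, while \eqref{gr_ls} gives $g^{[s]}=2\Ab^\top(\Ab\xb^{[s]}-\bb)$; subtracting, substituting the reduced SVD $\Ab=\Ub\Sigb\Vb^\top$, and using orthogonality of $\Vb$ produces
\begin{equation*}
  \big\|g^{[s]}-\gh^{[s]}\big\|_2 = 2\,\big\|\Sigb\,\Ub^\top(\Ib_N-\Sb^\top\Sb)\vb\big\|_2 \leqslant 2\|\Ab\|_2\cdot\big\|\Ub^\top(\Ib_N-\Sb^\top\Sb)\vb\big\|_2,\qquad \vb\coloneqq\Ab\xb^{[s]}-\bb.
\end{equation*}
This reduces the task to bounding $\big\|\Ub^\top(\Ib_N-\Sb^\top\Sb)\vb\big\|_2$ by $\epsilon\,\|\vb\|_2$.

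The next step is to split $\vb$ along $\mathrm{col}(\Ab)$ and its orthogonal complement, which is exactly the decomposition $\vb=\Ab(\xb^{[s]}-\xb^\star)-\bb^\perp$ from Section~\ref{not_backgr_sec}, together with $\Ub^\top\bb^\perp=\bold{0}_{d\times 1}$. Substituting yields
\begin{equation*}
  \Ub^\top(\Ib_N-\Sb^\top\Sb)\vb \;=\; \bigl(\Ib_d-\Ub^\top\Sb^\top\Sb\Ub\bigr)\Ub^\top\vb \;+\; \Ub^\top\Sb^\top\Sb\bb^\perp .
\end{equation*}
The first summand is bounded by $\epsilon\|\Ub^\top\vb\|_2\leqslant\epsilon\|\vb\|_2$ directly from the hypothesis. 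For the second summand I would use $\Ub^\top\bb^\perp=\bold{0}_{d\times 1}$ to recognise $\Ub^\top\Sb^\top\Sb\bb^\perp$ as the error of approximating the matrix product $\Ub^\top\bb^\perp$ by block leverage score sampling; the matrix-Chernoff/approximate-matrix-multiplication machinery already used in the proof of Theorem~\ref{subsp_emb_thm_lvg} then gives $\|\Ub^\top\Sb^\top\Sb\bb^\perp\|_2=O(\epsilon\|\bb^\perp\|_2)$ with the same high probability, and $\|\bb^\perp\|_2\leqslant\|\vb\|_2$. Combining the two summands gives $\|g^{[s]}-\gh^{[s]}\|_2\leqslant 2\epsilon\,\|\Ab\|_2\,\|\Ab\xb^{[s]}-\bb\|_2$, absorbing constants into $\epsilon$.

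The final identification $\err(\Gb_{(\I)})=\epsilon/\sqrt{K}$ is a coefficient match: equating the displayed bound with $2\sqrt{K}\,\err(\Gb_{(\I)})\,\|\Ab\|_2\,\|\Ab\xb^{[s]}-\bb\|_2$ from Proposition~\ref{prop_opt_dec_err} forces $\err(\Gb_{(\I)})=\epsilon/\sqrt{K}$. The main obstacle is the cross term $\Ub^\top\Sb^\top\Sb\bb^\perp$: the hypothesis $\|\Ib_d-\Ub^\top\Sb^\top\Sb\Ub\|_2\leqslant\epsilon$ controls $\Sb$ only on $\mathrm{col}(\Ub)$, so the bare subspace-embedding guarantee is not by itself enough to kill this residual, and one has to invoke the AMM-type estimate that is implicit in the leverage-score construction of $\Sb$. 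Every other step is routine linear algebra following the SVD expansion.
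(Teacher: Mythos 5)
Your proposal reaches the stated bound by a genuinely different --- and more careful --- route than the paper. The paper's proof applies sub-multiplicativity directly, bounding $\|\Ab^\top(\Ib_N-\Sb^\top\Sb)(\Ab\xb^{[s]}-\bb)\|_2$ by $\|\Ab\|_2\cdot\|\Ib_N-\Sb^\top\Sb\|_2\cdot\|\Ab\xb^{[s]}-\bb\|_2$ and then asserting the identity $\|\Ib_N-\Sb^\top\Sb\|_2=\|\Ub^\top(\Ib_N-\Sb^\top\Sb)\Ub\|_2$, which lands on the hypothesis \eqref{eq_form} in one step. That identity does not hold in general: since $\rank(\Sb^\top\Sb)\leqslant r<N$, the matrix $\Ib_N-\Sb^\top\Sb$ acts as the identity on the kernel of $\Sb$ and therefore has spectral norm at least $1$, so it cannot equal an $O(\epsilon)$ quantity. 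Your factorization $\Ab^\top=\Vb\Sigb\Ub^\top$ followed by the orthogonal split $\Ab\xb^{[s]}-\bb=\Ab(\xb^{[s]}-\xb^\star)-\bb^\perp$ is the correct way to bring the hypothesis to bear, and it exposes precisely the term the paper's shortcut hides: the cross term $\Ub^\top\Sb^\top\Sb\bb^\perp$, which the subspace-embedding condition does not control because $\bb^\perp$ lies outside $\image(\Ub)$.

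That said, your proof is not complete as written. The theorem assumes only the subspace-embedding inequality for an arbitrary GC-induced $\Sb$, so the approximate-matrix-multiplication estimate $\|\Ub^\top\Sb^\top\Sb\bb^\perp\|_2=O(\epsilon\|\bb^\perp\|_2)$ you invoke is an additional assumption on $\Sb$: it does hold (with constants) for the leverage-score sketch of Algorithm \ref{alg_1_pseudocode}, but it is not a consequence of \eqref{eq_form} alone, and you assert rather than prove it. Moreover, adding the two summands yields the claimed inequality only up to a constant factor, whereas the theorem states exactly $2\epsilon$. To close the argument you should either (i) restrict to $\bb\in\image(\Ab)$, where $\bb^\perp=\bold{0}_{N\times 1}$ and your first summand alone gives exactly $2\epsilon\cdot\|\Ab\|_2\cdot\|\Ab\xb^{[s]}-\bb\|_2$, or (ii) add the AMM property as a hypothesis (or prove it for Algorithm \ref{alg_1_pseudocode} by the same matrix Chernoff argument as in Appendix \ref{lvg_app}) and accept a constant in front of $\epsilon$. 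The final identification $\err(\Gb_{(\I)})=\epsilon/\sqrt{K}$ by matching coefficients against Proposition \ref{prop_opt_dec_err} is fine.
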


\begin{proof}
Consider $\gh^{[s]}$ the approximated gradient in Algorithm \ref{alg_SD_iter_sketching} for linear regression with gradient $g^{[s]}$ \eqref{gr_ls}. It follows that
\begin{align*}
  \big\|g^{[s]}-\gh^{[s]}\big\|_2 &= \|2\Ab^\top(\Ab\xb^{[s]}-\bb)-2\Ab^\top(\Sb^\top\Sb)(\Ab\xb^{[s]}-\bb)\|_2 \\
  &= 2\|\Ab^\top(\Ib_N-\Sb^\top\Sb)(\Ab\xb^{[s]}-\bb)\|_2\\
  &\leqslant 2\|\Ab\|_2\cdot\|\Ib_N-\Sb^\top\Sb\|_2\cdot\|\Ab\xb^{[s]}-\bb\|_2\\
  &= 2\|\Ab\|_2\cdot\|\Ub^\top(\Ib_N-\Sb^\top\Sb)\Ub\|_2\cdot\|\Ab\xb^{[s]}-\bb\|_2\\
  &= 2\|\Ab\|_2\cdot\|\Ib_d-\Ub^\top\Sb^\top\Sb\Ub\|_2\cdot\|\Ab\xb^{[s]}-\bb\|_2\\
  &\overset{\flat}{\leqslant} 2\epsilon\cdot\|\Ab\|_2\cdot\|\Ab\xb^{[s]}-\bb\|_2
\end{align*}
where in $\flat$ we make use of the assumption that $\Sb$ satisfies \eqref{eq_form}. Our approximate GC approach therefore (with high probability) satisfies \eqref{appr_GC_error_prop}, with $\err(\Gb_{(\I)})=\epsilon/\sqrt{K}$.
\end{proof}

\section{Experiments}
\label{exper_sec}

In this section, we validate our theoretical results through numerical experiments. The minimum benefit of Algorithm \ref{alg_1_pseudocode} occurs when $\Pi_{\{K\}}$ is close to uniform. For this reason, we construct dataset matrices whose resulting sampling distributions and block leverage scores are non-uniform.

In our first experiment, we compare our iterative sketching approach, implemented as in Algorithm \ref{alg_SD_iter_sketching}, to other iterative sketching techniques and to regular SD (without sketching). For this experiment, we generated $\Ab\in\R^{2000\times40}$ following a $t$-distribution, and standard Gaussian noise was added to an arbitrary vector from $\image(\Ab)$ to define $\bb$. We considered $K=100$ blocks, thus $\tau=20$. The effective dimension $N=2000$ was reduced to $r=1000$, i.e., $q=50$. We compared the iterative approach with exact block leverage scores, i.e., $\beta=1$, against analogous approaches using the block-SRHT \cite{CMPH23} and Gaussian sketches, and regular SD.

For Figure \ref{varying_its_t_distr} we ran 600 iterations on six different instances for each approach, and varied $\xi$ for each experiment by logarithmic factors of $\xi^\times=2/\sigma_{\max}(\Ab)^2$. The average $\log$ residual errors $\log_{10}\big(\|\xb^\star-\xbh\|_2\big/\sqrt{N}\big)$ are depicted in Figure \ref{varying_its_t_distr}, and reported in Table \ref{log_res_err_table}. In Figure \ref{conv_fig_t_distr} we observe the convergence of the different approaches, in the case where $\xi\approx 0.42$. In this case, our method (block-lvg) outperforms the Gaussian sketching approach and regular SD. The fact that the performance of the block-SRHT is similar to Algorithm \ref{alg_1_pseudocode}, reflects the result of Proposition \ref{comparison_two_embds} in Appendix \ref{comp_SRHT_app}.

\begin{figure}[h]
  \centering
  \includegraphics[scale=.17]{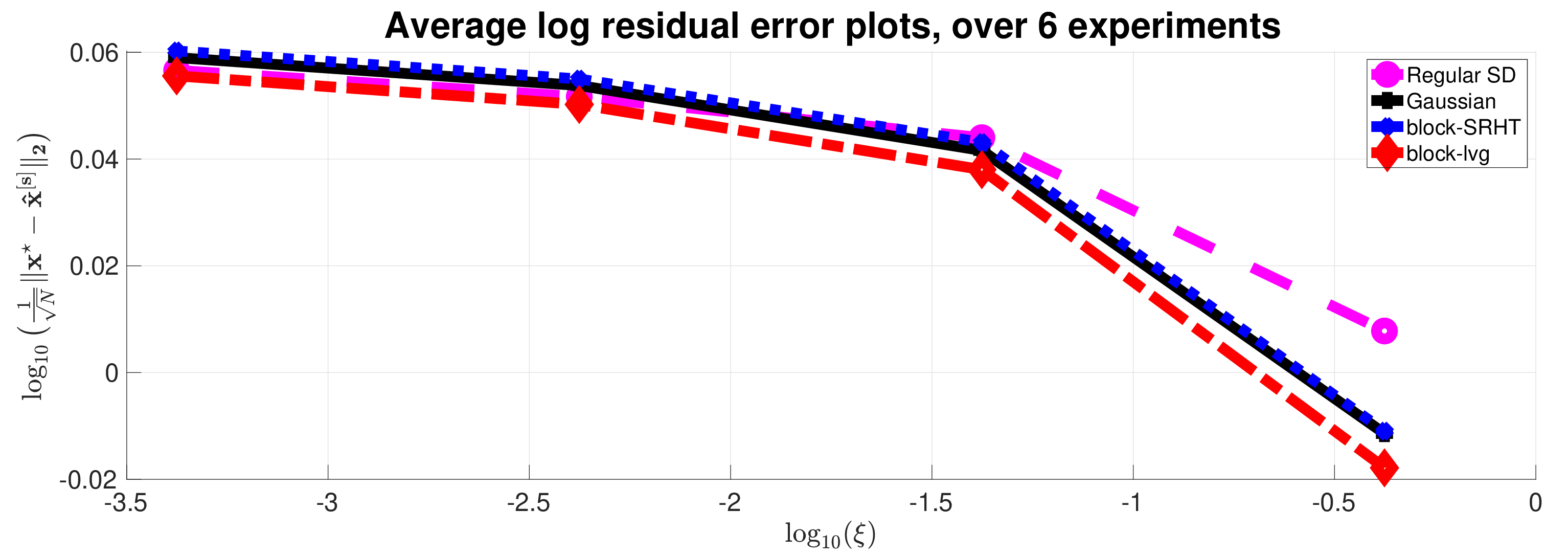}
  \caption{Residual error for varying $\xi_s$.}
  \label{varying_its_t_distr}
\end{figure}

\begin{center}
\begin{table}[h]
\centering
\begin{tabular}{ |p{1.6cm}||p{1cm}|p{1cm}|p{1cm}|p{1cm}| }
\hline
\multicolumn{5}{|c|}{\textbf{Average $\log$ residual error: $\log_{10}\big(\|\xb^\star-\xbh\|_2\big/\sqrt{N}\big)$}} \\
\hline
\textbf{$\log_{10}(\xi)$} & 0.0004 &
0.0042 & 0.0421 & 0.4207 \\
\hline
\hline
\textbf{Regular SD} & 0.0566 & 0.0517 & 0.0440 & 0.0078 \\
\hline
\textbf{Gaussian} & 0.0590 & 0.0538 & 0.0416 & -0.0114 \\
\hline
\textbf{block-SRHT} & 0.0603 & 0.0550 & 0.0431 & -0.0110 \\
\hline
\textbf{block-lvg} & 0.0556 & 0.0502 & 0.0380 & -0.0178 \\
\hline
\end{tabular}
\caption{Average $\log$ residual errors, for six instances of SD with fixed steps, when performing Gaussian sketching with updated sketches, iterative block-SRHT and iterative block leverage score sketching, and uncoded SD.}
\label{log_res_err_table}
\end{table}
\end{center}

\begin{figure}[h]
  \centering
  \includegraphics[scale=.17]{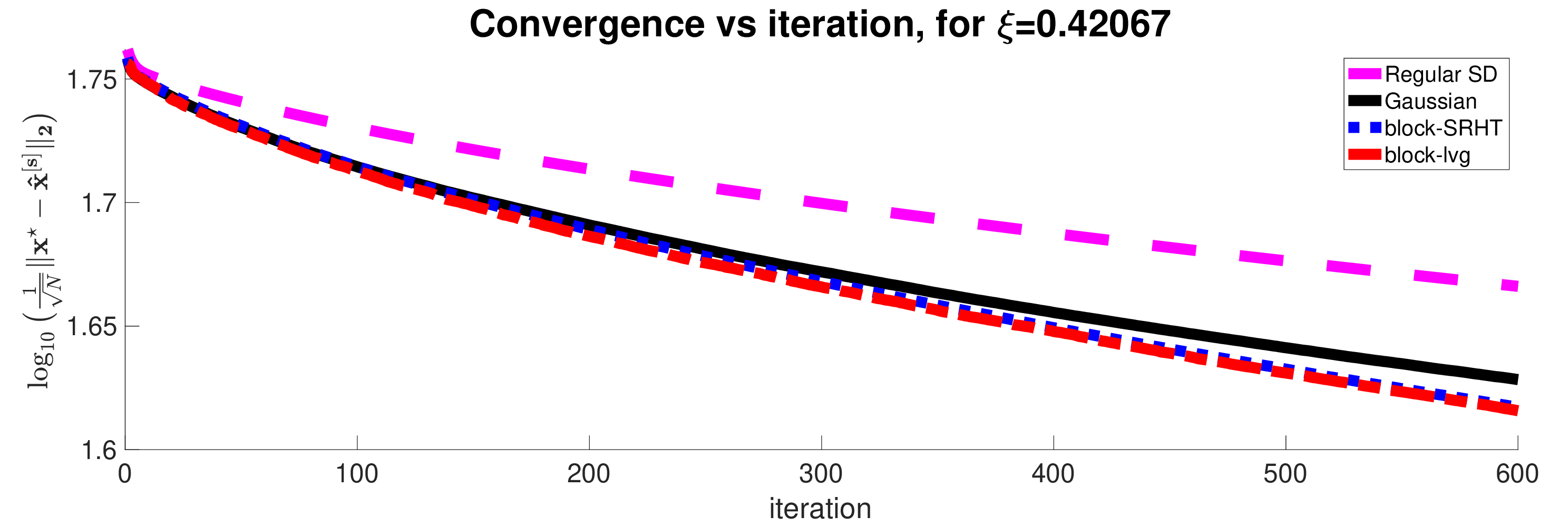}
  \caption{$\log$ residual error convergence.}
  \label{conv_fig_t_distr}
\end{figure}

We also considered the same experiment with $\Ab$ drawn from a $t$-distribution, with optimal step-size
$$ \xi_s^{\star} = \langle\Ab g^{[s]},\Ab\xb^{[s]}-\bb\rangle\big/\|\Ab g^{[s]}\|_2^2 $$
at each iteration. From Figure \ref{opt_step_t_distr}, we observe that our iterative sketching approach outperforms Gaussian sketching with updated sketches, and iterative sketching is superior to non-iterative. Furthermore, we validate Theorem \ref{SGD_nonunif_thm} and Lemma \ref{eq_opt_sols}, as our iterative sketching approach and SSD have similar convergence. Additionally, it was observed that in some cases our iterative sketching method would outperform regular SD (and SSD). We also compared our method with updated step-sizes $\xi_s^{\star}$, to iterative and non-iterative approaches according to the block leverage score sampling, block-SRHT, and Rademacher sketching methods \cite{Ach03,DMMS11,CMPH23}, in which our corresponding approach again produced more accurate final approximations. Further experiments (not shown) were performed on various distributions for $\Ab$, in which similar results to those of Figures \ref{varying_its_t_distr}, \ref{conv_fig_t_distr} and \ref{opt_step_t_distr} were observed.

\begin{figure}[h]
  \centering
  \includegraphics[scale=.17]{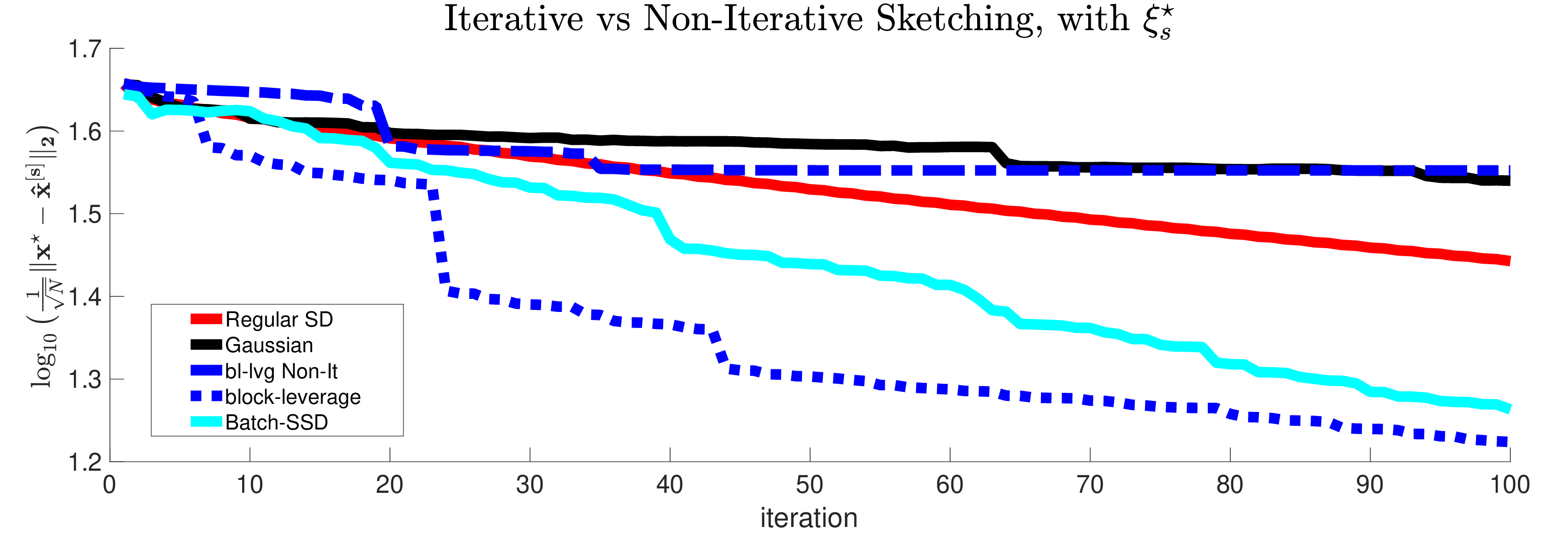}
  \caption{$\log$ convergence with $\xi_s^{\star}$.}
  \label{opt_step_t_distr}
\end{figure}

The quality of our approximate block leverage sampling distribution $\Pib_{\{K\}}$ was evaluated through an empirical mother runtime distribution taken from a distributed AWS experiment \cite{BP19a}. We considered the mother runtime distribution from the aforementioned experiment in which 500 computational servers were used to solve a large linear regression problem. The first step was to determine $\rh_{\{K\}}$ according to \eqref{appr_distr}, and then Algorithm \ref{alg_replicas_mod_R} was performed to determine $r_{\{K\}}$ so that $\sum_{i=1}^K r_i=500$, where $K=100$. We achieved satisfactory approximations $\Pib_{\{K\}}$ to $\Pi_{\{K\}}$, for varying ending times $T$ (measured in seconds), i.e., $\beta_{\Pib}\approx1$ for each $T$. In Table \ref{misestimations_table} we present the resulting $q(T)$, misestimation factors $\beta_{\Pib}$, and metric $\Delta_{\Pi,\Pib}$, for each $T$.

\begin{center}
\begin{table}[h]
\centering
\begin{tabular}{ |p{.9cm}||p{.6cm}|p{.6cm}|p{.6cm}|p{.6cm}|p{.6cm}|p{.6cm}|p{.6cm}| }
\hline
\multicolumn{8}{|c|}{\textbf{Expansion Networks Misestimation Factors and Metrics}} \\
\hline
\hline
{\small$t\gets T$} & 6 & 12 & 24 & 42 & 48 & 54 & 60 \\
\hline
\hline
$q(T)$ & {\small 142} & {\small 226} & {\small 230} & {\small 230} & {\small 262} & {\small 267} & {\small 499} \\ \hline
$\Delta_{\Pi,\Pib}$ & {\footnotesize .8859} & {\footnotesize .9078} & {\footnotesize .9079} & {\footnotesize .9079} & {\footnotesize .9079} & {\footnotesize .9079} & {\footnotesize .9079} \\ \hline
$\beta_{\Pib}$ & {\footnotesize .9407} & {\footnotesize .9720} & {\footnotesize .9727} & {\footnotesize .9727} & {\footnotesize .9727} & {\footnotesize .9727} & {\footnotesize .9727} \\ \hline
\end{tabular}
\caption{Misestimations with AWS server completion times.}
\label{misestimations_table}
\end{table}
\end{center}

\section{Conclusion and Future Work}
\label{concl_sec}

In this paper, we showed how one can exploit results from RandNLA to distributed CC, in the context of GC. By taking enough samples, or equivalently after waiting long enough, the approximation errors can be made arbitrarily small. In terms of CC, the advantages are that encodings correspond to a scalar multiplication, and no decoding step is required. Our proposed method has advantages relative to other CC approximation methods \cite{SH22,BWE20,CP18,CPE17,CMPH22,CMPH23,FD16,GW21,KKR19,RTTD17}, by incorporating information from our dataset into our scheme. These advantages include parallelization of block leverage score sketching. This is accomplished by leveraging the fact that non-straggling servers correspond to the sampled partitions of the proposed sketching algorithm.

Our proposed sketching also leads to faster algorithms. As observed experimentally in Section \ref{exper_sec}, our scheme outperforms Gaussian sketching and slightly outperforms block-SRHT sketching techniques. The improvements are also greater when adaptive step-sizes are considered. As was emphasized in the paper, iterative sketching has not been thoroughly studied in first order methods, which is what motivated our approach.

We quantified how much better the proposed solution is, both empirically and theoretically. The proposed method (Algorithm \ref{alg_SD_iter_sketching}) compares favorably to other sketching methods in terms of empirical convergence rate and approximation error. As contrasted to other approximate GC methods, our algorithm uses data dependent spectral approximations, that enables the algorithm to quantify the residual error of the gradient approximation at each step. In terms of convergence guarantees, we obtain the same convergence rate to related works, e.g., \cite{BWE20}, which matches the convergence rate of SSD.

Recently, papers have addressed cases where a certain subset of servers are persistent or adversarial stragglers \cite{CT22,BOUG22}. To adapt the theory developed here to the case where adversarial stragglers are present, would require modifying the replication across the non-persistent stragglers, in order to better approximate the sampling probability $\Pi_{\{K\}}$ through $\Pib_{\{K\}}$. This is worthwhile future work.

Even though we focused on leverage score sampling for linear regression, other sampling algorithms and problems could benefit by designing analogous replication schemes. One such problem is the column subset selection problem, which can be used to compute partial $\SVD$, $\QR$ decompositions, as well as low-rank approximations \cite{Mah16}. As for the sampling technique we studied, one can judiciously define a sampling distribution to approximate solutions to such problems \cite{BMD09}, which are known to be \textsf{NP-hard} under the Unique Games Conjecture assumption \cite{Civ14}.

Furthermore, existing block sampling algorithms can also benefit from the proposed expansion networks, e.g., $CR$-multiplication \cite{CPH20c} and $CUR$ decomposition \cite{OJXE18}. For instance, a coded matrix multiplication algorithm of minimum variance can been designed, where the sampling distribution proposed in \cite{CPH20c} is used to determine the replication numbers of the expansion network. In terms of matrix decompositions, the block leverage score algorithm of \cite{OJXE18} can be used to distributively determine an additive $\epsilon$-error decomposition of $\Ab$, in the CC setting. Another future direction is generalizing existing tensor product and factorization algorithms to block sampling, according to both approximate and exact sampling distributions, in order to make them practical for distributed environments.


\section*{Acknowledgment}

\noindent The research presented in this paper was partially supported by the U.S. National Science Foundation (NSF) under grants NSF-2217058; NSF-2133484; NSF-2246157; ECCS-2037304 and DMS-2134248, the NSF CAREER Award under grant CCF-2236829, the U.S. Army Research Office under grant W911NF2310343; the U.S. Army Research Office Early Career Award under grant W911NF-21-1-0242, the U.S. Department of Energy under grant DE-NA0003921, and by the Office of Naval Research under grant N00014-24-1-2164. We also thank the anonymous reviewers and Aditya Ramamoorthy for their constructive comments and feedback.

\appendices

\section{Proofs of Subsection \ref{lvg_sec}}
\label{lvg_app}

In this appendix, we provide the proof of Theorem \ref{subsp_emb_thm_lvg} and Corollary \ref{sub_opt_cor}. We first recall the following matrix Chernoff bound \cite[Fact 1]{Woo14}, and prove Proposition \ref{prop_Sp_lvg}.

\begin{Thm}[Matrix Chernoff Bound, {\cite[Fact 1]{Woo14}}]
\label{matr_Chern}
  Let $\Xb_1,\ldots,\Xb_q$ be independent copies of a symmetric random matrix $\Xb\in\R^{d\times d}$, with $\E[\Xb]=\bold{0}_{d\times d}, \|\Xb\|_2\leqslant \gamma$, $\|\E[\Xb^\top\Xb]\|_2\leqslant \sigma^2$. Let $\Zb=\frac{1}{q}\sum_{i=1}^q\Xb_i$. Then $\forall\epsilon>0$:
  \begin{equation*}
  \label{matr_Chern_expr}
    \Pr\Big[\|\Zb\|_2>\epsilon\Big]\leqslant2d\cdot\exp\left(-\frac{q\epsilon^2}{\sigma^2+\gamma\epsilon/3}\right).
  \end{equation*}
\end{Thm}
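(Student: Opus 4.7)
The plan is to prove the bound using the matrix-Bernstein approach via the matrix Laplace transform method of Ahlswede--Winter and Tropp. First I would reduce spectral norm control to one-sided largest eigenvalue control: since each $\Xb_i$ is symmetric, so is $\Sb_q := \sum_{i=1}^q \Xb_i = q\Zb$, whence $\|\Zb\|_2 = \max\{\lambda_{\max}(\Zb), \lambda_{\max}(-\Zb)\}$. A union bound reduces the task to proving
$$\Pr\!\left[\lambda_{\max}(\Sb_q) > q\epsilon\right] \leqslant d \cdot \exp\!\left(-\tfrac{q\epsilon^2}{\sigma^2 + \gamma\epsilon/3}\right),$$
with an identical estimate holding for $-\Sb_q$ (which inherits all the hypotheses), absorbing the factor of $2$ in the conclusion.

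For any $\theta > 0$, Markov's inequality applied to $\exp(\theta \lambda_{\max}(\Sb_q))$ together with the standard bound $e^{\lambda_{\max}(\Mb)} \leqslant \text{tr}(e^{\Mb})$ for symmetric $\Mb$ yields the master estimate
$\Pr\!\left[\lambda_{\max}(\Sb_q) > q\epsilon\right] \leqslant e^{-\theta q\epsilon}\, \E\!\left[\text{tr}\, e^{\theta \Sb_q}\right]$. To control the expectation over a sum of independent symmetric matrices I would invoke Lieb's concavity theorem (equivalently, the Ahlswede--Winter / Tropp subadditivity of the matrix cumulant generating function), which gives
$$\E\!\left[\text{tr}\,\exp\!\left(\theta \sum_{i=1}^q \Xb_i\right)\right] \leqslant \text{tr}\,\exp\!\left(\sum_{i=1}^q \log \E\!\left[e^{\theta \Xb_i}\right]\right).$$
Each matrix cumulant is then bounded in the Loewner order: the scalar inequality $e^{\theta x} \leqslant 1 + \theta x + \tfrac{\theta^2 x^2 / 2}{1 - \theta|x|/3}$ for $\theta|x| < 3$, lifted to operators via the spectral theorem and combined with $\E[\Xb_i] = \bold{0}_{d\times d}$ and $\|\Xb_i\|_2 \leqslant \gamma$, yields $\E[e^{\theta \Xb_i}] \preceq \Ib_d + g(\theta)\, \E[\Xb_i^2]$ with $g(\theta) := (\theta^2/2)/(1 - \theta\gamma/3)$, valid for $0 < \theta < 3/\gamma$. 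Operator monotonicity of the logarithm combined with the (operator-lifted) inequality $\log(\Ib_d + \Mb) \preceq \Mb$ for $\Mb \succeq \bold{0}_{d\times d}$ then delivers $\log \E[e^{\theta \Xb_i}] \preceq g(\theta)\, \E[\Xb_i^2]$.

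Assembling the ingredients and using $\|\sum_{i=1}^q \E[\Xb_i^2]\|_2 \leqslant q\sigma^2$ together with monotonicity of $\text{tr}\,\exp$ on the Loewner order gives
$$\Pr\!\left[\lambda_{\max}(\Sb_q) > q\epsilon\right] \leqslant d \cdot \exp\!\left(-\theta q \epsilon + q\sigma^2 g(\theta)\right),$$
and the choice $\theta = \epsilon / (\sigma^2 + \gamma\epsilon/3)$, which lies in $(0,3/\gamma)$, reproduces the Bernstein-type exponent claimed. The main obstacle is the operator-valued MGF bound: in the scalar regime Bernstein's moment trick is elementary, but in the matrix setting the non-commutativity of $\Xb_i$ with its higher powers, combined with the need to preserve inequalities through both $\exp$ and $\log$, makes $\log \E[e^{\theta \Xb_i}] \preceq g(\theta)\E[\Xb_i^2]$ genuinely subtle --- it relies on the transfer rule (scalar inequalities on eigenvalues lift to Loewner inequalities) together with the operator monotonicity of the logarithm. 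The second nontrivial ingredient is Lieb's concavity theorem itself, which is precisely what converts independence of the $\Xb_i$ into a clean additive-in-log-MGF bound inside the trace exponential.
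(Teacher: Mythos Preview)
The paper does not prove this statement; it is quoted verbatim as a known fact from \cite{Woo14} and then applied as a black box in the proof of Proposition~\ref{prop_Sp_lvg}. So there is no ``paper's own proof'' to compare against. Your proposal is the standard Ahlswede--Winter/Tropp derivation of the matrix Bernstein inequality, and the overall strategy (reduction to $\lambda_{\max}$, matrix Laplace transform, Lieb's concavity for subadditivity of log-MGFs, scalar Bernstein MGF bound lifted via the transfer rule, then optimization in $\theta$) is correct.

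One quantitative point: your optimization does not actually reproduce the exponent as stated. With $\theta=\epsilon/(\sigma^{2}+\gamma\epsilon/3)$ one computes $1-\theta\gamma/3=\sigma^{2}/(\sigma^{2}+\gamma\epsilon/3)$, hence $q\sigma^{2}g(\theta)=\tfrac{1}{2}\cdot\tfrac{q\epsilon^{2}}{\sigma^{2}+\gamma\epsilon/3}$, and the resulting exponent is $-\tfrac{q\epsilon^{2}/2}{\sigma^{2}+\gamma\epsilon/3}$, not $-\tfrac{q\epsilon^{2}}{\sigma^{2}+\gamma\epsilon/3}$. This factor of $1/2$ is what the standard matrix Bernstein proof yields (cf.\ Tropp's user-friendly tail bounds); the sharper constant in the statement as quoted appears to be an artifact of the source. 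For the purposes of this paper the discrepancy is harmless, since the result is only used to conclude $q=\Theta\big(\tfrac{d}{\tau}\log(2d/\delta)/(\beta\epsilon^{2})\big)$ in Theorem~\ref{subsp_emb_thm_lvg}, and constant factors are absorbed into the $\Theta$.
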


\begin{Prop}
\label{prop_Sp_lvg}
The sketching matrix $\Sbwt$ of Algorithm \ref{alg_1_pseudocode} with $\Pit_i\geqslant \beta\Pi_i$ for all $i$ and $\beta\in(0,1]$, guarantees
\begin{equation}
\label{conc_lvg_b}
  \Pr\Big[\|\Ib_d-\Ub^\top\Sbwt^\top\Sbwt\Ub\|_2>\epsilon\Big] \leqslant 2d\cdot e^{-q\epsilon^2\Theta(\beta/d)}
\end{equation}
for any $\epsilon>0$, and $q=r/\tau>d/\tau$.
\end{Prop}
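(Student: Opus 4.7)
The plan is to apply the matrix Chernoff bound of Theorem~\ref{matr_Chern} to a centered and normalized representation of $\Ib_d - \Ub^\top \Sbwt^\top \Sbwt \Ub$. For each of the $q$ sampling trials, let $j(i)\in\N_K$ denote the block index drawn according to $\Pit_{\{K\}}$, and set
\begin{equation*}
  \Yb_i \coloneqq \frac{1}{\Pit_{j(i)}}\,\Ub_{(\K_{j(i)})}^\top \Ub_{(\K_{j(i)})}, \qquad \Xb_i \coloneqq \Yb_i - \Ib_d .
\end{equation*}
Unpacking the Kronecker-product structure $\Sbwt=(\Db\Omb)\otimes\Ib_\tau$ gives $\Ub^\top\Sbwt^\top\Sbwt\Ub = \tfrac{1}{q}\sum_{i=1}^q\Yb_i$, while the partition identity $\sum_{j=1}^K \Ub_{(\K_j)}^\top\Ub_{(\K_j)} = \Ub^\top\Ub = \Ib_d$ yields $\E[\Yb_i]=\Ib_d$ and hence $\E[\Xb_i]=\bold{0}_{d\times d}$. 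Consequently, $\Zb\coloneqq\tfrac{1}{q}\sum_{i=1}^q\Xb_i$ coincides with $\Ub^\top\Sbwt^\top\Sbwt\Ub - \Ib_d$, which is exactly the matrix whose spectral norm we need to bound.

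Next I would control the two quantities required by the Chernoff estimate. Using $\|\Ub_{(\K_j)}^\top \Ub_{(\K_j)}\|_2 \leqslant \|\Ub_{(\K_j)}\|_F^2 = d\,\Pi_j$ together with the hypothesis $\Pit_j\geqslant \beta\,\Pi_j$, I obtain the uniform spectral bound $\|\Yb_i\|_2 \leqslant d\,\Pi_{j(i)}/\Pit_{j(i)} \leqslant d/\beta$, so that $\|\Xb_i\|_2 = O(d/\beta)$ by the triangle inequality, giving $\gamma = O(d/\beta)$. For the variance proxy I apply the Loewner inequality $\Mb^2 \preceq \|\Mb\|_2\cdot\Mb$, valid for any PSD matrix $\Mb$ (immediate from its spectral decomposition), to each $\Mb_j = \Ub_{(\K_j)}^\top\Ub_{(\K_j)}$:
\begin{equation*}
  \E[\Yb_i^2] = \sum_{j=1}^K \frac{1}{\Pit_j}\,\Mb_j^2 \;\preceq\; \sum_{j=1}^K \frac{\|\Mb_j\|_2}{\Pit_j}\,\Mb_j \;\preceq\; \frac{d}{\beta}\sum_{j=1}^K \Mb_j \;=\; \frac{d}{\beta}\,\Ib_d,
\end{equation*}
so that $\|\E[\Xb_i^\top\Xb_i]\|_2 \leqslant \|\E[\Yb_i^2]\|_2 + 1 = O(d/\beta)$, yielding $\sigma^2 = O(d/\beta)$.

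Substituting $\gamma,\sigma^2 = O(d/\beta)$ into Theorem~\ref{matr_Chern} and absorbing the $\gamma\epsilon/3$ contribution into the denominator gives $\Pr[\|\Zb\|_2 > \epsilon] \leqslant 2d\cdot\exp\!\big(-q\epsilon^2\,\Theta(\beta/d)\big)$, which is precisely the claim of Proposition~\ref{prop_Sp_lvg}. The main obstacle I anticipate is the variance computation: the naive estimate $\|\Yb_i\|_2^2 \leqslant (d/\beta)^2$ would introduce a superfluous factor of $d/\beta$ and destroy the scaling, so the Loewner inequality $\Mb^2\preceq\|\Mb\|_2\cdot\Mb$ is essential for telescoping through the partition identity $\sum_j\Mb_j=\Ib_d$. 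Everything else---centering, unpacking $\Sbwt$, the uniform spectral bound, and the final Chernoff substitution---is essentially bookkeeping.
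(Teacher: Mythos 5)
Your proposal is correct and follows essentially the same route as the paper: define the centered per-trial random matrices $\Xb_i=\Yb_i-\Ib_d$, verify $\E[\Xb_i]=\bold{0}_{d\times d}$ via the partition identity $\sum_j \Ub_{(\K_j)}^\top\Ub_{(\K_j)}=\Ib_d$, bound $\gamma$ and $\sigma^2$ by $O(d/\beta)$, and invoke the matrix Chernoff bound of Theorem~\ref{matr_Chern}. The only divergence is in the variance step, where you use the Loewner inequality $\Mb_j^2\preceq\|\Mb_j\|_2\,\Mb_j$ together with $\|\Mb_j\|_2\leqslant d\,\Pi_j$ to telescope through the partition identity --- a cleaner derivation than the paper's manipulation via the complement blocks $\Ub_{(\Kbar_\iota)}$ --- and both arguments land on $\sigma^2=O(d/\beta)$.
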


\begin{proof}{[Proposition \ref{prop_Sp_lvg}]}
In order to use Theorem \ref{matr_Chern}, we first need to define an appropriate symmetric random matrix $\Xb$, whose realizations $\Xb_{\{q\}}$ correspond to the sampling procedure of Algorithm \ref{alg_1_pseudocode}, and $\Sbwt^\top\Sbwt=\frac{1}{q}\sum_{i=1}^q\Xb_i$. The realization of the matrix random variable are
$$ \Xb_i=\Ib_d-\left(\frac{\Ub_{\left(\K_\iota^i\right)}^\top\Ub_{\left(\K_\iota^i\right)}}{\Pit_\iota}\right)=\Ib_d-\left(\frac{\sum_{l\in\K_\iota^i}\Ub_{(l)}^\top\Ub_{(l)}}{\Pit_\iota}\right) $$
where $\K_\iota^i$ indicates the $\iota^{th}$ block of $\Ab\in\R^{N\times d}$, which was sampled at trial $i$. This holds for all $\iota\in\N_K$. The expectation of the symmetric random matrix $\Xb$ is
\begin{align*}
  \E[\Xb] &\overset{\ddagger}{=} \Ib_d-\left(\sum_{j=1}^K\Pit_j\cdot\frac{\Ub_{(\K_j)}^\top\Ub_{(\K_j)}}{\Pit_j}\right)\\
  &= \Ib_d-\sum_{j=1}^K\Ub_{(\K_j)}^\top\Ub_{(\K_j)}\\
  &\overset{\sharp}{=} \Ib_d-\sum_{l=1}^N \Ub_{(l)}^\top\Ub_{(l)}\\
  &= \Ib_d-\Ib_d\\
  &=\bold{0}_{d\times d}
\end{align*}
where $\ddagger$ follows from the fact that each realization $\Xb_i$ corresponding to each $\{\K_j^i\}_{j=1}^K$ of the random matrix is sampled with probability $\Pit_j$, and in $\sharp$ we simplify the expression in terms of rank-1 outer-product matrices. Furthermore, for $\{\ellt_l\}_{l=1}^N$ the corresponding \textit{approximate} leverage scores of $\Ab$
\begin{align*}
  \|\Xb_i\|_2 \ &\overset{\natural}{\leqslant} \|\Ib_d\|_2+\frac{\|\Ub_{\left(\K_\iota^i\right)}^\top\Ub_{\left(\K_\iota^i\right)}\|_2}{\Pit_\iota}\\
  &\overset{\diamond}{\leqslant} 1+\frac{\sum_{l\in\K_\iota^i}\ell_l}{\left(\sum_{l\in\K_\iota^i}\ellt_l\right)\big/d}\\
  &= 1+\frac{d\cdot\Pi_\iota}{\Pit_\iota}\\
  &= 1+\frac{d}{\beta}
\end{align*}
where in $\natural$ we use the triangle inequality on the definition of $\Xb_i$, and in $\diamond$ we use it on the sum of outer-products (the numerator of second summand).

We now upper bound the variance of the copies of $\Xb$:
\begin{align}
  \left\|\E\big[\Xb_i^\top\Xb_i\big]\right\|_2 &= \left\|\E\left[\left(\Ib_d-\left(\Ub_{(\K_\iota)}^\top\Ub_{(\K_\iota)}\big/\Pit_\iota\right)\right)^\top\left(\Ib_d-\left(\Ub_{(\K_\iota)}^\top\Ub_{(\K_\iota)}\big/\Pit_i\right)\right)\right]\right\|_2 \notag\\
  &=\left\|\Ib_d-2\cdot\E\left[(\Ub_{(\K_\iota)}^\top\Ub_{(\K_\iota)}\big/\Pit_\iota\right]+\E\left[\left(\Ub_{(\K_\iota)}^\top\Ub_{(\K_\iota)}\right)^2\big/\Pit_\iota^2\right]\right\|_2\notag\\
  &=\bigg\|2\Big(\Ib_d-\overbrace{\E\left[(\Ub_{(\K_\iota)}^\top\Ub_{(\K_\iota)}\big/\Pit_\iota\right]}^{=\Ib_d}\Big)-\Ib_d+\E\left[\left(\Ub_{(\K_\iota)}^\top\Ub_{(\K_\iota)}\right)^2\big/\Pit_\iota^2\right]\bigg\|_2 \notag\\
  &= \left\|\E\left[\left(\Ub_{(\K_\iota)}^\top\Ub_{(\K_\iota)}\right)^2\big/\Pit_\iota^2\right]-\Ib_d\right\|_2 \notag\\
  &= \left\|\left(\sum_{\iota=1}^K\Pi_\iota\cdot\left(\Ub_{(\K_\iota)}^\top\Ub_{(\K_\iota)}\right)^2\big/\Pit_\iota^2\right)-\Ib_d\right\|_2 \notag\\
  &\overset{\sharp}{\leqslant} \left\|\left(\sum_{\iota=1}^K\frac{1}{\beta}\left(\Ub_{(\K_\iota)}^\top\Ub_{(\K_\iota)}\right)^2\big/\Pit_\iota\right)-\Ib_d\right\|_2 \notag\\
  &= \left\|\left(\sum_{\iota=1}^K\frac{d}{\beta}\cdot\frac{\left(\Ub_{(\K_\iota)}^\top\Ub_{(\K_\iota)}\right)^2}{\|\Ub_{(\K_\iota)}\|_F^2}\right)-\Ib_d\right\|_2 \notag\\
  &\leqslant \left\|\left(\sum_{\iota=1}^K\frac{d}{\beta}\cdot\frac{\left(\Ub_{(\K_\iota)}^\top\Ub_{(\K_\iota)}\right)^2}{\|\Ub_{(\K_\iota)}\|_2^2}\right)-\Ib_d\right\|_2 \notag\\
  &\overset{\flat}{=} \left\|\frac{d}{\beta}\cdot\left(\sum_{\iota=1}^K\left(\Ub_{(\K_\iota)}^\top\Ub_{(\K_\iota)}\right)^2\right)-\Ib_d\right\|_2 \notag\\
  &\overset{\natural}{\leqslant} \left\|\frac{d}{\beta}\left(\sum_{\iota=1}^K\left(\Ub_{(\K_\iota)}^\top\Ub_{(\K_\iota)}\right)\left(\Ib_d-\Ub_{(\Kbar_\iota)}^\top\Ub_{(\Kbar_\iota)}\right)\right)-\Ib_d\right\|_2 \notag\\
  &= \bigg\|\frac{d}{\beta}\overbrace{\left(\sum_{\iota=1}^K\Ub_{(\K_\iota)}^\top\Ub_{(\K_\iota)}\right)}^{=\Ib_d}-\frac{d}{\beta}\left(\sum_{\iota=1}^K\left(\Ub_{(\K_\iota)}^\top\Ub_{(\K_\iota)}\right)\left(\Ub_{(\Kbar_\iota)}^\top\Ub_{(\Kbar_\iota)}\right)\right) -\Ib_d\bigg\|_2 \notag\\
  &= \bigg\|(d/\beta-1)\cdot\Ib_d-\frac{d}{\beta}\bigg(\sum_{\iota=1}^K\Ub_{(\K_\iota)}^\top\overbrace{\left(\Ub_{(\K_\iota)}\Ub_{(\Kbar_\iota)}^\top\right)}^{\bold{0}_{d\times d}}\Ub_{(\Kbar_\iota)}\bigg)\bigg\|_2 \notag\\
  &= \left\|(d/\beta-1)\cdot\Ib_d\right\|_2 \notag\\
  &= d/\beta-1 \notag
\end{align}
where in $\sharp$ we used the fact $\Pi_\iota/\Pit_\iota\leqslant1/\beta$, in $\flat$ that $\|\Ub_{(\K_\iota)}\|_2^2=1$, and in $\natural$ that $\Ub_{(\K_\iota)}^\top\Ub_{(\K_\iota)}=\Ib_d-\Ub_{(\Kbar_\iota)}^\top\Ub_{(\Kbar_\iota)}$ for each $\iota$.

According to Theorem \ref{matr_Chern}, we substitute $\gamma=1+d$ and $\sigma^2=d/\beta-1$ to get
\begin{align*}
  \frac{1}{q}\sum\limits_{i=1}^{q}\Xb_i &= \frac{1}{q}\sum\limits_{i=1}^{q}\left(\Ib_d-\frac{\Ub_{\left(\K_{j(i)}\right)}^\top\Ub_{\left(\K_{j(i)}\right)}}{\Pit_{j(i)}}\right)\\
  &= \Ib_d-\frac{1}{q}\left(\sum\limits_{i=1}^{q}\frac{\Ub_{\left(\K_{j(i)}\right)}^\top\Ub_{\left(\K_{j(i)}\right)}}{\Pit_{j(i)}}\right)\\
  &= \Ib_d-\Ub^\top\Sbwt^\top\Sbwt\Ub
\end{align*}
where the last equality follows from the definition of $\Sbwt$. Putting everything together into Theorem \ref{matr_Chern}, we get that
$$ \Pr\big[\|\Ib_d-\Ub^\top\Sbwt^\top\Sbwt\Ub\|_2>\epsilon\big]\leqslant2d\cdot e^{-q\epsilon^2\Theta(\beta/d)}. $$
\end{proof}

\begin{proof}{[Theorem \ref{subsp_emb_thm_lvg}]}
By substituting
$$ q=\Theta\left(\frac{d}{\tau}\log{(2d/\delta)}/(\beta\epsilon^2)\right) $$
in \eqref{conc_lvg_b} and taking the complementary event, we attain the desired statement.
\end{proof}

Before we prove Corollary \ref{sub_opt_cor}, we introduce the notion of \textit{block $\alpha$-balanced}, which is a generalization of \textit{$\alpha$-balanced} from \cite{PW16}. The sampling distribution $\Pi_{\{K\}}$ is said to be \textit{block $\alpha$-balanced}, if
\begin{equation*}
\label{bl_alpha_bal}
  \max_{i\in\N_K}\left\{\Pi_i\right\} \leqslant \frac{\alpha}{N/\tau}=\dfrac{\alpha}{K}
\end{equation*}
for some constant $\alpha$ independent of $K$ and $q$. Furthermore, in our context, if the individual leverage scores $\pi_{\{N\}}$ are $\alpha$-balanced for $\alpha$ independent of $N$ and $r$, then the block leverage scores $\Pi_{\{K\}}$ are block $\alpha$-balanced, as
\begin{equation}
\label{impl_bal_wts}  
  \max_{i\in\N_K}\left\{\Pi_i\right\} \leqslant \tau\cdot\max_{j\in\N_N}\left\{\pi_j\right\} \leqslant \tau\cdot\frac{\alpha}{N} = \frac{\alpha}{N/\tau} = \frac{\alpha}{K} .
\end{equation}

\begin{proof}{[Corollary \ref{sub_opt_cor}]}
From the proof of {\cite[Theorem 1]{PW16}}, we simply need to show that
$$ \big\|\E\big[\Sbwt^\top\big(\Sbwt\Sbwt^\top\big)^{-1}\Sbwt\big]\big\|_2\leqslant\eta\cdot\frac{r}{N} $$
for $\Sbwt$ a single sketch produced in Algorithm \ref{alg_1_pseudocode}, and an appropriate constant $\eta$ independent of $N$ and $r$. We assume that the individual leverage scores $\pi_{\{N\}}$ are $\alpha$-balanced, where $\alpha$ is a constant independent of $N$ and $r$. By \eqref{impl_bal_wts}, it follows that the block leverage scores $\Pi_{\{K\}}$ are block $\alpha$-balanced, i.e., $\Pi_i\leqslant \Pi_K \leqslant \frac{\alpha}{K}$ for all $i\in\N_{K-1}$.

A direct computation shows that
\begin{align*}
  \big(\Sbwt\Sbwt^\top\big)^{-1} &= \Big((\Db\cdot\Omb\otimes\Ib_\tau)\cdot(\Omb^\top\Db^\top\otimes\Ib_\tau)\Big)^{-1}\\
  &= \Big(\big(\Db\cdot\Omb\cdot\Omb^\top\Db^\top\big)\otimes\Ib_\tau\Big)^{-1}\\
  &= \big(\Db\cdot\Omb\cdot\Omb^\top\Db^\top\big)^{-1}\otimes\Ib_\tau
\end{align*}
and consequently
\begin{align*}
  \Sbwt^\top\big(\Sbwt\Sbwt^\top\big)^{-1}\Sbwt &= \big(\Omb^\top\cdot\Db^\top\otimes\Ib_\tau\big) \cdot \Big(\big(\Db\cdot\Omb\cdot\Omb^\top\cdot\Db^\top\big)^{-1}\otimes\Ib_\tau\Big)\cdot(\Db\cdot\Omb\otimes\Ib_\tau) \\
  &= \Big(\Omb^\top\cdot\Db^\top \cdot \big(\Db\cdot\Omb\cdot\Omb^\top\cdot\Db^\top\big)^{-1}\otimes\Ib_\tau\Big)\cdot(\Db\cdot\Omb\otimes\Ib_\tau) \\
  &= \underbrace{\Big(\Omb^\top\cdot\Db^\top \cdot \big(\Db\cdot\Omb\cdot\Omb^\top\cdot\Db^\top\big)^{-1}\cdot\Db\cdot\Omb\Big)}_{\coloneqq Z\in\R_{\geqslant0}^{K\times K}}\otimes\Ib_\tau \\
\end{align*}
where $Z=\sum_{\iota=1}^qZ_\iota$, for $\{Z_\iota\}_{\iota=1}^q$ rank-1 outer-product matrices of size $K\times K$ corresponding to each sampling trial, through $\Omb$. For each sampling trial, we know that the $i^{th}$ block is sampled with probability $\Pi_i$. Furthermore, if the $i^{th}$ block is sampled at iteration $\iota$, it follows that
\begin{align*}
  Z_\iota &= \eb_i\cdot\sqrt{\frac{1}{q\Pi_i}}\cdot\left(\eb_i^\top\cdot\left(\sqrt{\frac{1}{q\Pi_i}}\right)^2\cdot\eb_i\right)^{-1}\cdot\sqrt{\frac{1}{q\Pi_i}}\cdot\eb_i^\top \\
  &= \eb_i\cdot\sqrt{\frac{1}{q\Pi_i}}\cdot\left(\sqrt{q\Pi_i}\right)^2\cdot\sqrt{\frac{1}{q\Pi_i}}\cdot\eb_i^\top \\
  &= \eb_i\cdot\eb_i^\top
\end{align*}
hence
\begin{equation*}
  \E\left[\Sbwt^\top\big(\Sbwt\Sbwt^\top\big)^{-1}\Sbwt\right] = \E_\Omb\left[\sum_{j=1}^K\eb_j\cdot\eb_j^\top\right]\otimes\Ib_\tau = \left(\sum_{j=1}^K\E_\Omb\left[\eb_j\cdot\eb_j^\top\right]\right)\otimes\Ib_\tau = \Qb\otimes\Ib_\tau
\end{equation*}
where $\Qb=\diag\big(\{h_j\}_{j=1}^q\big)$, for $h_i=1-(1-\Pi_i)^q$ the probability that the $i^{th}$ block was sampled at least once. Since we assume that the blocks $\Ab_{\{K\}}$ are indexed in ascending order according their block leverage scores, i.e., $\Pi_i\leqslant\Pi_{i+1}$ for all $i\in\N_{K-1}$, it follows that $h_i\leqslant h_K$ for all $i$, thus
$$ \E\left[\Sbwt^\top\big(\Sbwt\Sbwt^\top\big)^{-1}\Sbwt\right] = \diag\left(\{h_j\}_{j=1}^q\right)\otimes\Ib_\tau \preceq h_K\cdot\Ib_N = \big(1-(1-\Pi_K)^q\big)\cdot\Ib_N \preceq q\Pi_K\cdot\Ib_N . $$
Consequently, since $\Pi_{\{K\}}$ are block $\alpha$-balanced, we have
\begin{equation*}
  \left\|\E\left[\Sbwt^\top\big(\Sbwt\Sbwt^\top\big)^{-1}\Sbwt\right]\right\|_2 \leqslant q\cdot\Pi_K \leqslant \alpha\cdot\frac{q}{K} = \alpha\cdot\frac{r}{N} .
\end{equation*}
This completes the proof, as we can take $\eta=\alpha$.
\end{proof}

\section{Concrete Example of Induced Sketching}
\label{app_example}

In this appendix, we give a simple demonstration of the induced sketching matrices, through our proposed GC approach of Algorithm \ref{app_GC_alg}. For simplicity, we will consider exact sampling, with $\Pi_{\{K\}}=\{3/20,3/20,4/20,5/20,5/20\}$ for $K=5$, an arbitrary large block size of $\tau$, and $m=20$. The optimal replication numbers resulting from this distribution are $r^{\star}_{\{K\}}=\{3,3,4,5,5\}$, hence $m=R$. In order to obtain a reduced dimension $r$ which is 60\% of the original $N$, we carry out $q=3$ sampling trials at each iteration.

Let the resulting index multisets corresponding to the encoded pairs $(\At_j,\bt_j)$ of the first four iterations, along with the resulting estimated gradients, be:
\begin{enumerate}
  \item $\Scal^{[1]}=\{1,4,5\} \quad \implies \quad \gh^{[1]}=\nabla_\xb L_\Sb\left(\Sbwt_{[1]},\Ab,\bb;\xb^{[1]}\right)=\gh_1^{[1]}+\gh_4^{[1]}+\gh_5^{[1]}$\\
  \item $\Scal^{[2]}=\{3,5,5\} \quad \implies \quad \gh^{[2]}=\nabla_\xb L_\Sb\left(\Sbwt_{[2]},\Ab,\bb;\xb^{[2]}\right)=\gh_3^{[2]}+\gh_5^{[2]}+\gh_5^{[2]}$
  \item $\Scal^{[3]}=\{2,4,5\} \quad \implies \quad \gh^{[3]}=\nabla_\xb L_\Sb\left(\Sbwt_{[3]},\Ab,\bb;\xb^{[3]}\right)=\gh_2^{[3]}+\gh_4^{[3]}+\gh_5^{[3]}$
  \item $\Scal^{[4]}=\{4,1,4\} \quad \implies \quad \gh^{[4]}=\nabla_\xb L_\Sb\left(\Sbwt_{[4]},\Ab,\bb;\xb^{[4]}\right)=\gh_4^{[4]}+\gh_1^{[4]}+\gh_4^{[4]}$.
\end{enumerate}
Considering $\Scal^{[1]}$ and $\Scal^{[2]}$ the first two iterations of this example, the index subsets are $\{\K_1^1,\K_4^2,\K_5^3\}$ and $\{\K_3^1,\K_5^2,\K_5^3\}$ respectively. Then, the corresponding induced block leverage score sketching matrices of Algorithm \ref{alg_1_pseudocode}, are:
\begin{enumerate}
  \item $\Sbwt_{[1]} = \begin{pmatrix} 1/\sqrt{3\Pi_1} & 0 & 0 & 0 & 0 \\ 0 & 0 & 0 & 1/\sqrt{3\Pi_4} & 0 \\ 0 & 0 & 0 & 0 & 1/\sqrt{3\Pi_5} \end{pmatrix} \otimes\Ib_\tau$\\
    {\white Neo}
  \item $ \Sbwt_{[2]} = \begin{pmatrix} 0 & 0 & 1/\sqrt{3\Pi_3} & 0 & 0 \\ 0 & 0 & 0 & 0 & 1/\sqrt{3\Pi_5} \\ 0 & 0 & 0 & 0 & 1/\sqrt{3\Pi_5} \end{pmatrix}\otimes\Ib_\tau$\\
    {\white Neo}
  \item $\Sbwt_{[3]} = \begin{pmatrix} 0 & 1/\sqrt{3\Pi_2} & 0 & 0 & 0 \\ 0 & 0 & 0 & 1/\sqrt{3\Pi_4} & 0 \\ 0 & 0 & 0 & 0 & 1/\sqrt{3\Pi_5} \end{pmatrix}\otimes\Ib_\tau$\\
    {\white Neo}
  \item $\Sbwt_{[4]} = \begin{pmatrix} 0 & 0 & 0 & 1/\sqrt{3\Pi_4} & 0 \\ 1/\sqrt{3\Pi_1} & 0 & 0 & 0 & 0 \\  0 & 0 & 0 & 1/\sqrt{3\Pi_4} & 0 \end{pmatrix}\otimes\Ib_\tau$
\end{enumerate}
each of which are of size $(3\tau)\times(5\tau)$.

\section{Comparison to the block-SRHT}
\label{comp_SRHT_app}

An alternative view point is that the matrix
$$ \Ubt_{\mathrm{exp}}\coloneqq\PsiB\cdot(\Sigb\Vb^\top)^{-1} $$
has uniform Frobenius block scores, which further justifies the proposed GC approach for homogeneous servers. This resembles the main idea behind the SRHT \cite{AC06,DMMS11} and different variants known as \textit{block-SRHT} \cite{CMPH22,BBGL22}, where a random projection is applied to the data matrix to \textit{flatten} its leverage scores, i.e., make them close to uniform. These two techniques for flattening the corresponding block scores are vastly different, and the proximity of the corresponding block scores are measured and quantified differently. In contrast to the block-SRHT, the quality of the approximation in our case depends on the misestimation factor $\beta_{\Pit}$, and is not quantified probabilistically.

We note that since $\PsiB$ has repeated blocks from the expansion, the scores we consider in Lemma \ref{Lemma_flat} are not the block leverage scores of $\PsiB$. The Frobenius block scores of $\Ubt_{\mathrm{exp}}$, are in fact the corresponding block leverage scores of $\Abt$, which are replicated in the expansion. Moreover, note that the closer $\beta_{\Pit}$ is to $1$, the closer the sampling distribution according to the Frobenius block scores of $\Ubt_{\mathrm{exp}}$; which we denote by $\Qt_{\{R\}}$, is to being exactly uniform. We denote the uniform sampling distribution by $\Uu_{\{R\}}$.

\begin{Lemma}
\label{Lemma_flat}
When $\Pit_{\{K\}}=\Pi_{\{K\}}$, the sampling distribution $\Qt_{\{R\}}$ is uniform. When $\Pit_\iota\geqslant\beta_{\Pit}\Pi_\iota$ for $\beta_{\Pit}=\min_{i\in\N_K}\{\Pi_i/\Pit_i\}\in(0,1)$ and all $\iota\in\N_K$, the resulting distribution $\Qt_{\{R\}}$ is approximately uniform, and satisfies  $d_{\Uu,\Qt}\leqslant1\big/(R\beta_{\Pit})$.
\end{Lemma}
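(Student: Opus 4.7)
The plan is to reduce $\Ubt_{\mathrm{exp}}$ to a clean expression, extract the Frobenius block scores in closed form, and then turn the bound on $d_{\Uu,\Qt}$ into a bound on the total variation between $\Pit$ and $\Pi$, which the misestimation hypothesis controls. I would start by substituting the reduced SVD $\Ab=\Ub\Sigb\Vb^\top$ into $\Ubt_{\mathrm{exp}}=\PsiB(\Sigb\Vb^\top)^{-1}$; since $\PsiB=\Ebwt\Gb\Ab$ and $(\Sigb\Vb^\top)^{-1}=\Vb\Sigb^{-1}$, this collapses to $\Ubt_{\mathrm{exp}}=\Ebwt\Gb\Ub$. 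The expansion matrix $\Ebwt=\Eb\otimes\Ib_\tau$ duplicates each of the $K$ row-blocks of $\Gb\Ub$ with multiplicities $r_1,\ldots,r_K$, producing $R=\sum_i r_i$ blocks in $\Ubt_{\mathrm{exp}}$.

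Next I would compute the Frobenius block scores. Each block of $\Gb\Ub$ associated with index $i\in\N_K$ equals $(1/\sqrt{q\Pit_i})\,\Ub_{(\K_i)}$, so its squared Frobenius norm is $\|\Ub_{(\K_i)}\|_F^2/(q\Pit_i)=d\Pi_i/(q\Pit_i)$ by \eqref{norm_block_lvg_sc}. Summing over the $R$ replicated blocks and using $r_i=R\Pit_i$ (that is, taking the induced distribution to equal the approximate target), the total is $\|\Ubt_{\mathrm{exp}}\|_F^2=(d/q)\sum_i r_i\Pi_i/\Pit_i=Rd/q$. Hence for any block index $j\in\N_R$ that is a copy of source block $i(j)$,
\begin{equation*}
  \Qt_j \;=\; \frac{d\Pi_{i(j)}/(q\Pit_{i(j)})}{Rd/q} \;=\; \frac{1}{R}\cdot\frac{\Pi_{i(j)}}{\Pit_{i(j)}}.
\end{equation*}
The first assertion then drops out: when $\Pit=\Pi$, every $\Qt_j$ reduces to $1/R$, so $\Qt_{\{R\}}=\Uu_{\{R\}}$.

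For the second assertion the plan is to compute
\begin{equation*}
  d_{\Uu,\Qt} \;=\; \frac{1}{R}\sum_{j=1}^R\Big|\tfrac{1}{R}-\Qt_j\Big| \;=\; \frac{1}{R^2}\sum_{i=1}^K r_i\,\frac{|\Pit_i-\Pi_i|}{\Pit_i} \;=\; \frac{1}{R}\sum_{i=1}^K|\Pit_i-\Pi_i|,
\end{equation*}
where the last equality again uses $r_i/\Pit_i=R$. It then suffices to bound $\sum_i|\Pit_i-\Pi_i|$ by $1/\beta_{\Pit}$. Using only $\Pit_i\geqslant\beta_{\Pit}\Pi_i$, a standard coupling gives $\sum_i(\Pi_i-\Pit_i)^+\leqslant(1-\beta_{\Pit})\sum_i\Pi_i=1-\beta_{\Pit}$, so $\sum_i|\Pit_i-\Pi_i|=2\sum_i(\Pi_i-\Pit_i)^+\leqslant 2(1-\beta_{\Pit})$. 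Combined with the elementary inequality $2\beta^2-2\beta+1\geqslant 0$ (negative discriminant), this yields $2(1-\beta_{\Pit})\leqslant 1/\beta_{\Pit}$, delivering the stated bound.

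The main subtlety I anticipate is the emulation assumption $r_i=R\Pit_i$, since the integer constraints on $r_i$ only permit a perfect match when the reduced denominators of $\Pit$ divide $R$ (the Corollary \ref{cor_exact_sol} regime). In the generic case one should identify the $\Pit$ of the lemma with the actually induced distribution $\Pib=r_{\{K\}}/R$ of the expansion network and state the result with $\beta_{\Pib}$ in place of $\beta_{\Pit}$; the calculation above then goes through verbatim with $\Pib$ substituted for $\Pit$, which is arguably the intended reading given that the scalar encoding in \eqref{scalar_enc_matrix} is defined using $\Pib$.
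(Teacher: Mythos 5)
Your proposal is correct and follows the same core computation as the paper: reduce $\Ubt_{\mathrm{exp}}$ to $\Ebwt\Gb\Ub$, identify the Frobenius score of the block copied from source index $\iota$ as $d\Pi_\iota/(q\Pit_\iota)$, and hence $\Qt_j=\Pi_{i(j)}/(R\,\Pit_{i(j)})$, from which the uniform case $\Pit_{\{K\}}=\Pi_{\{K\}}$ is immediate. You diverge only in the last estimate: the paper bounds each term individually via $\Qt_j\leqslant 1/(R\beta_{\Pit})$, hence $|\Qt_j-1/R|\leqslant 1/(R\beta_{\Pit})$, and sums over the $R$ blocks; you instead collapse the sum to the total variation $\sum_{i=1}^K|\Pit_i-\Pi_i|$ and bound it by $2(1-\beta_{\Pit})\leqslant 1/\beta_{\Pit}$ via the coupling identity and the negative-discriminant inequality. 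Your route is slightly longer but yields the strictly sharper bound $d_{\Uu,\Qt}\leqslant 2(1-\beta_{\Pit})/R$, which degrades to the stated bound. You are also more careful than the paper about the normalization: the paper simply fixes the normalizing constant $q/(Rd)$ (so $\Qt_{\{R\}}$ need not sum exactly to one unless $r_i/R=\Pit_i$), whereas you normalize by the actual total $\|\Ubt_{\mathrm{exp}}\|_F^2$ and correctly flag that this forces the identification $\Pit_{\{K\}}=\Pib_{\{K\}}$, which is indeed the intended reading given \eqref{scalar_enc_matrix}.
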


\begin{proof}
Let $\Ub=\Big[\Ub_1^\top \ \cdots \ \Ub_K^\top\Big]^\top$ denote the corresponding blocks of $\Ub$ according the partitioning of $\D$. Without loss of generality, assume that the data points within each partition are consecutive rows of $\Ab$, and $\Ub_\iota\in\R^{\tau\times d}$ for all $\iota\in\N_K$.

From \eqref{expansion_matrix} and \eqref{scalar_enc_matrix}, it follows that
\begin{align*}
  \PsiB &= \Ebwt\cdot\Abt \\
  &= \left(\Eb\otimes\Ib_\tau\right)\cdot\left(\Gb\cdot\Ub\Sigb\Vb^\top\right) \\
  &= \left(\Eb\otimes\Ib_\tau\right) \cdot \left[\Ub_1^\top\big/\sqrt{q\Pit_1} \ \cdots \ \Ub_K^\top\big/\sqrt{q\Pit_K}\right]^\top \cdot \Sigb\Vb^\top \\
  &\eqqcolon \left(\Eb\otimes\Ib_\tau\right) \cdot \left[\Ubt_1^\top \ \cdots \ \Ubt_K^\top\right]^\top \cdot \Sigb\Vb^\top \\
  &\eqqcolon \overbrace{\bigg[ \underbrace{\Ubt_1^\top \ \cdots \ \Ubt_1^\top}_{r_1} \ \underbrace{\Ubt_2^\top \ \cdots \ \Ubt_2^\top}_{r_2} \ \cdots \ \underbrace{\Ubt_K^\top \ \cdots \ \Ubt_K^\top}_{r_K} \bigg]^\top}^{\Ubt_{\mathrm{exp}}\in\R^{R\tau\times d}} \cdot \Sigb\Vb^\top .
\end{align*}
Note that $\Ubt_{\mathrm{exp}}\Sigb\Vb^\top$ is not the $\SVD$ of $\PsiB$. For the normalizing factor of $\frac{q}{Rd}$:
\begin{align*}
  \Qt_\iota &= \frac{q}{Rd}\cdot\|\Ubt\|_F^2 = \frac{q}{Rd}\cdot\frac{\|\Ub_\iota\|_F^2}{q\Pit_\iota} = \frac{\Pi_\iota}{R\Pit_\iota} \leqslant \frac{1}{R\beta_{\Pit}}
\end{align*}
therefore
\begin{align*}
  \sum_{i=1}^R\big|\Qt_i-1/R\big|\overset{\triangle}{\leqslant} \frac{R}{R\beta_{\Pit}} = \frac{1}{\beta_{\Pit}}
\end{align*}
where $\triangle$ follows from the fact that $\big|\Qt_i-1/R\big|\leqslant1\big/\big(R\beta_{\Pit}\big)$ for each $i\in\N_R$. After normalizing by $1/R$ according to the distortion metric, we deduce that $d_{\Uu,\Qt}\leqslant1\big/\big(R\beta_{\Pit}\big)$.

In the case where $\Pit_{\{K\}}=\Pi_{\{K\}}$, we have
$$ \Qt_\iota = \frac{q}{Rd}\cdot\|\Ubt_\iota\|_F^2 = \frac{q}{Rd}\cdot\frac{\|\Ub_\iota\|_F^2}{q\Pi_\iota} = \frac{\Pi_\iota}{R\Pi_\iota} = \frac{1}{R} $$
for all $\iota\in\N_K$, thus $\Qt_{\{K\}}=\Uu_{\{K\}}$.
\end{proof}

Finally, in Proposition \ref{comparison_two_embds} we show when the block leverage score sampling sketch of Algorithm \ref{alg_1_pseudocode} and the block-SRHT of \cite{CMPH22} have the same $\ell_2$-subspace embedding guarantee. We first recall the corresponding result to Theorem \ref{subsp_emb_thm_lvg}, of the block-SRHT.

\begin{Thm}[{\cite[Theorem 7]{CMPH22}}]
\label{subsp_emb_thm_SRHT}
The block-SRHT $\Sb_{\Hbh}$ is an $\ell_2$-subspace embedding of $\Ab$. For $\delta>0$ and $q=\Theta\big(\frac{d}{\tau}\log(Nd/\delta)\cdot\log(2d/\delta)/\epsilon^2\big)$:
$$ \Pr\left[\|\Ib_d-\Ub^\top\Sb_{\Hbh}^\top\Sb_{\Hbh} \Ub\|_2\leqslant\epsilon\right]\geqslant 1-\delta . $$
\end{Thm}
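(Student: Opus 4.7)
The plan is to prove this in the familiar two-step structure of SRHT analysis, adapted to the block setting, using Theorem \ref{subsp_emb_thm_lvg} as a black box for the sampling stage. Write $\Sb_{\Hbh} = \Omb_{\mathrm{unif}} \cdot (\Hb\Db \otimes \Ib_\tau)$ (or an equivalent formulation), where $\Hb\Db$ is the sign-flipped Hadamard transform and $\Omb_{\mathrm{unif}}$ is a uniform block-sampling-and-rescaling matrix with $q$ trials. The orthogonality of $\Hb\Db$ guarantees that $\Ub' \coloneqq \Hb\Db\,\Ub$ is again an orthonormal basis for a subspace of the same dimension $d$, so the target quantity $\|\Ib_d - \Ub^\top \Sb_{\Hbh}^\top \Sb_{\Hbh} \Ub\|_2$ equals $\|\Ib_d - (\Ub')^\top \Omb_{\mathrm{unif}}^\top \Omb_{\mathrm{unif}} \Ub'\|_2$, reducing the question to uniform block sampling applied to the flattened basis $\Ub'$.

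Step one is the flattening bound on $\Ub'$. For each fixed $i \in \N_N$ and each column $j \in \N_d$, the entry $(\Hb\Db\Ub)_{ij}$ is a sum $\sum_k H_{ik} D_{kk} \Ub_{kj}$ of mean-zero bounded (sub-Gaussian) random variables with $\|H_{i\cdot}\|_\infty = 1/\sqrt{N}$ and $\sum_k \Ub_{kj}^2 = 1$, so Hoeffding plus a union bound over the $Nd$ entries yields $\|\Ub'_{(i)}\|_\infty^2 \leqslant O(\log(Nd/\delta)/N)$ with probability $\geqslant 1 - \delta/2$. Summing over the $\tau$ rows of each block $\K_l$ and over the $d$ columns, the block leverage scores of $\Ub'$ satisfy $\Pi'_l = \|\Ub'_{(\K_l)}\|_F^2/d \leqslant O(\log(Nd/\delta)/K)$ uniformly in $l$. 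Thus the uniform distribution $\Pit_l = 1/K$ is a valid approximation to the (flattened) block leverage scores with misestimation factor $\beta = \Omega(1/\log(Nd/\delta))$.

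Step two is to invoke Theorem \ref{subsp_emb_thm_lvg} conditionally on the flattening event, with $\Ub \leftarrow \Ub'$ and $\beta \leftarrow \Omega(1/\log(Nd/\delta))$. This gives a failure probability at most $\delta/2$ provided
\[
 q = \Theta\!\left(\frac{d}{\tau}\log(2d/\delta)\big/(\beta\epsilon^2)\right) = \Theta\!\left(\frac{d}{\tau}\log(Nd/\delta)\log(2d/\delta)/\epsilon^2\right),
\]
matching the sample-complexity statement of the theorem. A union bound over the flattening and the sampling events gives the total failure probability $\delta$, concluding the proof.

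The main obstacle is getting the flattening bound tight in terms of the logarithmic factors; a naive entrywise Hoeffding gives the stated $\log(Nd/\delta)$ bound, but one has to be careful that the block structure does not incur an additional $\tau$ factor (which it does not, since we only sum $\tau$ squared entries each bounded by $O(\log(Nd/\delta)/N)$, yielding $O(\tau \log(Nd/\delta)/N) = O(\log(Nd/\delta)/K)$ as desired). A secondary obstacle is handling the rescaling by $\sqrt{K/q}$ in $\Omb_{\mathrm{unif}}$ so that it matches the rescaling in Algorithm \ref{alg_1_pseudocode} under the substitution $\Pit_l = 1/K$; this is a bookkeeping check rather than a technical difficulty.
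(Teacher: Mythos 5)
This statement is imported verbatim from \cite[Theorem 7]{CMPH22}; the paper gives no proof of it, only the citation, so there is no in-paper argument to compare against line by line. Your reconstruction is the standard two-stage SRHT analysis — flatten the (block) leverage scores with the randomized Hadamard transform, then reduce to uniform block sampling via Theorem \ref{subsp_emb_thm_lvg} with misestimation factor $\beta=\Omega(1/\log(Nd/\delta))$ — and it is essentially correct; it is also exactly the route the paper itself sketches informally in Subsection \ref{lvg_sec} when it says one can ``flatten'' the scores by a random projection and then sample uniformly. The arithmetic checks out: Hoeffding plus a union bound over $Nd$ entries gives $\|\Ub'_{(i)}\|_\infty^2\leqslant O(\log(Nd/\delta)/N)$, summing $\tau$ rows and $d$ columns gives $\Pi'_l\leqslant O(\log(Nd/\delta)/K)$ with no spurious $\tau$ factor, and conditioning on the flattening event (which is legitimate since the sign matrix and the sampling are independent) yields the stated $q=\Theta\big(\tfrac{d}{\tau}\log(Nd/\delta)\log(2d/\delta)/\epsilon^2\big)$ after a union bound. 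The only caveat worth recording is that you are invoking Theorem \ref{subsp_emb_thm_lvg} with the sampling distribution $\Pit_l=1/K$ and the matching rescaling $\sqrt{K/q}$, which requires observing that the proof of Proposition \ref{prop_Sp_lvg} only uses $\Pit_\iota\geqslant\beta\Pi_\iota$ together with the expectation identity for the chosen $\Pit$ — a check you correctly flag as bookkeeping.
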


\begin{Prop}
\label{comparison_two_embds}
Let $\beta=1$. For $\delta=e^{\Theta(1)}/(Nd)$, the sketches of Algorithm \ref{alg_1_pseudocode} and the block-SRHT of \cite{CMPH22} achieve the same asymptotic $\ell_2$-subspace embedding guarantee, for the same number of sampling trials $q$.
\end{Prop}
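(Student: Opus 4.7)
The proposition is a purely algebraic comparison of the two sample-complexity expressions, so my plan is to directly substitute the prescribed $\delta$ into both bounds from Theorems \ref{subsp_emb_thm_lvg} and \ref{subsp_emb_thm_SRHT} and check that they coincide asymptotically. The main conceptual content is already contained in those two theorems; no new probabilistic or linear-algebraic machinery is needed.

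First, I would specialize Theorem \ref{subsp_emb_thm_lvg} to $\beta=1$ (exact block leverage scores), which yields that Algorithm \ref{alg_1_pseudocode} achieves the $\ell_2$-subspace embedding condition \eqref{eq_form} with failure probability $\delta$ using $q_{\mathrm{lvg}}=\Theta\bigl(\frac{d}{\tau}\log(2d/\delta)/\epsilon^2\bigr)$ trials. I would then record the corresponding block-SRHT count from Theorem \ref{subsp_emb_thm_SRHT}, namely $q_{\mathrm{SRHT}}=\Theta\bigl(\frac{d}{\tau}\log(Nd/\delta)\log(2d/\delta)/\epsilon^2\bigr)$. The two counts differ only through the extra factor $\log(Nd/\delta)$ carried by the block-SRHT, so forcing them to agree up to constants amounts to forcing $\log(Nd/\delta)=\Theta(1)$, equivalently $Nd/\delta=e^{\Theta(1)}$.

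The prescribed choice $\delta=e^{\Theta(1)}/(Nd)$ is precisely the one that realizes this condition (with $e^{\Theta(1)}$ interpreted as an absolute constant, independent of $N,d,\tau,\epsilon$). Substituting this $\delta$ collapses the factor $\log(Nd/\delta)$ into $\Theta(1)$, which is absorbed into the implicit constant of $\Theta(\cdot)$. Consequently, $q_{\mathrm{SRHT}}$ reduces to $\Theta\bigl(\frac{d}{\tau}\log(2d/\delta)/\epsilon^2\bigr)$, matching $q_{\mathrm{lvg}}$. Since both sketches enforce $\|\Ib_d-\Ub^\top\Sb^\top\Sb\Ub\|_2\leqslant\epsilon$ with probability at least $1-\delta$, this gives the same asymptotic $\ell_2$-subspace embedding guarantee for the same number of trials $q$.

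The main (and essentially only) obstacle is asymptotic bookkeeping: one must be careful to treat $e^{\Theta(1)}$ as an absolute constant, so that the substitution $\log(Nd/\delta)=\Theta(1)$ is legitimate and can be absorbed into the hidden multiplicative constant in $q_{\mathrm{SRHT}}$. Once this is clarified, the argument reduces to a one-line substitution; no new concentration inequalities, matrix-norm manipulations, or random-projection arguments are required beyond what is already established in Theorem \ref{subsp_emb_thm_lvg} and cited from \cite{CMPH22} in Theorem \ref{subsp_emb_thm_SRHT}.
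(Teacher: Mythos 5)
Your overall strategy is the same as the paper's: the published proof of Proposition \ref{comparison_two_embds} is a one-line assertion that substituting $\delta=e^{\Theta(1)}/(Nd)$ makes the two sample-complexity expressions from Theorems \ref{subsp_emb_thm_lvg} and \ref{subsp_emb_thm_SRHT} coincide, and you correctly identify that the only way this can happen is if the extra factor $\log(Nd/\delta)$ carried by the block-SRHT collapses to $\Theta(1)$. That diagnosis is right, and it is more explicit than what the paper writes down.

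The gap is in the verification. With $\delta=e^{\Theta(1)}/(Nd)$ you get $Nd/\delta=(Nd)^2/e^{\Theta(1)}$, not $e^{\Theta(1)}$; hence $\log(Nd/\delta)=2\log(Nd)-\Theta(1)=\Theta(\log(Nd))$, which grows with $N$ and $d$. After the substitution the two counts are $q_{\mathrm{lvg}}=\Theta\big(\tfrac{d}{\tau}\log(2Nd^2)/\epsilon^2\big)$ and $q_{\mathrm{SRHT}}=\Theta\big(\tfrac{d}{\tau}\log(Nd)\cdot\log(2Nd^2)/\epsilon^2\big)$, which still differ by a factor of $\Theta(\log(Nd))$. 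The condition you correctly isolate as necessary, $Nd/\delta=e^{\Theta(1)}$, would instead force $\delta=\Theta(Nd)$, which exceeds $1$ whenever $Nd$ grows and so cannot be realized by any failure probability. So the substitution as performed does not establish the claim; either the comparison must be argued differently (for instance by matching the two failure probabilities at a common $q$ rather than matching the two $q$'s at a common $\delta$), or the stated choice of $\delta$ has to be revisited. As written, your proof inherits --- and makes visible --- an issue that the paper's own one-line proof glosses over.
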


\begin{proof}
For $\delta=e^{\Theta(1)}/(Nd)$, the two sketching methods have the same $q$, and both satisfy the $\ell_2$-subspace embedding property with error probability $1-\delta$.
\end{proof}

\section{Contraction Rate of Block Leverage Score Sampling}
\label{cont_rate_app}

In this appendix we quantify the contraction rate of our method on the error term $\xb^{[s]}-\xb^\star$, which further characterizes the convergence of SD after applying our method. The contraction rate is compared to that of regular SD.

Recall that the contraction rate of an iterative process given by a function $h(x^{[s]})$ is the constant $\gamma\in(0,1)$ for which at each iteration we are guaranteed that $h(x^{[s+1]})\leqslant \gamma\cdot h(x^{[s]})$, therefore $h(x^{[s]})\leqslant \gamma_s\cdot h(x^{[0]})$. Let $\xi$ be a fixed step-size, $\Sbwt_{[s]}$ the induced sketching matrix of Algorithm \ref{alg_1_pseudocode} at iteration $s$, and define $\Bb_{SD}=\left(\Ib_d-2\xi\cdot\Ab^\top\Ab\right)$ and $\Bb_s=\left(\Ib_d-2\xi\cdot\Ab^\top\Sbwt_{[s]}^\top\Sbwt_{[s]}\Ab\right)$. We note that the contraction rates could be further improved if one also incorporates an optimal step-size. It is also worth noting that when weighting from Appendix \ref{weighting_sec} is introduced, we have the same contraction rate and straggler ratio.

\begin{Lemma}
\label{lem_exp_StS}
For $\Sbwt$ the sketching matrix of Algorithm \ref{alg_1_pseudocode}, we have $\E\left[\Sbwt^\top\Sbwt\right]=\Ib_N$.
\end{Lemma}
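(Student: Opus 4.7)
My plan is to express $\Sbwt^\top\Sbwt$ as a sum of independent rank-$\tau$ matrices, one per sampling trial, and then compute the expectation of a single trial using the law of total expectation over the sampling distribution $\Pit_{\{K\}}$. The rescaling by $1/\sqrt{q\Pit_{i_j}}$ in Algorithm \ref{alg_1_pseudocode} is precisely designed to make each trial's contribution unbiased for $\frac{1}{q}\Ib_N$, so summing over $q$ trials should yield $\Ib_N$.

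First I would unpack the structure of $\Sbwt$. Since $\Sbwt = \Dbwt\cdot\Ombwt = (\Db\cdot\Omb)\otimes\Ib_\tau$, and the $j^{th}$ row of $\Omb$ is $\eb_{i_j}^\top$ with $\Db_{j,j} = 1/\sqrt{q\Pit_{i_j}}$, I can write
\begin{equation*}
  \Sbwt^\top\Sbwt = \Ombwt^\top\Dbwt^\top\Dbwt\Ombwt = \sum_{j=1}^{q}\frac{1}{q\Pit_{i_j}}\,\big(\eb_{i_j}\eb_{i_j}^\top\otimes\Ib_\tau\big),
\end{equation*}
using the mixed-product property of the Kronecker product and the fact that $\Dbwt^\top\Dbwt$ is diagonal with repeated blocks of size $\tau$.

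Next I would take expectations. Since the $q$ sampling trials are i.i.d.\ draws from $\Pit_{\{K\}}$, linearity of expectation reduces the computation to a single term. For one trial, applying the law of total expectation,
\begin{equation*}
  \E\!\left[\frac{1}{q\Pit_{i_j}}\big(\eb_{i_j}\eb_{i_j}^\top\otimes\Ib_\tau\big)\right] = \sum_{k=1}^{K}\Pit_k\cdot\frac{1}{q\Pit_k}\big(\eb_k\eb_k^\top\otimes\Ib_\tau\big) = \frac{1}{q}\big(\Ib_K\otimes\Ib_\tau\big) = \frac{1}{q}\Ib_N,
\end{equation*}
where the cancellation $\Pit_k\cdot\frac{1}{q\Pit_k}=\frac{1}{q}$ is the key mechanism and the telescoping $\sum_k\eb_k\eb_k^\top=\Ib_K$ completes the identity. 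Summing over the $q$ trials gives $\E[\Sbwt^\top\Sbwt] = q\cdot\frac{1}{q}\Ib_N = \Ib_N$, as claimed.

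There is no real obstacle here — the proof is essentially a bookkeeping exercise once one recognizes that $\Sbwt^\top\Sbwt$ decomposes into a sum of i.i.d.\ single-trial contributions. The only mild subtlety is carrying the Kronecker factor $\Ib_\tau$ through the calculation correctly; this is handled by noting that the rescaling $1/\sqrt{q\Pit_{i_j}}$ in $\Db$ depends only on which block is sampled, not on which row within that block, so the $\Ib_\tau$ factor survives untouched.
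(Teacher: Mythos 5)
Your proof is correct and follows essentially the same route as the paper's: both decompose $\Sbwt^\top\Sbwt$ into a sum of $q$ i.i.d.\ per-trial contributions, apply linearity and the law of total expectation so that the $\Pit_k$ factors cancel against the $1/(q\Pit_k)$ rescaling, and sum the resulting block projectors to $\Ib_N$. The only difference is notational — you write each trial's contribution as $\eb_{i_j}\eb_{i_j}^\top\otimes\Ib_\tau$ while the paper writes it as $\Ib_{(\K_\iota^i)}^\top\Ib_{(\K_\iota^i)}/(q\Pit_\iota)$ — which is immaterial.
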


\begin{proof}
Similar to the proof of Proposition \ref{prop_Sp_lvg}, we define a symmetric random matrix $\Yb$, whose realizations correspond to the sampled and rescaled submatrices of Algorithm \ref{alg_1_pseudocode}. The realizations are
$$ \Yb_i = \frac{\Ib_{(\K_\iota^i)}^\top\Ib_{(\K_\iota^i)}}{q\Pit_\iota} = \frac{\sum_{l\in\K_\iota^i}\eb_l\eb_l^\top}{q\Pit_\iota} . $$
After $q$ sampling trials, we have $\Sbwt^\top\Sbwt=\sum_{i=1}^q\Yb_i$. It follows that
\begin{align*}
  \E\left[\Sbwt^\top\Sbwt\right] &= \E\left[\sum_{i=i}^q\Yb_i\right]\\
  &= \sum_{i=1}^q\E\left[\Yb_i\right]\\
  &= q\cdot \left(\sum_{j=1}^K\Pit_j\cdot\frac{\Ib_{(\K_j)}^\top\Ib_{(\K_j)}}{q\Pit_j}\right)\\
  &= \sum_{j=1}^K\Ib_{(\K_j)}^\top\Ib_{(\K_j)}\\
  &= \sum_{l=1}^N\eb_{(l)}\eb_{(l)}^\top\\
  &= \Ib_N .
\end{align*}
\end{proof}

\begin{Thm}
\label{contr_rate_thm}
The contraction rate of our GC approach through the expected sketch $\Sbwt_{[s]}$ at each iteration, is equal to the contraction rate of regular SD. Specifically, for $e_s\coloneqq\xb^{[s]}-\xb^\star$ the error at iteration $s$ and $\gamma_{SD}=\lambda_1(\Bb_{SD})$ the contraction rate of regular SD, we have $\big\|\E[e_{s+1}]\big\|_2\leqslant\gamma_{SD}\cdot\|e_s\|_2$.
\end{Thm}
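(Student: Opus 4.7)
\medskip

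\noindent\textbf{Proof plan for Theorem \ref{contr_rate_thm}.} The plan is to unfold the SD update, isolate the error vector $e_s$, take an expectation over the sketch, and then invoke Lemma \ref{lem_exp_StS} to collapse the stochastic iteration operator onto the deterministic operator $\Bb_{SD}$.

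First I would substitute the definition of the approximate gradient from \eqref{gr_update}, which at iteration $s$ equals $\gh^{[s]}=2\Ab^\top\Sbwt_{[s]}^\top\Sbwt_{[s]}(\Ab\xb^{[s]}-\bb)$, into the SD update $\xb^{[s+1]}=\xb^{[s]}-\xi\cdot\gh^{[s]}$. Subtracting $\xb^\star$ from both sides and using the residual decomposition $\Ab\xb^{[s]}-\bb=\Ab e_s-\bb^{\perp}$ where $\bb^{\perp}=\bb-\Ab\xb^\star$ satisfies $\Ab^\top\bb^{\perp}=\bold{0}_{d\times 1}$ (by the normal equations at $\xb^\star$), gives the one-step recursion
\begin{equation*}
    e_{s+1} = \Bb_s\cdot e_s + 2\xi\cdot\Ab^\top\Sbwt_{[s]}^\top\Sbwt_{[s]}\bb^{\perp}.
\end{equation*}

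Next, I would take the expectation over the randomness in $\Sbwt_{[s]}$ conditioned on $e_s$ (which is determined by the history prior to iteration $s$). By Lemma \ref{lem_exp_StS}, $\E[\Sbwt_{[s]}^\top\Sbwt_{[s]}]=\Ib_N$, so the expected operator satisfies $\E[\Bb_s]=\Ib_d-2\xi\Ab^\top\Ab=\Bb_{SD}$, and the bias term vanishes since $\Ab^\top\,\E[\Sbwt_{[s]}^\top\Sbwt_{[s]}]\,\bb^{\perp}=\Ab^\top\bb^{\perp}=\bold{0}_{d\times 1}$. Consequently $\E[e_{s+1}]=\Bb_{SD}\cdot e_s$, and taking $\ell_2$ norms together with sub-multiplicativity yields $\|\E[e_{s+1}]\|_2\leqslant\|\Bb_{SD}\|_2\cdot\|e_s\|_2=\gamma_{SD}\cdot\|e_s\|_2$, where the last equality uses the fact that $\Bb_{SD}$ is symmetric PSD for $\xi\in(0,1/\sigma_{\max}(\Ab)^2)$ so that its operator norm coincides with its largest eigenvalue.

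The main (minor) obstacle is just the bookkeeping of the bias term: one needs to be careful to choose the $\bb^{\perp}$ convention so that it is annihilated by $\Ab^\top$, and to note that it is the full matrix $\E[\Sbwt_{[s]}^\top\Sbwt_{[s]}]=\Ib_N$ (not merely $\E[\Ub^\top\Sbwt_{[s]}^\top\Sbwt_{[s]}\Ub]=\Ib_d$ from the subspace-embedding inequality) that makes the residual contribution disappear in expectation. This is precisely why Lemma \ref{lem_exp_StS} --- and therefore the unbiasedness of our rescaled block leverage score sampling scheme --- is the enabling ingredient, and mirrors the role played by the unbiased gradient estimator in Theorem \ref{SGD_nonunif_thm}.
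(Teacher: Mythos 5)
Your proposal is correct and follows essentially the same route as the paper's proof: unfold the update, split the residual via $\Ab\xb^{[s]}-\bb=\Ab e_s-\bb^\perp$, and use Lemma \ref{lem_exp_StS} so that $\E[\Bb_s]=\Bb_{SD}$ and the $\bb^\perp$ term is killed by $\Ab^\top\bb^\perp=\bold{0}_{d\times1}$. Your added remark that one needs the full identity $\E[\Sbwt_{[s]}^\top\Sbwt_{[s]}]=\Ib_N$ rather than the subspace-embedding bound, and the step-size condition ensuring $\|\Bb_{SD}\|_2=\lambda_1(\Bb_{SD})$, are both correct refinements that the paper leaves implicit.
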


\begin{proof}
For a fixed step-size $2\xi$, our SD parameter update at iteration $s+1$ is
$$ \xb^{[s+1]}\gets\xb^{[s]}-2\xi\cdot\Ab^\top\Sbwt_{[s]}^\top\Sbwt_{[s]}\left(\Ab\xb^{[s]}-\bb\right)\ , $$
where for regular SD we have $\Sbwt_{[s]}\gets\Ib_N$. At iteration $s+1$, the error $e_s$ of the previous iteration is not random, hence $\E[e_s]=e_s$. By substituting the expression of $\Bb_s$, it follows that
\begin{align}
  e_{s+1} &= \xb^{[s+1]}-\xb^\star \notag\\
  &= \left(\xb^{[s]}-2\xi\Ab^\top\Sbwt_{[s]}^\top\Sbwt_{[s]}\left(\Ab\xb^{[s]}-\bb\right)\right)-\xb^\star \notag\\
  &= \xb^{[s]}-2\xi\Ab^\top\Sbwt_{[s]}^\top\Sbwt_{[s]}\Ab\xb^{[s]}+2\xi\Ab^\top\Sbwt_{[s]}^\top\Sbwt_{[s]}\bb-\xb^\star \notag\\
  &= \Bb_s\xb^{[s]}-\left(\xb^\star-2\xi\Ab^\top\Sbwt_{[s]}^\top\Sbwt_{[s]}\bb\right) \notag\\
  &= \Bb_s\xb^{[s]}-\left(\xb^\star-2\xi\Ab^\top\Sbwt_{[s]}^\top\Sbwt_{[s]}\left(\Ab\xb^\star+\bb^\perp\right)\right) \notag\\
  &= \Bb_s\left(\xb^{[s]}-\xb^\star\right) - 2\xi\Ab^\top\Sbwt_{[s]}^\top\Sbwt_{[s]}\bb^\perp \notag\\
  &= \Bb_s e_s - 2\xi\Ab^\top\Sbwt_{[s]}^\top\Sbwt_{[s]}\bb^\perp \label{contr_rate_Bs}
\end{align}
and by Lemma \ref{lem_exp_StS}
\begin{align}
  \E\left[2\xi\Ab^\top\Sbwt_{[s]}^\top\Sbwt_{[s]}\bb^\perp\right] &= 2\xi\Ab^\top\E\left[\Sbwt_{[s]}^\top\Sbwt_{[s]}\right]\bb^\perp \notag\\
  &= 2\xi\Ab^\top\bb^\perp \notag\\
  &= \bold{0}_{d\times 1} \label{exp_null_term}
\end{align}
as $\bb^\perp$ lies in the kernel of $\Ab^\top$, and
\begin{align}
  \E\left[\Bb_s\right] &= \Ib_d-2\xi\Ab^\top\E\left[\Sbwt_{[s]}^\top\Sbwt_{[s]}\right]\Ab \notag\\
  &= \Ib_d-2\xi\Ab^\top\Ab \notag\\
  &= \Bb_{SD} . \label{exp_Bs_term}
\end{align}
From \eqref{contr_rate_Bs}, \eqref{exp_null_term} and \eqref{exp_Bs_term}, it follows that
\begin{align*}
  \E\left[e_{s+1}\right] &= \E\left[\Bb_s e_s\right] - \E\left[2\xi\Ab^\top\Sbwt_{[s]}^\top\Sbwt_{[s]}\bb^\perp\right]\\
  &= \E\left[\Bb_s \right] \cdot e_s\\
  &= \Bb_{SD}\cdot e_s .
\end{align*}
This results in the inequality
$$ \big\|\E[e_{s+1}]\big\|_2 \leqslant \lambda_1\big(\E[\Bb_s]\big)\cdot\|e_{s}\|_2 = \lambda_1(\Bb_{SD})\cdot\|e_{s}\|_2 $$
which implies the contraction rate of the expected sketch through Algorithm \ref{alg_1_pseudocode}:
$$ \gamma_{s+1}=\lambda_1(\Bb_{SD}). $$
By replacing $\Bb_s$ with $\Bb_{SD}$ in \eqref{contr_rate_Bs} and $\Sbwt_{[s]}\gets\Ib_N$, we conclude that the contraction rate of SD is $\gamma_{SD}=\lambda_1(\Bb_{SD})$.
\end{proof}

We conclude this appendix by stating the expected ratio of dimensions $r$ to $N$, and stragglers to servers.

\begin{Rmk}
The expected ratio of the reduced dimension $r$ to the original dimension $N$ is $q(T)\tau/N$, and the expected straggler to servers ratio is $(m-q)/m$. Since we stop receiving computations at a time instance $T$; we expect that $q\gets q(T)$ computations are received, hence there are $m-q$ stragglers. Therefore, the straggler to servers ratio is $(m-q)/m$. The expected ratio between the two dimensions is immediate from the fact that $r=q\tau$ is the new reduced dimension, in the case where $q\leqslant K$.
\end{Rmk}

\section{Weighted Block Leverage Score Sketching}
\label{weighting_sec}

So far, we have considered sampling with replacement according to the normalized block leverage scores, to reduce the effective dimension $N$ of $\Ab$ and $\bb$ to $r=q\tau$. In this appendix, we show that by \textit{weighting} the sampled blocks according to the sampling which has taken place through $\Omb$ for the construction of the sketching matrix $\Sbwt$, we can further compress the data matrix $\Ab$, and get the same results when first and second order optimization methods are used to solve \eqref{mod_OLS}. The weighting we propose is more beneficial with non-uniform distributions, as we expect the sampling with replacement to capture the importance of the more influence blocks. The \textit{weighted sketching matrix} $\Sbw$ we propose, is a simple extension to $\Sbwt$ of Algorithm \ref{alg_1_pseudocode}. For simplicity, we do not consider iterative sketching, though similar arguments and guarantees can be derived.

The main idea is to not keep repetitions of blocks which were sampled multiple times, but rather weigh each block by the number of times it was sampled. By doing so, we retain the \textit{weighted sketch} $\Sbw\Ab$ of size $\qbar\tau\times d$, for $\qbar$ the number of distinct blocks that were sampled.\footnote{In practice, for highly non-uniform $\Pi_{\{K\}}$ we expect $\qbar\tau\ll r=q\tau$. The sketch $\Sbw\Ab$ could therefore be stored in much less space than $\Sbwt\Ab$, and the system of equations $\Sbw(\Ab-\bb)=\bold{0}_{\qbar\tau\times1}$ could have significantly fewer equations than $\Sbwt(\Ab-\bb)=\bold{0}_{q\tau\times1}$.} Additionally, the gradient and Hessian of $L_\Sb\big(\Sbw,\Ab,\bb;\xb\big)$ are respectively equal to those of $L_\Sb\big(\Sbwt,\Ab,\bb;\xb\big)$, and are unbiased estimators of the gradient and Hessian of $L_{ls}(\Ab,\bb;\xb)$.

To achieve the weighting algorithmically, we count how many times each of the distinct $\qbar$ blocks were sampled, and at the end of the sampling procedure we multiply the blocks by their corresponding \textit{weight}. We initialize a weight vector $\wb=\bold{0}_{1\times K}$, and in the sampling procedure whenever the $i^{th}$ partition is drawn, we update its corresponding weight: $\wb_i\gets\wb_i+1$. It is clear that once $q$ trials are been carried out, we have $\|\wb\|_1=q$ for $\wb\in\N_0^{1\times K}$.

Let $\Scal$ denote the index multiset observed after the sampling procedure of Algorithm \ref{alg_1_pseudocode}, and $\Sbar$ the set of indices comprising $\Scal$. That is, $\Scal$ has cardinality $q$ and may have repetitions, while $\Sbar=\N_K\cap\Scal$ has cardinality $\qbar=|\Sbar|\leqslant q$ with no repetitions. We denote the ratio of the two sets by $\zeta\coloneqq q/\qbar=\|\wb\|_1/\|\wb\|_0\geqslant1$, which indicates how much further compression we get by utilizing the fact that blocks may be sampled multiple times. The corresponding weighted sketching matrix $\Sbw$ of $\Sbwt$ is then
\begin{equation}
\label{expr_Sw}
  \Sbw = \overbrace{\diag\left(\Big\{\sqrt{\wb_j\big/\big(q\Pi_j}\big)\Big\}_{j\in\Sbar}\right)}^{\Wbb_{1/2}\in\R_{\geqslant0}^{\qbar\times\qbar}}\cdot\Ib_{(\Sbar)}\otimes\Ib_\tau \in \R^{\qbar\tau\times N}
\end{equation}
where $\Ib_{(\Sbar)}\in\{0,1\}^{\qbar\times K}$ is the restriction of $\Ib_{K}$ to the rows indexed by $\Sbar$.

For simplicity, assume that the sampling matrices which are devised for $\Sbwt$ and $\Sbw$ follow the ordering of the sampled blocks in order of the samples, i.e., if $\left(\Omb_{(i)}\right)_j=1$ then $\left(\Omb_{(i+1)}\right)_l=1$ for $l\geqslant j$, and equivalently $\Scal_l\leqslant\Scal_{l+1}$ and $\Sbar_l<\Sbar_{l+1}$ for all valid $l$. We restrict the sampling matrix $\Omb\in\{0,1\}^{q\times K}$ to its unique rows, by applying $\Ombb\in\{0,1\}^{\qbar\times q}$:
$$ \Ombb_{ij} = \begin{cases} 1 \quad \text{ for } i=1 \text{ and } j=\Scal_1 \\ 1 \quad \text{ if } \Scal_j>\Scal_{j-1} \text{ for } j\in\N_q\backslash\{1\} \\ 0 \quad \text{ otherwise } \end{cases} $$
to the left of the sampling matrix $\Omb$ of Algorithm \ref{alg_1_pseudocode}, i.e., $\Ombw \coloneqq \Ombb\cdot\Omb\in\{0,1\}^{\qbar\times K}$. This sampling matrix then satisfies
$$ \left(\Ombw\right)_{ij} = \begin{cases} 1 \quad \text{ if } j=\Sbar_j \\ 0 \quad \text{ otherwise } \end{cases} \text{ for } j\in\N_{\qbar} . $$
Let $\wbt=\wb_{|_{\Sbar}}\in\Z_+^{1\times \qbar}$ be the restriction of $\wb$ to its nonzero elements; hence $\|\wbt\|_1=\|\wb\|_1=|\Scal|=q$, and define the rescaling diagonal matrix $\Wbt_{1/2}=\diag\left(\big\{\sqrt{\wbt_i}\big\}_{i=1}^{\qbar}\right)$. We then have the following relationship
\begin{align}
  \Sbw &= \overbrace{\left(\Wbt_{1/2}\cdot\Ombb\cdot\Db\cdot\Ombb^\top\right)}^{=\Wbb_{1/2}}\cdot\Ombw\otimes\Ib_{\tau} \notag\\
  &= \big(\Wbb_{1/2}\otimes\Ib_\tau\big)\cdot\big(\Ombw\otimes\Ib_\tau\big) \label{Sb_WSp_expr_lvg}
\end{align}
when $\Sbar=\N_K\cap\Scal$.

As previously noted, $\Sbw$ has $\zeta$ times less rows than $\Sbwt$. Hence, the required storage space for the sketch $\Sbw\Ab$ drops by a multiplicatively factor of $\zeta$, and the required operations are reduced analogously, according to the computation. The weighted sketching matrix $\Sbw$ has the following guarantees, which imply that the proposed weighting will not affect first or second order iterative methods which are used to approximate \eqref{mod_OLS}.

\begin{Prop}
\label{prop_wght_lr}  
The resulting gradient and Hessian of the modified least squares problem \eqref{mod_OLS} when sketching with $\Sbwt$ of Algorithm \ref{alg_1_pseudocode}, are respectively identical to the resulting gradient and Hessian when sketching with $\Sbw$ presented in \eqref{expr_Sw} and \eqref{Sb_WSp_expr_lvg}.
\end{Prop}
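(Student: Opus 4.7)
The plan is to reduce the proposition to the single identity $\Sbwt^\top\Sbwt = \Sbw^\top\Sbw$. Indeed, the objective $L_\Sb(\Sb,\Ab,\bb;\xb) = \|\Sb(\Ab\xb-\bb)\|_2^2$ expands as $(\Ab\xb-\bb)^\top(\Sb^\top\Sb)(\Ab\xb-\bb)$, so
\begin{equation*}
  \nabla_\xb L_\Sb = 2\Ab^\top\Sb^\top\Sb(\Ab\xb-\bb), \qquad \nabla^2_\xb L_\Sb = 2\Ab^\top\Sb^\top\Sb\,\Ab,
\end{equation*}
and both depend on $\Sb$ only through the Gram matrix $\Sb^\top\Sb$. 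Hence proving the equality of the two Gram matrices yields the equality of both the gradients and Hessians simultaneously, for arbitrary $\xb$.

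The first step is to compute $\Sbwt^\top\Sbwt$ using the factorization $\Sbwt = (\Db\cdot\Omb)\otimes\Ib_\tau$ from Algorithm \ref{alg_1_pseudocode}, together with the mixed-product property of the Kronecker product. This gives $\Sbwt^\top\Sbwt = \bigl(\Omb^\top\Db^\top\Db\,\Omb\bigr)\otimes\Ib_\tau$. Since $\Db^\top\Db$ is diagonal with $j^{th}$ entry $1/(q\Pit_{i_j})$ and each row of $\Omb$ equals $\eb_{i_j}^\top$, the sum $\Omb^\top\Db^\top\Db\,\Omb$ is a $K\times K$ diagonal matrix whose $i^{th}$ diagonal entry is precisely $\wb_i/(q\Pit_i)$, where $\wb_i$ counts the number of times the $i^{th}$ block was drawn during the $q$ sampling trials.

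The second step is the analogous computation for $\Sbw^\top\Sbw$ using \eqref{Sb_WSp_expr_lvg}, which yields $\Sbw^\top\Sbw = \bigl(\Ombw^\top\Wbb_{1/2}^\top\Wbb_{1/2}\Ombw\bigr)\otimes\Ib_\tau$. The matrix $\Wbb_{1/2}^\top\Wbb_{1/2}$ is diagonal of size $\qbar\times\qbar$ with entries $\wb_j/(q\Pit_j)$ for $j\in\Sbar$, and conjugating by the restricted selector $\Ombw$ embeds this into a $K\times K$ diagonal matrix whose $i^{th}$ entry equals $\wb_i/(q\Pit_i)$ if $i\in\Sbar$ and $0$ otherwise. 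The crucial observation is that $\wb_i = 0$ whenever $i\notin\Sbar$ by construction of the weight vector, so the zero entries outside $\Sbar$ match automatically. Therefore the two $K\times K$ diagonal matrices coincide, and consequently $\Sbwt^\top\Sbwt = \Sbw^\top\Sbw$, completing the proof.

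The argument is essentially bookkeeping; there is no real analytic obstacle. The main point requiring care is aligning the indexing conventions — in particular verifying that after applying $\Ombb$ to collapse repeated rows of $\Omb$, the residual weights $\wbt$ and the rescaling $\Wbb_{1/2}$ exactly reproduce the diagonal that arises from summing $1/(q\Pit_{i_j})$ over repeated draws of the same block in $\Sbwt^\top\Sbwt$. Once this correspondence is written out, the identity of the Gram matrices, and hence the claim about gradients and Hessians, is immediate.
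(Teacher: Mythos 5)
Your proof is correct and follows essentially the same route as the paper: the paper's proof also begins by establishing the Gram-matrix identity $\Sbwt^\top\Sbwt=\Sbw^\top\Sbw$ via the mixed-product property of the Kronecker product, and then verifies the gradient and Hessian equalities. Your version is slightly more economical in observing up front that both the gradient and Hessian depend on $\Sb$ only through $\Sb^\top\Sb$, so the single Gram identity (with the diagonal-entry bookkeeping $\wb_i/(q\Pit_i)$ on both sides, and $\wb_i=0$ off $\Sbar$) finishes the argument without re-expanding the sums.
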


\begin{proof}
From Algorithm \ref{alg_1_pseudocode}, the assumption on the ordering of the elements in $\Scal$ and $\Sbar$, and the construction of $\Sbwt$, we have
\begin{align*}
  \Sbw^\top\Sbw &=  \Big(\big(\Ombw^\top\Wbt_{1/2}^\top\big)\otimes\Ib_\tau\Big) \cdot \Big(\big(\Wbt_{1/2}\cdot\Ombw\big)\otimes\Ib_\tau\Big)\\
  &= \Big(\big(\Omb^\top\Db^\top\big)\otimes\Ib_\tau\Big) \cdot \Big(\big(\Db\cdot\Omb\big)\otimes\Ib_\tau\Big)\\
  &= \Sbwt^\top\Sbwt .
\end{align*}
Let $\Tcal=\biguplus_{j\in\Scal}\K_j$ and $\Tbar=\bigsqcup_{j\in\Sbar}\K_j$, thus $\Tbar$ is contained in $\Tcal$ when both are viewed as multisets. Considering the objective function $L_\Sb(\Sb,\Ab,\bb;\xb)$ of \eqref{mod_OLS}, the equivalence of gradients is observed through the following computation
\begin{align*}
  \nabla_{\xb}L_\Sb(\Sbwt,\Ab,\bb;\xb) &= 2\Ab^\top\left(\Sbwt^\top\Sbwt\right)\left(\Ab\xb-\bb\right)\\
  &= 2\sum\limits_{l\in\Tcal}\Ab_{(l)}^\top \cdot\Db_{ll}^2 \cdot  \left(\Ab_{(l)}\xb-\bb_l\right)\\
  &= 2\sum\limits_{j\in\Tbar}\wbt_j\cdot\Ab_{(j)}^\top \cdot\Db_{jj}^2 \cdot  \left(\Ab_{(j)}\xb-\bb_j\right)\\
  &= 2\sum\limits_{j\in\Tbar}\Ab_{(j)}^\top \cdot\big(\Wbt_{1/2}\big)_{jj}^2 \cdot  \left(\Ab_{(j)}\xb-\bb_j\right)\\
  &= 2\Ab^\top\left(\Sbw^\top\Sbw\right)\left(\Ab\xb-\bb\right)\\
  &= \nabla_{\xb}L_\Sb(\Sbw,\Ab,\bb;\xb) .
\end{align*}

Recall that the Hessian of the least squares objective function \eqref{x_star_lr} is $\nabla_{\xb}^2L_{ls}(\Ab,\bb;\xb)=2\Ab^\top\Ab$. Considering the modified objective function \eqref{mod_OLS} and our sketching matrices, it follows that
\begin{align*}
  \nabla_{\xb}^2L_\Sb\left(\Sbwt,\Ab,\bb;\xb\right) &= 2\Ab^\top\left(\Sbwt^\top\Sbwt\right)\Ab\\
  &= 2\sum\limits_{l\in\Tcal}\Ab_{(l)}^\top \cdot\Db_{ll}^2 \cdot \Ab_{(l)}\\
  &= 2\sum\limits_{j\in\Tbar}\wbt_j\cdot\Ab_{(j)}^\top \cdot\Db_{jj}^2 \cdot \Ab_{(j)}\\
  &= 2\sum\limits_{j\in\Tbar}\Ab_{(j)}^\top \cdot\big(\Wbt_{1/2}\big)_{jj}^2 \cdot \Ab_{(j)}\\
  &= 2\Ab^\top\left(\Sbw^\top\Sbw\right)\Ab\\
  &= \nabla_{\xb}^2L_\Sb\big(\Sbw,\Ab,\bb;\xb\big)
\end{align*}
which completes the proof.
\end{proof}

\begin{Cor}
\label{unb_est_Sw}
At each iteration, the gradient and Hessian of the weighted sketch system of equations $\Sbw(\Ab-\bb)=\bold{0}_{\qbar\tau\times1}$, are unbiased estimators of the gradient and Hessian of the original system $(\Ab-\bb)=\bold{0}_{N\times1}$.
\end{Cor}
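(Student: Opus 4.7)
The plan is to combine Proposition \ref{prop_wght_lr} with Lemma \ref{lem_exp_StS} (from Appendix \ref{cont_rate_app}), so the corollary follows essentially by inspection. First, I would invoke Proposition \ref{prop_wght_lr} to replace the weighted-sketch gradient and Hessian with the corresponding expressions arising from the unweighted sketch $\Sbwt$, namely
\begin{align*}
  \nabla_{\xb}L_\Sb(\Sbw,\Ab,\bb;\xb) &= 2\Ab^\top\big(\Sbwt^\top\Sbwt\big)(\Ab\xb-\bb), \\
  \nabla^2_{\xb}L_\Sb(\Sbw,\Ab,\bb;\xb) &= 2\Ab^\top\big(\Sbwt^\top\Sbwt\big)\Ab.
\end{align*}
This reduces the claim to showing unbiasedness for the $\Sbwt$-based estimators, which is where all the randomness lives.

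Next, I would take expectations over the sampling randomness of Algorithm \ref{alg_1_pseudocode}, pulling the deterministic matrices $\Ab$, $\bb$, $\xb$ outside of $\E[\cdot]$. Applying Lemma \ref{lem_exp_StS}, which gives $\E[\Sbwt^\top\Sbwt]=\Ib_N$, yields
\begin{align*}
  \E\big[\nabla_{\xb}L_\Sb(\Sbw,\Ab,\bb;\xb)\big] &= 2\Ab^\top\,\E[\Sbwt^\top\Sbwt]\,(\Ab\xb-\bb) = 2\Ab^\top(\Ab\xb-\bb) = \nabla_{\xb}L_{ls}(\Ab,\bb;\xb), \\
  \E\big[\nabla^2_{\xb}L_\Sb(\Sbw,\Ab,\bb;\xb)\big] &= 2\Ab^\top\,\E[\Sbwt^\top\Sbwt]\,\Ab = 2\Ab^\top\Ab = \nabla^2_{\xb}L_{ls}(\Ab,\bb;\xb),
\end{align*}
which is exactly the unbiasedness assertion of the corollary, holding at every iteration since the sketching matrix $\Sbw$ is drawn independently (or at least with the same marginal distribution) at each step.

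There is no real obstacle here: the heavy lifting was already done in establishing Proposition \ref{prop_wght_lr} (which shows that weighting does not perturb the first and second derivatives of the modified least squares objective) and Lemma \ref{lem_exp_StS} (the mean-one property of the block sampling operator). The only minor care point is to make explicit that $\Ab$, $\bb$, and the current iterate $\xb^{[s]}$ are deterministic conditional on the randomness of iteration $s$ — so that linearity of expectation lets us push $\E[\cdot]$ past them onto $\Sbwt^\top\Sbwt$ — and to note that the step does not require $\beta=1$, as Lemma \ref{lem_exp_StS} is stated for arbitrary sampling probabilities $\Pit_{\{K\}}$ satisfying the proper rescaling prescribed in Algorithm \ref{alg_1_pseudocode}.
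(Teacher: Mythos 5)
Your proof is correct and follows essentially the same route as the paper: reduce the weighted-sketch gradient and Hessian to their unweighted counterparts via Proposition \ref{prop_wght_lr}, then establish unbiasedness of the $\Sbwt$-based estimators. The only cosmetic difference is that you route both claims through Lemma \ref{lem_exp_StS} ($\E[\Sbwt^\top\Sbwt]=\Ib_N$) by linearity, whereas the paper cites Theorem \ref{SGD_nonunif_thm} for the gradient and recomputes the Hessian expectation directly --- but the underlying cancellation of $\Pit_j$ against $1/(q\Pit_j)$ is identical, so this is a unification rather than a different argument.
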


\begin{proof}
Denote the gradient and Hessian of the weighted sketch at iteration $s$ by $\gh_\wb^{[s]}$ and $\Hh_\wb^{[s]}$ respectively. By Proposition \ref{prop_wght_lr} we know that $\gh_\wb^{[s]}=\gh^{[s]}$, and by Theorem \ref{SGD_nonunif_thm} that $\E\left[\gh^{[s]}\right]=g^{[s]}$. Hence $\E\left[\gh_\wb^{[s]}\right]=g^{[s]}$.

Following the same notation as in the proof of Theorem \ref{SGD_nonunif_thm}, the Hessian $\Hh^{[s]}=\nabla_{\xb}^2L_\Sb\left(\Sbwt_{[s]},\Ab,\bb;\xb^{[s]}\right)$ is 
$$ \Hh^{[s]} = 2\sum_{i\in\I^{[s]}}\frac{1}{q\Pit_i}\Ab_i^\top\Ab_i $$
thus
\begin{align*}
  \E\left[ \Hh^{[s]} \right] &= 2\E\left[\sum_{i\in\I^{[s]}}\frac{1}{q\Pit_i}\Ab_i^\top\Ab_i\right]\\
  &= 2\sum_{i\in\I^{[s]}}\sum_{j=1}^K\Pit_j\frac{1}{q\Pit_j}\Ab_j^\top\Ab_j\\
  &= 2q\cdot\sum_{j=1}^K\frac{1}{q}\Ab_i^\top\Ab_i\\
  &= 2\Ab^\top\Ab
\end{align*}
which is precisely the Hessian of \eqref{x_star_lr}. By Proposition \ref{prop_wght_lr}, it follows that $\E\left[ \Hh_\wb^{[s]} \right]=2\Ab^\top\Ab$, which completes the proof.
\end{proof}

Geometrically, from the point of view of adding vectors, the partial gradients of the partitions sampled will be scaled accordingly to their weights. Therefore, the partial gradients $\gh_i$ with higher weights have a greater influence in the direction of the resulting gradient $\gh$. This was also the fundamental idea behind our sketching and GC techniques, as the partitions sampled multiple times are of greater importance.

Next, we quantify the expected dimension of the weighted sketch $\Sbw\Ab$. This shows the dependence on $\Pit_{\{K\}}$, and further justifies that we attain a higher compression factor $\zeta$ when the block leverage scores are non-uniform. We note that when the data points are i.i.d. and $N$ is significantly larger than $d$, then the leverage scores will be approximately equal; and use of these scores to determine dominant rows will offer little or no benefit.

\begin{Thm}
\label{thm_exp_dim}
The expected reduced dimension of $\Sbw\Ab$ is 
$$ \Big(K-\sum_{i=1}^K(1-\Pit_i)^q\Big)\cdot\tau $$
which is maximal when $\Pit_{\{K\}}$ is uniform.
\end{Thm}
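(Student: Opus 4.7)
My plan is to recognize that the reduced row dimension of $\Sbw\Ab$ is precisely $\qbar\tau$, where $\qbar=|\Sbar|$ is the number of \emph{distinct} block indices appearing in the multiset $\Scal$ produced by Algorithm \ref{alg_1_pseudocode}. Since $\tau$ is deterministic, it suffices to compute $\E[\qbar]$ and then take $\tau\cdot\E[\qbar]$.

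First, I would introduce indicator random variables $X_i\coloneqq\mathbf{1}\{i\in\Sbar\}$ for each $i\in\N_K$, so that $\qbar=\sum_{i=1}^K X_i$. Because each of the $q$ sampling trials in Algorithm \ref{alg_1_pseudocode} is an independent draw from the distribution $\Pit_{\{K\}}$, the probability that block $i$ is \emph{not} drawn in any single trial is $1-\Pit_i$, and by independence the probability that it fails to appear across all $q$ trials is $(1-\Pit_i)^q$. Passing to the complement gives $\E[X_i]=\Pr[i\in\Sbar]=1-(1-\Pit_i)^q$. Applying linearity of expectation then yields
\begin{equation*}
  \E[\qbar] = \sum_{i=1}^K\E[X_i] = K-\sum_{i=1}^K(1-\Pit_i)^q,
\end{equation*}
and multiplying by $\tau$ establishes the claimed expected dimension formula.

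For the second assertion, I would show maximality by minimizing $h(\Pit_{\{K\}})\coloneqq\sum_{i=1}^K(1-\Pit_i)^q$ over the simplex $\big\{\Pit_{\{K\}}:\Pit_i\geqslant 0,\ \sum_{i=1}^K\Pit_i=1\big\}$. The function $f(x)=(1-x)^q$ is convex on $[0,1]$ for $q\geqslant 1$ (its second derivative is $q(q-1)(1-x)^{q-2}\geqslant 0$), so by Jensen's inequality applied to the uniform average over the coordinates,
\begin{equation*}
  \frac{1}{K}\sum_{i=1}^K f(\Pit_i) \geqslant f\!\left(\frac{1}{K}\sum_{i=1}^K\Pit_i\right) = f(1/K),
\end{equation*}
hence $h(\Pit_{\{K\}})\geqslant K(1-1/K)^q$, with equality precisely when all $\Pit_i$ are equal, i.e., $\Pit_i=1/K$. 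Equivalently, $K-h(\Pit_{\{K\}})$ is maximized by the uniform distribution, giving the maximal expected dimension $\big(K-K(1-1/K)^q\big)\tau$.

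I do not anticipate a real obstacle here: the only subtle point is to confirm that the sampling trials are genuinely i.i.d.\ so that the product form $(1-\Pit_i)^q$ is correct (this is immediate from the \textbf{for} loop in Algorithm \ref{alg_1_pseudocode}, which samples with replacement), and to note convexity of $f$ in the regime $q\geqslant 1$. As a sanity check, this also dovetails with Lemma \ref{Lemma_flat}: a uniform $\Pit_{\{K\}}$ flattens the scores of $\Ubt_{\mathrm{exp}}$ and simultaneously maximizes the number of distinct blocks that one expects to sample, so weighting yields the least additional compression precisely when block leverage scores are uniform, as claimed in the statement's remark about diminishing benefits under near-uniform distributions.
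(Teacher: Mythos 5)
Your proof is correct. The first half — writing $\qbar$ as a sum of indicators over the $K$ blocks, using independence of the $q$ with-replacement trials to get $\Pr[i\in\Sbar]=1-(1-\Pit_i)^q$, and applying linearity of expectation — is essentially identical to the paper's computation of $\E[\qbar]$. The second half, however, takes a genuinely different and cleaner route. The paper minimizes $Q(\Pit_{\{K\}})=\sum_{i=1}^K(1-\Pit_i)^q$ subject to $\sum_i\Pit_i=1$ by forming a Lagrangian, solving the stationarity conditions to find the uniform distribution, and then invoking a coordinate-wise second derivative test; strictly speaking that argument only certifies a stationary point and local convexity in each coordinate, so global optimality over the simplex is left somewhat implicit. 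Your Jensen's-inequality argument, exploiting convexity of $x\mapsto(1-x)^q$ on $[0,1]$ for $q\geqslant 1$, gives the global bound $\sum_i(1-\Pit_i)^q\geqslant K(1-1/K)^q$ in one line together with the exact equality characterization $\Pit_i\equiv 1/K$, and it handles the boundary of the simplex (where some $\Pit_i=0$) without any extra care. In short: same first half, and for the second half your approach is shorter and more airtight than the paper's.
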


\begin{proof}
It suffices to determine the expected number of distinct blocks $\Ab_i$ which are sampled after $q$ trials when carrying out Algorithm \ref{alg_1_pseudocode}. The probability of not sampling $\Ab_i$ at a given trial is $\big(1-\Pit_i\big)$, hence not sampling $\Ab_i$ at any trial occurs with probability $\big(1-\Pit_i\big)^q$, since the trials are identical and independent. Thus, the expected number of distinct blocks being sampled is\begin{align*}
  \E[\qbar] &= \sum_{i=1}^K1\cdot\Pr\big[\Ab_i\text{ was sampled at least once sampled}\big]\\
  &= \sum_{i=1}^K\left(1-\cdot\Pr\big[\Ab_i\text{ was not sampled at any trial}\big]\right)\\
  &= \sum_{i=1}^K\left(1-\left(1-\Pit_i\right)^q\right)\\
  &= K-\sum_{i=1}^K\left(1-\Pit_i\right)^q .
\end{align*}
Thus, the expected reduced dimension is $\tau\cdot\E[\qbar]$.

Let $Q\big(\Pit_{\{K\}}\big)\coloneqq\sum_{i=1}^K(1-\Pit_i)^q$, and introduce the Lagrange multiplier $\lambda>0$ to the constraint $R\big(\Pit_{\{K\}}\big)=\left(\sum_{i=1}^K\Pit_i-1\right)$, to get the Lagrange function
\begin{align*}
  \lag\big(\Pit_{\{K\}},\lambda\big) &\coloneqq Q\big(\Pit_{\{K\}}\big)+\lambda\cdot R\big(\Pit_{\{K\}}\big)\\
  &= \sum_{i=1}^K\left(\lambda\cdot\Pit_i+\big(1-\Pit_i\big)^q\right)-\lambda
\end{align*}
for which
\begin{equation}
\label{partial_Pi}
  \frac{\partial \lag\big(\Pit_{\{K\}},\lambda\big)}{\partial\Pit_i} = \lambda-q(1-\Pit_i)^{q-1} = 0
\end{equation}
\begin{equation}
\label{partial_set_0}
   \implies \ \Pit_i=1-(\lambda/q)^{1/(q-1)} \ \ \text{ and } \ \ \lambda=q(1-\Pit_i)^{q-1}
\end{equation}
for all $i\in\N_K$, and
\begin{equation}
\label{partial_lambda}
  \frac{\partial \lag\big(\Pit_{\{K\}},\lambda\big)}{\partial\lambda} = \sum_{i=1}^K\Pit_i-1=0 .
\end{equation}
Note that the uniform distribution $\Uu_{\{K\}}=\big\{\Pit_i=1/K\big\}_{i=1}^K$ is a solution to \eqref{partial_lambda}. We will now verify that $\Uu_{\{K\}}$ satisfies \eqref{partial_Pi}. From \eqref{partial_set_0}, for $\Pit_{\{K\}}\gets\Uu_{\{K\}}$ we have $\lambda=q(1-1/K)^{q-1}>0$, which we substitute into \eqref{partial_Pi}:
\begin{equation*}
  \lambda-q(1-\Pit_i)^{q-1} = q(1-1/K)^{q-1} - q(1-1/K)^{q-1}=0 .
\end{equation*}
Hence, $\Uu_{\{K\}}$ is the solution to both \eqref{partial_Pi} and \eqref{partial_lambda}.

By the second derivative test, since $\partial^2\lag\big(\Pit_{\{K\}}\big)\big/\partial\Pit_i^2 = q(q-1)(1-\Pit_i)^{q-2}$ is positive for $\Pit_i=1/K$, we conclude that $Q\big(\Uu_{\{K\}}\big)\leqslant Q\big(\Pit_{\{K\}}\big)$ for any $\Pit_{\{K\}}\neq\Uu_{\{K\}}$. This implies that $\E[\qbar]$ is maximal when $\Pit_{\{K\}}=\Uu_{\{K\}}$, and so is the expected reduced dimension of $\Sbw\Ab$.
\end{proof}

We further note that $\E[\qbar]$ from the proof of Theorem \ref{thm_exp_dim}, is trivially minimal in the degenerate case where $\Pit_\iota=1$ for a single $\iota\in\N_{K}$, and $\Pit_j=0$ for every $j\in\N_{K}\backslash\{\iota\}$. This occurs in the case where $\Ab_j=\bold{0}_{\tau\times d}$ for each $j$, and $\qbar$ is therefore exactly
$$ K-\sum_{j\neq\iota}(1-\Pit_j)^q = K-\sum_{j\neq\iota}1^q = K-(K-1) = 1 . $$


\section{Table of Notational Conventions and $\ell_2$-subspace embedding Condition}
\label{notation_app}

In this appendix, we collect the main notational conventions used in this paper, and show how to derive the $\ell_2$-subspace embedding condition \eqref{eq_form} from \eqref{subsp_emb_Ax}. This definition of an $\ell_2$-subspace embedding and the corresponding condition, can also be found in \cite{Woo14}.

Let $\Zb=\Ib_d-(\Sb\Ub)^\top(\Sb\Ub)$, and $\yb\in\R^d$ be an arbitrary vector. In the bounds \eqref{subsp_emb_lower} and \eqref{subsp_emb_upper}, note that $\epsilon\|\yb\|_2^2>0$ and $\yb^\top\Zb\yb = -\big(\yb^\top(-\Zb)\yb\big)$, so we can work with the absolute value $\big|\yb^\top\Zb\yb\big|$ in order to take care of both inequalities. That is, together both imply that $0\leqslant\big|\ybt^\top\Zb\ybt\big|\leqslant\epsilon\|\ybt\|^2$, where for now we let $\ybt$ be a unit vector. Then, this requirement is satisfied for all vectors $\yb$ by scaling, which implies $\|\Zb\|\leqslant \epsilon$.

\begin{table}
\centering
\begin{tabular}{ |p{2.65cm}||p{5.2cm}| }
\hline
\multicolumn{2}{|c|}{\textbf{LIST OF NOTATION}} \\
\hline
\hline
  $m$ & number of computational nodes \\ \hline
  $q$ & number of sampled blocks/responsive nodes \\ \hline
  $\N_n$ & set of integers $\{1,2,\ldots,n\}$ \\ \hline
  $[s]$ & used as a superscript or subscript, to denote the iteration number `$s$' of the iterative process\\ \hline
  $X_{\{n\}}$ & set $\{X_1,\ldots,X_n\}$ \\ \hline
  $\Ab$ & dataset matrix of size $N\times d$ \\ \hline
  $\Ab_{(i)}/\Ab^{(i)}$ & $i^{th}$ row/column of $\Ab$ \\ \hline
  $\bb$ & corresponding $d\times 1$ label vector to $\Ab$ \\ \hline
  $\Sb$ & arbitrary sketching matrix of size $r\times N$ \\ \hline
  $\Sbwt$ & sketching matrix of Algorithm \ref{alg_1_pseudocode} \\ \hline
  $\Omb$ & row sampling matrix of size $r\times N$ \\ \hline
  $\Ombwt$ & block sampling matrix for $\Sbwt$ \\ \hline
  $\D$ & dataset corresponding to $\Ab,\bb$ \\ \hline
  $\D_i$ & sub-dataset corresponding to $\Ab_i,\bb_i$ \\ \hline
  $N$ & number of data points in $\D$ \\ \hline
  $K$ & number of partitions of the data matrix $\Ab$ \\ \hline
  $\tau$ & size of the data partitions, $\tau=N/K$ \\ \hline
  {\footnotesize$\Abt = \Big[\At_1^\top \ \cdots \ \At_K^\top\Big]^\top$} & encoding of blocks $\Ab_{\{K\}}$ \\ \hline
  {\footnotesize$\bbt = \Big[\bt_1^\top \ \cdots \ \bt_K^\top\Big]^\top$} & encoding of blocks $\bb_{\{K\}}$ \\ \hline
  $L_{ls}(\Ab,\bb;\xb)$ & least squares objective function $\|\Ab\xb-\bb\|_2^2$ \\ \hline
  $L_\Sb(S,\Ab,\bb;\xb)$ & modified least squares $\|S(\Ab\xb-\bb)\|_2^2$, for $S$ a specified encoding or sketching matrix \\ \hline
  $\xb^\star$ & optimal solution to \eqref{x_star_lr} \\ \hline
  $\xbh$ & approximated solution to \eqref{x_star_lr} through \eqref{mod_OLS} \\ \hline
  $\xbt$ & optimal solution to \eqref{nonit_OLS} \\ \hline
  $\I^{[s]}$ & index set of the $q$ fastest servers at iteration $s$ \\ \hline
  $\Scal^{[s]}$ & index multiset of sampled blocks at iteration $s$ \\ \hline
  $g^{[s]}$ & gradient $g^{[s]} = 2\Ab^\top(\Ab\xb^{[s]}-\bb)$ \eqref{gr_ls} \\ \hline 
  $g_i^{[s]}$ & partial gradient corresponding to $\D_i$ \eqref{part_gr_ls} \\ \hline
  $\gh^{[s]}$ & approximated gradient at iteration $s$ \eqref{appr_gr} \\ \hline
  $\gh_i^{[s]}$ & appr. partial gradient of $\D_i$ at iteration $s$ \eqref{gt_unif_id} \\ \hline
  $\xi_s$ & SD step-size at iteration $s$ \\ \hline
  $\epsilon$ & $\ell_2$-subspace embedding error \\ \hline
  $\delta$ & $\ell_2$-subspace embedding failure probability \\ \hline
  $\Pi_{\{K\}}$ & block leverage score sampling distribution \\ \hline
  ${\tilde{\Pi}_{\{K\}}}^{{\white |}}$ & approximate distribution to $\Pi_{\{K\}}$ \\ \hline
  ${\bar{\Pi}_{\{K\}}}^{{\white |}}$ & induced sampling distribution through expansion networks \\ \hline
  $\beta_{\Pib}$ & misestimation factor of $\Pib^{{\white |}}$ to $\Pi$ \\ \hline
  $d_{\Pi,Q}$ & distortion metric between $\Pi_{\{K\}}$ and $Q_{\{K\}}$ \\ \hline
  $\Delta_{\Pi,\Pib}$ & normalized distortion metric \eqref{appr_opt_sol} \\ \hline  
  $F(t)$ & mother runtime distribution of the computational network \\ \hline
  $\Ft(t)$ & runtime distribution of computational tasks \\ \hline
  $\phi(t)$ & survival function $\phi(t)\coloneqq1-\Ft(t)$ \\ \hline
  $T$ & ending time, at which the central server stops receiving computations \\ \hline
  $q(T)$ & number of received computations at time $T$, i.e., $q(T)\coloneqq\floor{\Ft(T)\cdot m}$ \\ \hline
  $\rh_{\{K\}}$ & approximate replication numbers through \eqref{appr_distr} \\ \hline
  $r^{\star}_{\{K\}}$ & optimal replication numbers to \eqref{int_program} \\ \hline
  $\rt_{\{K\}}$ & approximate replication numbers \\ \hline
  $r_{\{K\}}$ & replication numbers produced by Algorithm \ref{alg_replicas_mod_R} \\ \hline
  $R$ & $R=\sum_{i=1}^K r_i$ \\ \hline
  $\Ebwt$ & expansion matrix \eqref{expansion_matrix} \\ \hline
  $\Gb$ & encoding generator matrix \eqref{scalar_enc_matrix} \\ \hline
\end{tabular}
\caption{Common notation list used throughout the paper.}
\label{notation_table}
\end{table}


\bibliographystyle{IEEEtran}
\bibliography{refs_all}
\balance


\end{document}